\documentclass[conference]{IEEEtran}

\usepackage{amsmath,amsthm,amssymb}
\usepackage{algorithm}
\usepackage{booktabs}
\usepackage{array}
\usepackage{multirow}
\usepackage{threeparttable}
\usepackage{xcolor}
\usepackage{tcolorbox}
\usepackage{tikz}
\usetikzlibrary{shapes,arrows,positioning,fit,calc}
\usepackage{hyperref}
\usepackage{cleveref}
\usepackage{needspace}
\usepackage{etoolbox}
\usepackage{balance}    

\theoremstyle{definition}
\newtheorem{theorem}{Theorem}[section]
\newtheorem{lemma}[theorem]{Lemma}
\newtheorem{proposition}[theorem]{Proposition}
\newtheorem{corollary}[theorem]{Corollary}
\newtheorem{definition}[theorem]{Definition}
\newtheorem{observation}[theorem]{Observation}
\newtheorem{conjecture}[theorem]{Conjecture}
\newtheorem{remark}[theorem]{Remark}

\crefname{theorem}{Theorem}{Theorems}
\crefname{lemma}{Lemma}{Lemmas}
\crefname{proposition}{Proposition}{Propositions}
\crefname{corollary}{Corollary}{Corollaries}
\crefname{observation}{Observation}{Observations}
\crefname{conjecture}{Conjecture}{Conjectures}
\crefname{remark}{Remark}{Remarks}
\crefname{algorithm}{Algorithm}{Algorithms}
\crefname{table}{Table}{Tables}
\crefname{figure}{Figure}{Figures}

\newenvironment{proof-sketch}{\noindent\textit{Proof Sketch.}}{}
\newenvironment{note}{\noindent\textbf{Note:}}{}

\preto\section{\needspace{6\baselineskip}}
\preto\subsection{\needspace{4\baselineskip}}
\preto\subsubsection{\needspace{4\baselineskip}}

\begin{document}
\bstctlcite{IEEEexample:BSTcontrol}

\title{Safety-Certified CRT Sparse FFT: \texorpdfstring{$\Omega(k^2)$}{Omega(k^2)} Lower Bound and \texorpdfstring{$O(N \log N)$}{O(N log N)} Worst-Case}

\author{
  Aaron R. Flouro and Shawn P. Chadwick, PhD. \\
  SparseTech, Iowa, USA \\
  research@sparse-tech.com
}

\maketitle

\begin{abstract}\mdseries
Computing Fourier transforms of $k$-sparse signals, where only $k$ of $N$ frequencies are non-zero, is fundamental in compressed sensing, radar, and medical imaging. While the Fast Fourier Transform (FFT) evaluates all $N$ frequencies in $O(N \log N)$ time, sufficiently sparse signals should admit sub-linear complexity in $N$. Existing sparse FFT algorithms using Chinese Remainder Theorem (CRT) reconstruction rely on moduli selection choices whose worst-case implications have not been fully characterized.

This paper makes two contributions. First, we establish an $\Omega(k^2)$ adversarial lower bound on candidate growth for CRT-based sparse FFT when moduli are not pairwise coprime (specifically when $m_3 \mid m_1 m_2$), implying an $O(k^2 N)$ worst-case validation cost that can exceed dense FFT time. This vulnerability is practically relevant, since moduli must often divide $N$ to avoid spectral leakage, in which case non-pairwise-coprime configurations can be unavoidable. Pairwise coprime moduli avoid the proven attack; whether analogous constructions exist for such moduli remains an open question. Second, we present a robustness framework that wraps a 3-view CRT sparse front end with lightweight certificates (bucket occupancy, candidate count) and an adaptive dense FFT fallback. For signals passing the certificates, the sparse path achieves $O(\sqrt{N} \log N + kN)$ complexity; when certificates detect collision risk, the algorithm reverts to $O(N \log N)$ dense FFT, guaranteeing worst-case performance matching the classical bound.
\end{abstract}

\begin{IEEEkeywords}
Sparse FFT, Hybrid Algorithm, Safety Certificates, Chinese Remainder Theorem, Coprime Decimation, Adaptive Fallback, Worst-Case Analysis, Average-Case Complexity
\end{IEEEkeywords}

\textbf{Notation:} All logarithms are base-2 unless stated otherwise.

\section{Introduction and Problem Statement}
\label{sec:introduction}

Computing Fourier transforms of $k$-sparse signals, where only $k$ of $N$ frequencies are non-zero, is fundamental in compressed sensing~\cite{donoho2006compressed}, radar~\cite{heckel2013identification}, and medical imaging~\cite{mosca2008digital}. While the Fast Fourier Transform (FFT)~\cite{cooley1965algorithm} evaluates all $N$ frequencies in $O(N \log N)$ time, sufficiently sparse signals should admit sub-linear complexity in $N$. Sparse Fast Fourier Transform (sFFT) methods~\cite{hassanieh2012nearly,hassanieh2012simple,iwen2013improved,plonka2016acha,bittens2017acha} exploit this by frequency-binning (aliasing) to reduce dimensionality. Probabilistic variants such as Hassanieh, Indyk, Katabi, and Price~\cite{hassanieh2012nearly,hassanieh2012simple} use randomized subsampling, phase shifts, and permutation hashes to isolate tones with high probability. Deterministic variants, including Iwen~\cite{iwen2013improved}, Plonka and Wannenwetsch~\cite{plonka2015sampta,plonka2016acha}, and Bittens, Zhang, and Iwen~\cite{bittens2017acha}, replace randomization with structured congruences and CRT-style reconstructions using coprime modular decimations~\cite{garner1959residue}.

The signal processing literature has developed an extensive parallel program on robust CRT reconstruction under noisy remainders, particularly for non-pairwise-coprime moduli. Li, Liang, and Xia~\cite{li2009robust} establish remainder-error bounds for frequency estimation, and Wang and Xia~\cite{wang2010closedform} develop a closed-form robust CRT algorithm for moduli sharing a common factor, with remainder errors up to one-quarter of the GCD admitting bounded-error reconstruction; multidimensional extensions~\cite{xiao2020mdcrt,xiao2023robust} establish MD-CRT reconstruction ranges and robust recovery guarantees under matrix-valued moduli. These results provide constructive reconstruction in the same structural regime where our $\Omega(k^2)$ adversarial lower bound applies; the present paper complements this line by characterizing the worst-case complexity of deterministic 3-view CRT gating, identifying an algorithmic vulnerability that motivates the safety certificate framework of \Cref{sec:certificates}.

\subsection{Technical Gaps}

These prior methods rely on bin-isolation and static separation or structure assumptions. Multi-dimensional extensions of sFFT~\cite{viswanathan2015sampta,iwen2013improved,plonka2016acha} extend CRT-style ideas to higher dimensions but do not include runtime verification with worst-case complexity certificates. The sFFT literature has not yet integrated explicit runtime certificates for collision detection together with a deterministic fallback mechanism that yields a formal $O(N \log N)$ worst-case bound when assumptions fail. Existing CRT-based schemes depend on moduli selection choices whose worst-case implications have not been fully characterized. When moduli must divide $N$ to avoid spectral leakage (e.g., $N = 2^{20}$ has no three pairwise coprime divisors near $\sqrt{N}$), non-pairwise-coprime configurations are sometimes unavoidable, leaving open the worst-case behavior under such moduli. \Cref{tab:complexity-comparison} summarizes the complexity landscape and positions the present work.

\begin{table}[htbp]
\centering
\caption{Complexity comparison of sparse FFT algorithms for $k$-sparse signals of length $N$}
\label{tab:complexity-comparison}
\begin{threeparttable}
\begin{tabular}{@{}lcc@{}}
\toprule
\textbf{Algorithm} & \textbf{Average-Case} & \textbf{Worst-Case} \\
\midrule
Dense FFT baseline~\cite{cooley1965algorithm} & $O(N \log N)$ & $O(N \log N)$ \\
Hassanieh et al.~\cite{hassanieh2012nearly} & $O(k \log N)$ & probabilistic$^{\dagger}$ \\
Iwen~\cite{iwen2013improved} & $O(k \log N)$ & no guarantee \\
Plonka-Wannenwetsch~\cite{plonka2016acha} & $O(k^2 \log^4 N)$$^{\ddagger}$ & $O(k^2 \log^4 N)$$^{\ddagger}$ \\
\textbf{This work (Hybrid)} & $O(\sqrt{N} \log N + kN)$ & $O(N \log N)$ \\
\bottomrule
\end{tabular}
\begin{tablenotes}\footnotesize
\item[$\dagger$] High-probability guarantees (failure $< 1/\text{poly}(N)$); no deterministic worst-case bound
\item[$\ddagger$] Deterministic via rank-structured subsampling; complexity depends on signal structure and rank assumptions~\cite{plonka2016acha}
\end{tablenotes}
\end{threeparttable}
\end{table}

\needspace{6\baselineskip}
\subsection{Contributions}

This paper addresses the worst-case gap through fixed coprime modular congruences and certificate-governed control:
\begin{enumerate}
\item \textbf{Adversarial Lower Bound}: an $\Omega(k^2)$ bound on candidate growth for CRT-based sparse FFT~\cite{garner1959residue,knuth1997art} when moduli are not pairwise coprime (specifically when $m_3 \mid m_1 m_2$), implying an $O(k^2 N)$ worst-case validation cost that can exceed dense FFT time. This identifies moduli selection as a critical design parameter.
\item Hybrid Algorithm: an algorithmically deterministic sparse FFT given fixed certificate thresholds, achieving $O(\sqrt{N} \log N + kN)$ average-case complexity with $O(N \log N)$ worst-case guarantee via adaptive dense FFT fallback.
\item \textbf{Robustness Framework}: deterministic collision detection certificates (bucket occupancy, candidate count) that trigger fallback when the sparse path encounters collision risk.
\end{enumerate}

\subsection{Problem Statement}

Given time-domain samples $x[n]$, $n = 0, \ldots, N-1$, and sparsity $k \ll \sqrt{N}$, recover the support and amplitudes of the $k$-sparse spectrum $X[f]$, $f \in \{0, \ldots, N-1\}$, under on-grid and noiseless (or high-SNR) assumptions (formalized in \Cref{sec:preliminaries}). Computing all $N$ coefficients is unnecessary when only $k \ll N$ are non-zero. The goal is to achieve sublinear FFT-based identification for $k$-sparse signals while ensuring that total computational cost remains bounded by $O(N \log N)$ in all cases, eliminating the instability of candidate explosion in deterministic sparse FFT methods.

The analysis assumes the following regime:
\begin{itemize}
\item Sparsity $k$ is sufficiently smaller than $\sqrt{N}$; for $k$ approaching $\sqrt{N}$, dense FFT is typically preferable under the validation model considered here.
\item The algorithm uses three decimation moduli $m_1, m_2, m_3 \approx \sqrt{N}$ with $m_1 m_2 \geq N$ for Chinese Remainder Theorem (CRT) reconstruction.
\item Goertzel validation operates in the noiseless or high-SNR regime where thresholding reliably distinguishes true frequencies.
\item Off-grid frequency analysis with spectral leakage is outside the scope of this paper.
\end{itemize}

The hybrid algorithm is deterministic given fixed certificate thresholds: CRT reconstruction, safety certificates, Goertzel validation, and dense FFT fallback involve no randomization. The correctness and worst-case complexity bounds do not rely on uniformity assumptions about residue distributions. Safety certificates detect collision risk, enabling adaptive fallback to $O(N \log N)$ worst-case complexity using standard FFT algorithms (e.g., Cooley--Tukey). The average-case analysis (\Cref{thm:gating-expected-case}) assumes approximate uniformity of residue distributions for non-adversarial signals to derive expected candidate counts.

\subsection{Illustrative Example: Algorithm in Action}

To illustrate the hybrid algorithm's operation, consider a toy signal with $N=16$ time-domain samples and $k=2$ non-zero frequencies at $f=3$ and $f=11$.

\textbf{Setup:} Using pairwise coprime moduli $m_1 = 4$, $m_2 = 3$, and $m_3 = 5$, we consider three residue views of the signal. Since $m_1 m_2 = 12 < N$, CRT reconstruction yields congruence classes that produce multiple candidates in $[0, N)$, which must be resolved by additional views and final validation.
\begin{align*}
f=3:  \quad & (r_1, r_2, r_3) = (3, 0, 3) \\
f=11: \quad & (r_1, r_2, r_3) = (3, 2, 1)
\end{align*}

\textbf{Step 1 -- Residue Views:} The algorithm forms three residue views corresponding to moduli $m_1 = 4$, $m_2 = 3$, and $m_3 = 5$:
\begin{itemize}
\item View 1 ($m_1=4$): a peak appears at residue $r_1=3$ (both $f=3$ and $f=11$ map to this bin, creating a collision).
\item View 2 ($m_2=3$): peaks appear at residues $r_2=0$ (from $f=3$) and $r_2=2$ (from $f=11$).
\item View 3 ($m_3 = 5$): peaks appear at residues $r_3 = 3$ (from $f = 3$) and $r_3 = 1$ (from $f = 11$).
\end{itemize}

\textbf{Step 2 -- CRT Reconstruction:} Pairing residues from Views 1 and 2 via Garner's formula yields congruence classes and corresponding candidates in $[0, N)$:
\begin{align*}
(r_1=3, r_2=0) & \rightarrow f \equiv 3 \pmod{12}:\ \{3, 15\} \\
(r_1=3, r_2=2) & \rightarrow f \equiv 11 \pmod{12}:\ \{11\}
\end{align*}

Initial candidate set: $\{3, 11, 15\}$, including the spurious candidate $f=15$.

\textbf{Step 3 -- Gating with View 3:} The third view filters candidates by checking consistency:
\begin{align*}
f=3:  \quad & 3 \bmod 5 = 3 \quad \checkmark\ \text{(residue 3 is present in View 3, keep)} \\
f=11: \quad & 11 \bmod 5 = 1 \quad \checkmark\ \text{(residue 1 is present in View 3, keep)} \\
f=15: \quad & 15 \bmod 5 = 0 \quad \times\ \text{(not in } R_3\text{, discard)}
\end{align*}

Gated candidate set: $\{3, 11\}$ (residue 0 is not present in View 3, discard)

\textbf{Step 4 -- Goertzel Validation:} Compute the exact DFT magnitudes at the surviving candidate frequencies using the Goertzel algorithm~\cite{goertzel1958algorithm} applied to the full time-domain signal:
\begin{align*}
|X[3]|  & = 5.0 \quad (\text{threshold } \tau = 0.1) \quad \checkmark \text{ true frequency} \\
|X[11]| & = 3.0 \quad (\text{threshold } \tau = 0.1) \quad \checkmark \text{ true frequency}
\end{align*}

\textbf{Step 5 -- Safety Certificate Check.} The algorithm verifies the safety conditions. Bucket occupancy: View 1 contains a residue bin with occupancy 2 at $r_1 = 3$, which remains below the threshold $\tau_{\text{bucket}} = 3$ (certificate passes). Candidate count: after gating, the candidate count is 2, which is below the threshold $\tau_{\text{candidates}} = 3k = 6$ (certificate passes).

\textbf{Result:} The certificates pass, so the sparse path succeeds. The output is $\{(f = 3, A = 5), (f = 11, A = 3)\}$. In this toy example, the sparse path has illustrative cost $O(\sqrt{16} \log 16 + 2 \cdot 16)$, compared with $O(16 \log 16)$ for dense FFT.

\textbf{Contrast -- Adversarial Case:} If the signal were adversarially constructed as in \Cref{thm:adversarial-lower-bound}, gating could produce $\Omega(k^2)$ candidates. For $k = 2$ this yields 4 candidates, which does not exceed the threshold $3k = 6$, but for larger $k$ the quadratic growth eventually triggers fallback. This example illustrates how non-unique CRT reconstruction can generate spurious candidates when $m_1 m_2 < N$, and how gating, validation, and safety certificates interact to resolve or control that ambiguity.

\subsection{The Core Challenge: Aliasing Collisions in Subsampling}
\label{sec:technical-overview}

Decimation downsamples $x[n]$ by a stride $d$, retaining every $d$-th sample~\cite{oppenheim2010discrete}. For $k$-sparse signals, choosing $m \approx \sqrt{N}$ reduces the per-view FFT stage complexity from $O(N \log N)$ to $O(\sqrt{N} \log N)$ (see \Cref{cor:optimal-decimation}).

\textbf{The Problem:} Multiple true frequencies can map to the same residue bin (collision), preventing unique identification from a single decimated view. For example, with $N=16$ and stride $d=4$ (modulus $m=4$), frequencies $f=3$ and $f=11$ both map to $r = 3 \bmod 4 = 3$.

\textbf{The Chinese Remainder Theorem (CRT) Approach:} By using multiple coprime moduli $m_1, m_2, \ldots$, each frequency produces a residue tuple $(r_1, r_2, \ldots)$~\cite{garner1959residue,knuth1997art}, which can be combined with CRT to reconstruct candidate frequencies. For example, using $m_1=4$ and $m_2=3$:
\begin{align*}
f=3:  \quad & (r_1, r_2) = (3, 0) \\
f=11: \quad & (r_1, r_2) = (3, 2) \quad \text{(distinguished by } r_2\text{)}
\end{align*}

\subsection{Limitations of the Two-View Approach}

However, as shown next, this reconstruction can produce multiple candidates in the presence of collisions.

A direct strategy pairs residues from two decimated views (moduli $m_1, m_2$) and applies Garner's CRT formula to reconstruct candidate frequencies:
\[
f = r_1 + m_1 \cdot [(r_2 - r_1) \cdot m_1^{-1} \bmod m_2]
\]
This produces $|R_1| \times |R_2|$ candidate pairs, where $R_i$ is the set of non-zero residues in view $i$. For $k$-sparse signals with collision multiplicity $c$ (average number of true frequencies per occupied residue bin), each residue in one view may correspond to up to $c$ frequencies, so pairwise reconstruction can produce up to $O(k^2)$ candidate pairs.

\textbf{Problem 1 -- False Positives:} Most candidate pairs correspond to spurious frequencies rather than true components of the signal, reflecting the non-uniqueness of CRT reconstruction under collisions. Only $k$ correspond to true frequencies. These candidates must be validated within the full time-domain signal, costing $O(k^2 N)$ under single-bin evaluation (e.g., Goertzel~\cite{goertzel1958algorithm}). For $c > 1$, this exceeds dense FFT's $O(N \log N)$ when $k^2 > \log N$.

\textbf{Problem 2 -- No Worst-Case Bound:} Prior deterministic sparse FFT work operated under uniformity assumptions on residue distributions, leaving worst-case behavior as an open question. Can adversarial signals force high collision rates?

\subsection{Third-View Gating: Typical-Case Candidate Reduction}

Adding a third decimated view (modulus $m_3$, coprime to $m_1, m_2$) introduces an additional consistency constraint for candidate frequencies. Given a candidate reconstructed from two views, its residue modulo $m_3$ can be compared against the detected residues in the third view. A candidate is retained if its residue modulo $m_3$ matches one of the detected residues in the third view. This additional constraint reduces the number of candidate frequencies under typical residue distributions. Under a heuristic assumption that residues of false candidates are approximately uniformly distributed modulo $m_3$, each candidate survives the third-view check with probability proportional to $1/m_3$.

\needspace{4\baselineskip}
\textbf{Typical-Case Analysis:} Assume false candidates produce $r_3'$ values approximately uniformly distributed over $[0, m_3)$. Then the survival probability is $|R_3| / m_3 \approx kc / \sqrt{N}$, and the expected post-gating candidate count becomes:
\[
\mathbb{E}[\text{candidates}] = k + (k^2c^2 - k) \cdot \frac{kc}{\sqrt{N}} = k + \Theta\!\left(\frac{k^3 c^3}{\sqrt{N}}\right)
\]
For $k \ll N^{1/3}$ and $c = O(1)$, this is $O(k)$, enabling $O(\sqrt{N} \log N + kN)$ total complexity.

\paragraph{Adversarial Lower Bound on Candidate Growth} Whether this typical-case bound is also worst-case is addressed next: \Cref{thm:adversarial-lower-bound} shows that for specific moduli configurations (e.g., $m_3 \mid m_1 m_2$), there exist $k$-sparse signals that generate $\Omega(k^2)$ candidate pairs even after third-view gating.

\needspace{6\baselineskip}
\subsection{Our Contribution: Adversarial Lower Bound}

\Cref{thm:adversarial-lower-bound} shows that for specific moduli configurations (e.g., $m_3 \mid m_1 m_2$), there exist $k$-sparse signals that generate $\Omega(k^2)$ candidate pairs even after third-view gating. The construction exploits algebraic structure: Garner's CRT reconstruction becomes affine modulo $m_3$, and adversarial placement of $k$ frequencies yields $r_3' \in R_3$ across all $k^2$ residue pairs $(r_1,r_2)$.

\textbf{Implication.} Under such configurations, third-view gating does not reduce candidate counts below $\Omega(k^2)$ in the worst case, resulting in $O(k^2 N)$ validation cost under single-bin evaluation (e.g., Goertzel~\cite{goertzel1958algorithm}), which can exceed the $O(N \log N)$ complexity of dense FFT~\cite{cooley1965algorithm} for sufficiently large $k$. This behavior arises from structured collisions that preserve consistency across views.

\subsection{The Hybrid Solution: Adaptive Fallback with Safety Certificates}

The hybrid algorithm addresses this through:
\begin{enumerate}
\item Sparse path: perform 3-view reconstruction and Goertzel validation.
\item Safety certificates: track runtime indicators of adversarial structure within the gated candidate set:
\begin{itemize}
\item Validate consistency of reconstructed frequencies across views.
\item Candidate count: trigger fallback if post-gating candidate count exceeds a threshold (e.g., $3k$).
\item For non-adversarial signals satisfying the certificate conditions, validate consistency of reconstructed frequencies across views.
\end{itemize}
\item Adaptive fallback: for non-adversarial signals satisfying the certificate conditions the sparse path executes; otherwise the algorithm reverts to dense FFT ($O(N \log N)$).
\end{enumerate}

\textbf{Ensuring $O(N \log N)$ worst-case complexity:}
\begin{itemize}
\item Average-case: for non-adversarial signals satisfying the certificate conditions the sparse path succeeds with $O(\sqrt{N} \log N + kN)$ complexity, while the fallback ensures $O(N \log N)$ worst-case complexity~\cite{cooley1965algorithm}.
\item Worst-case: for adversarial signals that fail the certificates, fallback ensures $O(N \log N)$ complexity, matching dense FFT.
\end{itemize}

Within the analyzed CRT-based framework, the hybrid approach provides a practical complexity tradeoff: exploiting sparsity when certificates pass, while maintaining $O(N \log N)$ worst-case bounds via deterministic fallback.

\subsection{Paper Organization}
\label{sec:paper-organization}

Section~\ref{sec:preliminaries} covers CRT preliminaries and formal notation. Section~\ref{sec:algorithm} presents the hybrid algorithm. Section~\ref{sec:lower-bound} proves the adversarial lower bound. Section~\ref{sec:main-theorem} analyzes complexity. Section~\ref{sec:certificates} details safety certificates. Section~\ref{sec:conclusion} concludes.

\section{Preliminaries and Notation}
\label{sec:preliminaries}

\subsection{Notation Table}

The following table summarizes key symbols used throughout this paper:

\begin{table}[!t]
\centering
\caption{Notation and Symbols}
\label{tab:notation}
\small
\setlength{\tabcolsep}{4pt}
\renewcommand{\arraystretch}{1.15}
\begin{tabular}{@{}l p{0.76\columnwidth}@{}}
\toprule
\textbf{Symbol} & \textbf{Description} \\
\midrule
$N$ & DFT size (signal length) \\
$k$ & Sparsity level ($|\text{supp}(X)| \leq k$) \\
$d_i$ & Decimation factor for view $i$ (e.g., as we have three views in this paper, $i \in \{1,2,3\}$) \\
$m_i$ & Modulus for view $i$ ($m_i = N/d_i$) \\
$\rho_i$ & Residue in view $i$, $\rho_i = f \bmod m_i$ \\
$R_i$ & Set of detected residues in view $i$ \\
$P$ & Product of moduli ($P = m_1 m_2$); CRT reconstruction range, distinct from collision multiplicity $M$ \\
$X[f]$ & DFT coefficient at frequency $f$ \\
$c$ & Coverage multiplier for top-$kc$ residue selection (Algorithm~\ref{alg:hybrid}); distinct from the informal collision multiplicity $c$ used in Section~\ref{sec:introduction} \\
$\tau$ & Bucket occupancy threshold (safety certificate, default $3$); threshold symbol only (not collision multiplicity) \\
\bottomrule
\end{tabular}
\end{table}

\subsection{Signal Model}

\begin{definition}[Discrete Fourier Transform~\cite{oppenheim2010discrete}]
\label{def:dft}
\begin{equation}
X[f] = \sum_{n=0}^{N-1} x[n] \cdot e^{-j2\pi fn/N}, \quad f = 0, 1, \ldots, N-1
\end{equation}
\end{definition}

\begin{definition}[$k$-Sparse Signal]
\label{def:k-sparse}
A signal $X$ is $k$-sparse if $|\text{supp}(X)| \leq k$, where $\text{supp}(X) = \{f : X[f] \neq 0\}$.
\end{definition}

\begin{definition}[On-Grid Frequencies]
\label{def:on-grid}
\looseness=-1
Frequencies $f \in \{0, 1, \ldots, N-1\}$ are on-grid. This paper analyzes on-grid frequencies; off-grid frequency analysis with spectral leakage is left to future work.
\end{definition}

\needspace{5\baselineskip}
\subsection{Decimation and Aliasing} Let $d$ be a positive integer such that $d \mid N$. Decimating the signal $x[n]$ by factor $d$ produces the sequence.

\begin{definition}[Signal Decimation]
\label{def:decimation}
Let $d$ be a positive integer such that $d \mid N$. Decimating signal $x[n]$ by factor $d$ produces the sequence
\begin{equation}
x_d[j] = x[j \cdot d], \quad j = 0, 1, \ldots, (N/d) - 1
\end{equation}
The resulting decimated signal has length $N/d$.
\end{definition}

\begin{theorem}[Decimation Aliasing]
\label{thm:decimation-aliasing}
Let $d \mid N$ and define $m = N/d \in \mathbb{N}$. Let $X[f]$ have energy at frequency bin $f$ in the $N$-point DFT; then the $(N/d)$-point DFT of the decimated signal $X_d$ has energy at residue bin:
\begin{equation}
r = f \bmod \left(\frac{N}{d}\right)
\end{equation}
\end{theorem}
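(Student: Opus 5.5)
We will compute the $m$-point DFT of the decimated sequence directly from the inverse $N$-point DFT and then group terms according to their residue modulo $m$. The first step writes the signal through the inverse transform of \Cref{def:dft},
\[
x[n] = \frac{1}{N}\sum_{f=0}^{N-1} X[f]\, e^{j2\pi f n/N},
\]
and substitutes $n = jd$ as in \Cref{def:decimation}. Since $N = dm$, we have $e^{j2\pi f jd/N} = e^{j2\pi f j/m}$. This gives
\[
x_d[j] = \frac{1}{N}\sum_{f=0}^{N-1} X[f]\, e^{j2\pi f j/m}, \qquad j = 0,\ldots,m-1.
\]

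The second step takes the $m$-point DFT $X_d[r] = \sum_{j=0}^{m-1} x_d[j] e^{-j2\pi r j/m}$ and exchanges the two finite sums. The inner sum is $\sum_{j=0}^{m-1} e^{j2\pi (f-r) j/m}$. By the geometric-series (orthogonality of roots of unity) identity, this sum equals $m$ when $f \equiv r \pmod m$ and $0$ otherwise. We therefore obtain the standard aliasing formula
\begin{equation*}
X_d[r] = \frac{m}{N}\sum_{\substack{0\le f<N\\ f\equiv r \ (\mathrm{mod}\ m)}} X[f] = \frac{1}{d}\sum_{\ell=0}^{d-1} X[r+\ell m].
\end{equation*}

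The conclusion then follows. Each frequency $f$ contributes, with weight $1/d$, exactly to the bin $r = f \bmod m = f \bmod (N/d)$, and to no other bin. Consequently, if $X$ has energy at $f$, that energy appears only in residue bin $f \bmod (N/d)$ of $X_d$. For the $k$-sparse setting of \Cref{def:k-sparse}, bin $r$ is nonzero only if some $f \in \operatorname{supp}(X)$ satisfies $f \equiv r \pmod m$.

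The main obstacle is interpretive rather than computational. The phrase ``has energy at residue bin $r$'' cannot mean that $X_d[r] \neq 0$ unconditionally, because several aliased frequencies sharing the same residue can cancel. I would handle this by stating precisely what the identity gives. First, $X_d[r]$ is the $1/d$-scaled sum over the alias class of $r$, so energy at $f$ can only appear in bin $f \bmod m$. Second, if $f$ is the unique support element in its residue class, then $X_d[r] = X[f]/d \neq 0$. The remaining work is routine: check that $r \in \{0,\ldots,m-1\}$ is well defined, and that the index map $\ell \mapsto r + \ell m$ enumerates the alias class bijectively for $\ell = 0,\ldots,d-1$.
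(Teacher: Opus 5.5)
Your proof is correct and takes the same route as the paper: both write out the $m$-point DFT of $x_d[j]=x[jd]$ and group frequencies by residue modulo $m=N/d$. The paper only asserts that this sum picks up the frequencies $f \equiv r \pmod{N/d}$, whereas you justify that step by substituting the inverse DFT and using orthogonality of $m$-th roots of unity, which gives the explicit identity $X_d[r]=\frac{1}{d}\sum_{\ell=0}^{d-1}X[r+\ell m]$; your caveat that frequencies in one residue class can cancel, so ``has energy at'' should mean the contribution of $f$ lands only in bin $f \bmod m$ (and $X_d[f \bmod m]=X[f]/d\neq 0$ when $f$ is alone in its class), is a correct sharpening that the paper leaves implicit.
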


\begin{proof}
The DFT of decimated signal $x_d[j] = x[jd]$ is:
\begin{equation}
X_d[r] = \sum_{j=0}^{N/d-1} x[jd] e^{-j2\pi rj/(N/d)} = \sum_{j=0}^{N/d-1} x[jd] e^{-j2\pi rjd/N}
\end{equation}
This sums contributions from frequencies $f \equiv r \pmod{N/d}$ in the original spectrum, i.e., $f = r + q(N/d)$ for integer $q$. Thus decimation aliases all frequencies separated by $N/d$ onto the same bin $r$ in the reduced spectrum.
\end{proof}

\begin{remark}
Decimation induces a partition of the frequency domain into equivalence classes modulo $m = N/d$. All frequencies $f$ satisfying $f \equiv r \pmod{m}$ contribute to the same residue bin $r$, resulting in collisions that obscure individual frequency identities.
\end{remark}

\subsection{Chinese Remainder Theorem for Candidate Reconstruction and Validation}

\begin{theorem}[Chinese Remainder Theorem~\cite{knuth1997art,garner1959residue}]
\label{thm:crt-reconstruction}
Let $m_1, m_2$ be coprime positive integers. For any residues $r_1 \in \{0, \ldots, m_1 - 1\}$ and $r_2 \in \{0, \ldots, m_2 - 1\}$, there exists a unique solution $f \in \{0, \ldots, M - 1\}$ with $M = m_1 m_2$ satisfying:
\begin{align}
f &\equiv r_1 \pmod{m_1} \\
f &\equiv r_2 \pmod{m_2}
\end{align}
\end{theorem}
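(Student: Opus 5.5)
We prove the two halves of the statement separately: existence of a solution in $\{0,\ldots,M-1\}$ and its uniqueness there.

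\textbf{Existence.} We build the solution explicitly, in the form of Garner's formula already quoted in the introduction. Since $\gcd(m_1,m_2)=1$, B\'ezout's identity gives integers $u,v$ with $u m_1 + v m_2 = 1$. Hence $u$ is an inverse $m_1^{-1}$ modulo $m_2$. Set
\[
t = \bigl((r_2 - r_1)\, m_1^{-1}\bigr) \bmod m_2 \in \{0,\ldots,m_2-1\}, \qquad f = r_1 + m_1 t .
\]
Then $f \equiv r_1 \pmod{m_1}$ holds trivially. Also $f \equiv r_1 + m_1 m_1^{-1}(r_2 - r_1) \equiv r_2 \pmod{m_2}$. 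For the range, $0 \le r_1 \le m_1-1$ and $0 \le t \le m_2 - 1$ give
\[
0 \le f \le (m_1 - 1) + m_1(m_2-1) = m_1 m_2 - 1 = M-1,
\]
so $f$ lies in the required interval.

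\textbf{Uniqueness.} Suppose $f, f' \in \{0,\ldots,M-1\}$ both satisfy the two congruences. Then $m_1 \mid f - f'$ and $m_2 \mid f - f'$. Coprimality of $m_1$ and $m_2$ gives $m_1 m_2 \mid f-f'$; this step uses Euclid's lemma, or equivalently B\'ezout once more. Since $|f - f'| < M$, we conclude $f = f'$.

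\textbf{Alternative route.} One can argue by counting instead. The map $f \mapsto (f \bmod m_1, f \bmod m_2)$ from $\{0,\ldots,M-1\}$ to $\{0,\ldots,m_1-1\}\times\{0,\ldots,m_2-1\}$ is injective by the uniqueness argument above. Both sets have $M$ elements, so the map is bijective, and existence follows without the explicit formula.

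We prefer the constructive route. It directly justifies the Garner reconstruction formula used later in the algorithm, and it makes visible the affine dependence of $f$ on $(r_1,r_2)$ that the adversarial construction exploits.

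\textbf{Main obstacle.} The only genuinely nontrivial point is the existence of $m_1^{-1} \bmod m_2$, equivalently the divisibility step in the uniqueness argument. Both rest on B\'ezout's identity, which we cite from the extended Euclidean algorithm; everything else is arithmetic bookkeeping.
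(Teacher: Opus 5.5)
Your proof is correct and takes the same route as the paper: an explicit B\'ezout-based construction gives existence, and uniqueness follows because two solutions in $\{0,\ldots,M-1\}$ must differ by a multiple of $m_1 m_2$. The differences are minor and in your favour: you use Garner's mixed-radix form $f = r_1 + m_1 t$, which lands in range by direct bounding and matches the formula of \Cref{lem:garner-3view}, whereas the paper uses the symmetric form $(r_1 b m_2 + r_2 a m_1) \bmod M$ and merely asserts the step $m_1 \mid f-f',\ m_2 \mid f-f' \Rightarrow m_1 m_2 \mid f-f'$ that you justify via Euclid's lemma.
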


\begin{proof}
By Bezout's identity, $\gcd(m_1, m_2) = 1$ implies existence of integers $a, b$ such that $a \cdot m_1 + b \cdot m_2 = 1$. The solution is:
\begin{equation}
f = (r_1 \cdot b \cdot m_2 + r_2 \cdot a \cdot m_1) \bmod (m_1 \cdot m_2)
\end{equation}
Substituting into the congruences verifies that this expression satisfies both modular conditions.
Uniqueness follows from the fact that any two solutions differ by a multiple of $m_1 \cdot m_2$.
\end{proof}

\textbf{Application to Sparse FFT:} For $N$-point DFT with decimation factors $d_1, d_2, d_3$ where $d_i | N$ and moduli $m_i = N/d_i$ satisfy $\gcd(m_i, m_j) = 1$ for all $i \neq j$:
\begin{itemize}
\item View 1: Decimation by $d_1$ gives moduli $m_1 = N/d_1$
\item View 2: Decimation by $d_2$ gives moduli $m_2 = N/d_2$
\item View 3 (gating): Decimation by $d_3$ gives moduli $m_3 = N/d_3$
\item If $m_1 m_2 \geq N$, CRT uniquely reconstructs a candidate frequency $f \in [0, N)$ from residue pairs $(r_1, r_2)$, provided the residues correspond to a single underlying frequency.
\item View 3 provides an additional consistency constraint: candidates are retained if their reconstructed residues agree across all three moduli, filtering many false pairs. However, in the presence of structured collisions, multiple candidates may satisfy all congruence constraints.
\end{itemize}

\subsection{Coprime Selection for \texorpdfstring{$O(\sqrt{N} \log N)$}{$O(sqrt(N) log N)$}}

\begin{corollary}[Sublinear FFT Cost via Coprime Decimation]
\label{cor:optimal-decimation}
Choosing moduli $m_1, m_2, m_3 \approx \sqrt{N}$ pairwise coprime with decimation factors $d_i = N/m_i$ (thus $d_i | N$) yields:
\begin{itemize}
\item View 1 FFT size: $m_1 \approx \sqrt{N}$
\item View 2 FFT size: $m_2 \approx \sqrt{N}$
\item View 3 (gating) FFT size: $m_3 \approx \sqrt{N}$
\item Total FFT cost: 3 $\times$ $O(\sqrt{N} \log \sqrt{N})$ = $O(\sqrt{N} \log N)$
\end{itemize}
\end{corollary}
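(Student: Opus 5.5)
The plan is to obtain the cost bound by summing the per-view costs. For each view $i \in \{1,2,3\}$ the cost has two parts: forming the decimated sequence $x_{d_i}$ by \Cref{def:decimation}, and computing its $m_i$-point DFT. The decimated sequence has exactly $N/d_i = m_i$ entries, because $d_i \mid N$ by the choice $d_i = N/m_i$. Extracting it therefore costs $O(m_i)$ under strided indexing, and no pass over all $N$ samples is needed. By \Cref{thm:decimation-aliasing}, the $m_i$-point DFT of $x_{d_i}$ is precisely the residue view modulo $m_i$. Computing it with a standard FFT costs $O(m_i \log m_i)$.

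The second step substitutes $m_i \approx \sqrt{N}$. I will read this as $m_i \le C\sqrt{N}$ for some absolute constant $C$. Then
\[
m_i \log m_i \le C\sqrt{N}\,\log(C\sqrt{N}) = C\sqrt{N}\left(\tfrac12 \log N + \log C\right) = O(\sqrt{N}\log N).
\]
The per-view cost $O(m_i + m_i\log m_i)$ is therefore $O(\sqrt{N}\log N)$. Each of the three views has FFT size $m_i \approx \sqrt{N}$, which gives the first three bullets. Adding the three identical bounds gives $3\cdot O(\sqrt{N}\log\sqrt{N}) = O(\sqrt{N}\log N)$, which is the last bullet.

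The main obstacle is not the arithmetic but the hidden assumption that the FFT of an arbitrary length $m_i$ costs $O(m_i\log m_i)$. Pairwise coprime moduli cannot all be powers of two, so radix-2 Cooley--Tukey does not apply directly. I would invoke a general-length algorithm, namely Bluestein's chirp-$z$ or Rader's algorithm combined with mixed-radix FFT, which attains $O(m\log m)$ for every $m$. Otherwise I would state explicitly that the corollary assumes an $O(m\log m)$ FFT primitive for each length $m_i$.

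I would also note two scoping points. First, pairwise coprimality is not used in the cost count at all; it matters only for the CRT step (\Cref{thm:crt-reconstruction}). Second, the existence of pairwise coprime divisors of $N$ near $\sqrt{N}$ is a hypothesis of the corollary and is not proved. The introduction already notes that it fails, for example, for $N = 2^{20}$.
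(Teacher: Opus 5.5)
Your proof is correct and follows the same route as the paper: bound each view's FFT by $m_i\log m_i$ with $m_i\approx\sqrt{N}$, then sum the three copies to get $\tfrac{3}{2}\sqrt{N}\log N = O(\sqrt{N}\log N)$. Your additions are sound refinements that the paper's proof leaves implicit: the $O(m_i)$ decimation cost, reading $\approx$ as $m_i\le C\sqrt{N}$, and noting that pairwise coprime (hence not all power-of-two) lengths need a general-length $O(m\log m)$ FFT such as Bluestein's.
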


\begin{proof}
FFT of size $m$ costs $m \log m$ operations. For $m = \sqrt{N}$:
\begin{align}
\text{Cost} &= 3 \times \sqrt{N} \times \log(\sqrt{N}) \\
&= 3 \times \sqrt{N} \times \frac{1}{2} \log N \\
&= \frac{3}{2} \sqrt{N} \log N = O(\sqrt{N} \log N)
\end{align}
This is sub-linear: $\sqrt{N} \log N \ll N \log N$ for large $N$.
\end{proof}

\textbf{Practical Note on Coprime Selection:} Finding three pairwise coprime moduli $m_1, m_2, m_3 \approx \sqrt{N}$ that satisfy $d_i = N/m_i \in \mathbb{N}$ (i.e., $m_i \mid N$) can be challenging in practice. This requirement restricts the algorithm to values of $N$ with rich factorizations. For arbitrary $N$, relaxing the divisibility constraint and using moduli $m_i \approx \sqrt{N}$ (not necessarily dividing $N$) introduces additional aliasing effects beyond the ideal modular model. In this case, the exact CRT reconstruction model no longer applies directly, and these effects must be handled via additional per-candidate validation (e.g., Goertzel evaluation)~\cite{goertzel1958algorithm}. Alternatively, padding $N$ to a highly composite number enables easier coprime selection at the cost of additional FFT operations. These practical considerations are implementation-dependent and beyond the theoretical scope of this complexity analysis.

\needspace{8\baselineskip}
\section{Algorithm Overview}
\label{sec:algorithm}

\subsection{High-Level Workflow}

\begin{algorithm}[t]
\caption{Hybrid Sparse-Dense FFT with Safety Certificates}
\label{alg:hybrid}
\textbf{Input:} Signal $x[n]$ (length $N$), sparsity $k$, decimation factors $(d_1, d_2, d_3)$, coverage multiplier $c$ (default: 2).\\
\textbf{Output:} Set of $k$ spectral peaks $\{(f_i, |X[f_i]|)\}$.
\begin{enumerate}
  \item \textbf{Phase 1 (Decimated FFTs):} Compute three decimated views by selecting coprime decimation factors $d_1, d_2, d_3 \approx \sqrt{N}$. For each view $i \in \{1,2,3\}$, decimate the signal, compute FFT of length $m_i = N/d_i$, and extract top-$ck$ residues to form $R_i$.

  \item \textbf{Phase 2 (Safety Certificate -- Bucket Occupancy):} Compute the maximum bucket occupancy $\tau = \max_{i,r} |\{f : f \bmod m_i = r, f \in R_i\}|$ across all three views. If $\tau > 3$, trigger dense FFT fallback and return $\mathrm{TopK}(\mathrm{FFT}(x), k)$ with $O(N \log N)$ complexity.

  \item \textbf{Phase 3 (CRT Reconstruction with 2-of-3 Gating):} Extract residue set $R_3^{\text{res}} = \{r : (r, \text{mag}) \in R_3\}$. Initialize candidate set $C = \emptyset$. For each pair of residues $(r_1, \text{mag}_1) \in R_1$ and $(r_2, \text{mag}_2) \in R_2$, reconstruct frequency $f = \mathrm{Garner}(r_1, r_2, m_1, m_2)$ using two-modulus CRT. Compute third-view residue $r_3' = f \bmod m_3$ and check gating condition: if $r_3' \in R_3^{\text{res}}$, add $(f, \min(\text{mag}_1, \text{mag}_2))$ to $C$.

  \item \textbf{Phase 3b (Safety Certificate -- Candidate Count):} If $|C| > 3k$, detect candidate explosion indicating adversarial signal structure. Trigger dense FFT fallback and return $\mathrm{TopK}(\mathrm{FFT}(x), k)$ with $O(N \log N)$ complexity.

  \item \textbf{Phase 4 (Goertzel Validation):} For each candidate $f \in C$, compute exact DFT magnitude $|X[f]|$ using Goertzel's algorithm with $O(N)$ complexity per bin. Retain validated frequencies in set $S$.
\end{enumerate}
\textbf{Return:} $\mathrm{TopK}(S, k)$, the $k$ largest magnitude spectral peaks.
\end{algorithm}

\begin{figure}[t]
\centering
\resizebox{0.97\columnwidth}{!}{%
\begin{tikzpicture}[
  node distance=0.9cm and 0.9cm,
  startstop/.style={rectangle, rounded corners, minimum width=2.4cm, minimum height=0.75cm, text centered, draw=black, inner sep=2pt},
  process/.style={rectangle, minimum width=2.4cm, minimum height=0.75cm, align=center, draw=black, inner sep=2pt},
  decision/.style={diamond, aspect=2.2, minimum width=2.4cm, minimum height=0.75cm, align=center, draw=black, inner sep=1pt},
  arrow/.style={thick,->,>=stealth},
  label/.style={font=\footnotesize}
]

\node (start) [startstop] {Input: $x[n]$, $k$};
\node (phase1) [process, below=of start] {3-View FFTs\\$O(\sqrt{N} \log N)$};
\node (cert1) [decision, below=of phase1] {Bucket\\occupancy\\$\leq 3$?};
\node (crt) [process, below=of cert1] {CRT Gating\\$(kc)^2 \to O(k)$};
\node (cert2) [decision, below=of crt] {Candidates\\$\leq 3k$?};
\node (goertzel) [process, below=of cert2] {Goertzel\\$O(kN)$};
\node (output) [startstop, below=of goertzel] {Output: $k$ peaks};
\node (fallback) [process, right=1.4cm of cert1] {Dense FFT\\$O(N \log N)$};

\draw [arrow] (start) -- (phase1);
\draw [arrow] (phase1) -- (cert1);
\draw [arrow] (cert1) -- node[anchor=east,label] {Yes} (crt);
\draw [arrow] (crt) -- (cert2);
\draw [arrow] (cert2) -- node[anchor=east,label] {Yes} (goertzel);
\draw [arrow] (goertzel) -- (output);

\draw [arrow] (cert1.east) -- node[anchor=south,label] {No} (fallback.west);
\draw [arrow] (cert2.east) -| node[anchor=north,label,pos=0.08] {No} (fallback.south);
\draw [arrow] (fallback.east) -- ++(0.4,0) |- (output.east);

\end{tikzpicture}%
}
\caption{Hybrid algorithm flowchart with adaptive fallback. Safety certificates trigger dense FFT~\cite{cooley1965algorithm} when the candidate count exceeds threshold; the bottleneck is the candidate count, not the FFT cost, and the fallback yields $O(N \log N)$ worst-case runtime.}
\label{fig:hybrid-flowchart}
\end{figure}
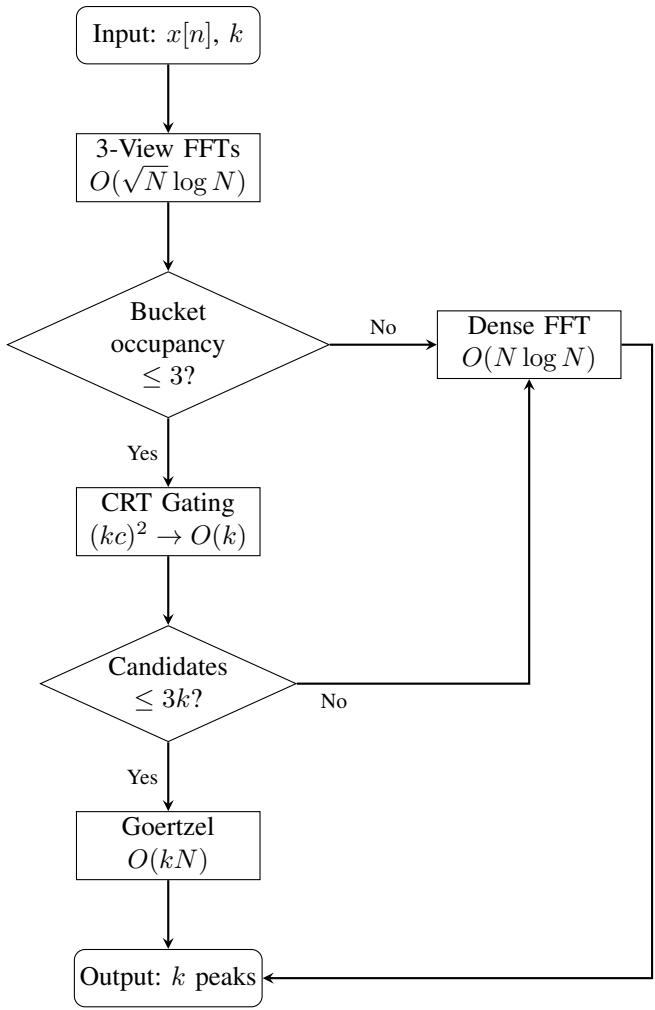

Note on complexity bottleneck. The sparse identification phase (Phases 1--3) attains $O(\sqrt{N} \log N)$ sub-linear cost; the dominant cost of the algorithm is the $O(kN)$ Goertzel validation phase (Phase 4b)~\cite{goertzel1958algorithm}. The bottleneck is therefore the candidate count, not the FFT cost, and the algorithm's advantage over dense FFT~\cite{cooley1965algorithm} depends on keeping the number of validated candidates small (specifically $k \ll \log N$). Heap-based selection attains $O(n \log k)$ worst-case time for selecting the top $k$ of $n$ elements~\cite{knuth1997art}.

\textbf{Note on Top-k Selection:} The $\mathrm{TopK}(k)$ operation selects the $k$ largest-magnitude elements from a collection of values. Standard implementations include:
\begin{itemize}
\item Quickselect algorithm~\cite{knuth1997art}: $O(n)$ average-case and $O(n^2)$ worst-case for selecting from $n$ elements.
\item Heap-based selection: $O(n \log k)$ worst-case time for selecting the top $k$ of $n$ elements~\cite{knuth1997art}.
\item Partial sort: $O(n \log k)$ using a heap or $O(k \log n)$ using a priority queue~\cite{knuth1997art}.
\end{itemize}

In the regime of interest, the cost of these selection routines is lower-order relative to candidate validation.

\subsection{Decimation Function}

For a decimation factor $d \mid N$, the decimation operator extracts every $d$-th sample of $x[n]$:
\begin{equation}
x_d = \text{decimate}(x, d) = \{x[0], x[d], x[2d], \ldots, x[d \cdot (N/d - 1)]\}
\end{equation}

Complexity: $O(N/d)$ single pass extracting $N/d$ samples with stride $d$.

\subsection{Top-k Residue Selection}

After computing decimated FFT $X_d$ of size $N/d$, the algorithm selects the top $kc$ bins by magnitude in each view. This selection is performed independently for each decimated view. Here, $c$ is a coverage multiplier controlling over-selection of residue bins to mitigate aliasing, spectral leakage, and noise effects.

\textbf{Coverage Multiplier Selection:}

The coverage multiplier $c$ must account for:
\begin{enumerate}
\item A true frequency may spread energy across multiple neighboring residue bins due to leakage.
\item Noise robustness: additional margin may be needed to account for magnitude perturbations caused by noise.
\item $c = 2$ is typically sufficient, since each true frequency maps to a single dominant residue bin under the exact-decimation model considered in this paper.
\end{enumerate}

\needspace{5\baselineskip}
\textbf{Practical Heuristic Values:}
\begin{itemize}
\item On-grid, high-SNR signals (this paper's analysis regime, \Cref{def:on-grid}): $c = 2$ is typically sufficient, since each true frequency maps to a single dominant residue bin under the exact-decimation model considered in this paper. Larger values, such as $c \in [5, 20]$, may be required to capture broadened residue structure. However, increasing $c$ also enlarges the expected false-positive term $k^3 c^3 / N$, and may trigger dense fallback in many cases.
\item Off-grid or noisy signals (beyond this paper's scope): larger values, such as $c \in [5, 20]$, may be required to capture broadened residue structure. Increasing $c$ also enlarges the expected false-positive term $k^3 c^3 / \sqrt{N}$ (see \Cref{thm:gating-expected-case}) and may trigger dense fallback in many cases.
\end{itemize}

\textbf{Trade-off Analysis:} Larger coverage multipliers $c$ increase capture probability but incur $O(c^3)$ cost in the expected false positive term $k^3 c^3 / \sqrt{N}$ (\Cref{thm:gating-expected-case}). The complexity bounds in this paper assume $c = O(1)$, which is justified for the on-grid noiseless regime analyzed here. Extending the analysis to large $c$ regimes is left to future work.

\begin{observation}[Coverage Multiplier Heuristic]
\label{obs:coverage}
Under bounded spectral leakage assumptions, selecting the top $kc$ residues with coverage multiplier $c \geq \lceil \text{max aliasing factor} \rceil$ is sufficient with probability $\geq 1 - \delta$ (for small $\delta$) to capture all $k$ true frequencies.
\end{observation}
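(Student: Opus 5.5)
Since this observation is stated as a heuristic, I would first turn its hypotheses into explicit assumptions and then prove a deterministic core statement, with the probability $1-\delta$ entering only through the noise. The bounded-leakage assumption I will take to mean the following. For each view $i$ and each true frequency $f\in\operatorname{supp}(X)$, there is a set $B_i(f)\subseteq\{0,\dots,m_i-1\}$ of at most $L$ residue bins that carry its energy. The dominant bin $f \bmod m_i$ lies in $B_i(f)$, and $L$ is the ``max aliasing factor''. In the on-grid, noiseless model, \Cref{thm:decimation-aliasing} gives $B_i(f)=\{f\bmod m_i\}$, so $L=1$. Let $U_i=\bigcup_f B_i(f)$. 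Then $|U_i|\le kL\le kc$ whenever $c\ge\lceil L\rceil$, and this counting step is what the condition on $c$ is for.

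The deterministic core is a separation claim. If every bin in $U_i$ has decimated magnitude strictly larger than every bin outside $U_i$, then the top-$kc$ selection in Phase~1 of Algorithm~\ref{alg:hybrid} contains all of $U_i$. In particular it contains every dominant residue $f\bmod m_i$, for $i=1,2,3$. This is immediate from $|U_i|\le kc$: at most $kc-|U_i|$ extra bins are added, and none can displace an element of $U_i$.

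The probabilistic step establishes separation with probability at least $1-\delta$. Write $X_{d_i}[r]=S_i[r]+W_i[r]$, with a signal part and a noise part. Outside $U_i$ we have $S_i[r]=0$ by the leakage assumption. Inside $U_i$ I would need a lower bound $|S_i[r]|\ge \gamma$. I will assume a minimum-amplitude condition together with a non-cancellation condition. The latter is needed because, by the proof of \Cref{thm:decimation-aliasing}, $X_{d_i}[r]$ is a sum over all colliding $f\equiv r\pmod{m_i}$, and an adversary could make these contributions cancel. Separation then holds whenever $\max_r |W_i[r]|<\gamma/2$. A Gaussian (or sub-Gaussian) tail bound on each $W_i[r]$ plus a union bound over the $m_1+m_2+m_3=O(\sqrt N)$ bins gives failure probability at most $3\sqrt N\,e^{-\gamma^2/(8\sigma_i^2)}$. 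Here $\sigma_i^2$ is the per-bin noise variance, which equals $N/d_i$ times the sample variance. Requiring this to be at most $\delta$ is exactly a high-SNR condition, $\gamma^2/\sigma^2 \gtrsim \log(\sqrt N/\delta)$.

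I expect the main obstacle to be the non-cancellation hypothesis rather than the union bound. Without it the statement is false even for $\delta=0$, because two colliding tones with opposite amplitudes leave an empty residue bin in one view. I would handle this in one of two ways. The first is to state non-cancellation explicitly as part of ``bounded spectral leakage''. The second is to observe that it holds generically, for example when the amplitudes have independent, continuously distributed phases, in which case cancellation has probability zero and a small-ball estimate can be absorbed into $\delta$. Either way, the observation then follows by combining the counting step, the separation claim, and the union bound.
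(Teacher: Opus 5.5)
Your argument is correct under the hypotheses you state, and it shares its skeleton with the paper's justification, but the two establish different things. The paper offers only the note following the observation. That note says each true frequency spreads over at most $c$ residue bins, so the top $kc$ bins cover them ``with probability approaching 1 as signal-to-noise ratio increases.'' It explicitly declines to fix a stochastic model. You supply that model and turn the note into a theorem: the counting bound $|U_i|\le kL\le kc$, a deterministic separation step, and a sub-Gaussian union bound over the $m_1+m_2+m_3=O(\sqrt N)$ bins. The union bound replaces ``as SNR increases'' with the explicit requirement $\gamma^2/\sigma^2\gtrsim\log(\sqrt N/\delta)$.

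Your more substantive addition is the non-cancellation hypothesis. By the computation in the proof of \Cref{thm:decimation-aliasing}, a decimated bin is proportional to the sum of $X[f]$ over all $f\equiv r\pmod{m_i}$. Colliding tones can therefore annihilate in one view, and then no SNR level restores that residue to $R_i$. The paper's note silently assumes this cannot happen. This failure mode also removes true candidates rather than inflating the candidate set, so the candidate-count certificate gives no protection against it.

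One refinement: you need $|S_i[r]|\ge\gamma$ only at the dominant bins $f\bmod m_i$, not on all of $U_i$. When $\max_r|W_i[r]|<\gamma/2$, every bin of magnitude above $\gamma/2$ lies in $U_i$, and $|U_i|\le kc$. So all dominant bins are selected, however weak the leakage side-bins are. Imposing $\gamma$ on the side-bins as well would make your SNR condition needlessly restrictive.
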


\begin{note}
Each true frequency $f$ aliases to at most $c$ residues across the $d_i$ decimation factors due to spectral leakage. Selecting top $kc$ residues ensures all $k$ true frequencies and their aliases are captured, with probability approaching 1 as signal-to-noise ratio increases. This is a heuristic bound based on empirical observations; a rigorous probabilistic model requires formal stochastic signal assumptions beyond the scope of this work.
\end{note}

\begin{theorem}[Goertzel Disambiguation]
\label{thm:goertzel-disambig}
For an on-grid frequency $f$, Goertzel's algorithm computes the exact DFT magnitude $|X[f]|$. When $M < N$, CRT reconstruction produces a set of alias candidates $f' = f_0 + qM$ within $[0, N)$. Evaluating $|X[f']|$ for each candidate yields the true DFT magnitude at that frequency. In the noiseless sparse setting, only candidates corresponding to true frequencies exhibit magnitudes consistent with the signal support, allowing incorrect aliases to be discarded.
\end{theorem}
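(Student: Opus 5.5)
The statement has two parts. First, Goertzel computes $X[f]$ exactly at an on-grid bin. Second, evaluating every alias candidate $f' = f_0 + qM$ in $[0,N)$ separates the true frequencies from spurious aliases. I will prove them in that order.

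\textbf{Step 1 (exactness of Goertzel).} Write Goertzel as the second-order linear recursion
\[
s[n] = x[n] + 2\cos(2\pi f/N)\, s[n-1] - s[n-2], \qquad s[-1]=s[-2]=0 .
\]
The output is
\[
y[N] = s[N-1] - e^{-j2\pi f/N} s[N-2].
\]
Factor the characteristic polynomial as
\[
(1 - e^{j2\pi f/N}z^{-1})(1 - e^{-j2\pi f/N}z^{-1}).
\]
Because of this factorization, the filter is equivalent to the first-order recursion
\[
v[n] = x[n] + e^{j2\pi f/N} v[n-1].
\]
Unrolling gives
\[
v[N-1] = \sum_{n=0}^{N-1} x[n]\, e^{j2\pi f (N-1-n)/N}.
\]
Since $f$ is an integer (\Cref{def:on-grid}), $e^{j2\pi fN/N}=1$. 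Hence
\[
v[N-1] = e^{-j2\pi f/N}\sum_{n} x[n]\, e^{-j2\pi fn/N} = e^{-j2\pi f/N} X[f],
\]
and taking the modulus gives $|v[N-1]| = |X[f]|$. The $O(N)$ cost is immediate, since the recursion performs one constant-time update per sample. The main technical point is this telescoping of the two-pole recursion into the one-pole form. That is where on-gridness enters: integrality of $f$ kills the phase factor. I will check the algebra linking $y[N]$ to $v[N-1]$ carefully rather than assert it.

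\textbf{Step 2 (alias structure).} Let $M = m_1 m_2 < N$. By \Cref{thm:crt-reconstruction}, a residue pair $(r_1,r_2)$ determines a unique $f_0 \in [0,M)$. Every $f \in [0,N)$ with $f \equiv f_0 \pmod M$ is then consistent with that pair, so the candidate set for the pair is exactly
\[
\{f_0 + qM : 0 \le q < \lceil (N-f_0)/M \rceil\}.
\]
Applying Step 1 to each such $f'$ returns the true value $|X[f']|$. No assumption on $f'$ is needed beyond its being an integer in $[0,N)$.

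\textbf{Step 3 (discarding aliases).} In the noiseless $k$-sparse model (\Cref{def:k-sparse}), $|X[f']| > 0$ if and only if $f' \in \mathrm{supp}(X)$. Any threshold $\tau$ with
\[
0 < \tau < \min_{f\in\mathrm{supp}(X)} |X[f]|
\]
therefore retains exactly the candidates in the support and discards every spurious alias. This is what ``consistent with the signal support'' means. In the high-SNR regime the same argument goes through with $\tau$ placed between the noise level and the minimum true amplitude. I will state that version only as a remark, since the theorem is asserted for the noiseless case.
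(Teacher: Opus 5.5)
Your proof is correct and more complete than the paper's. The paper's proof is three sentences. It asserts Goertzel exactness, says the aliases $f+qM$ ``have different DFT values due to phase differences,'' and discards an alias because its Goertzel magnitude fails to match the decimated-view estimate.

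You differ in two places.

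\textbf{Exactness.} You actually prove it. You write the output as $s[N-1]-e^{-j\omega}s[N-2]$ with $\omega=2\pi f/N$, then cancel one pole to reach the accumulator $v[n]=x[n]+e^{j\omega}v[n-1]$; the algebra checks out. One correction to your commentary: $|v[N-1]|=\bigl|\sum_n x[n]e^{-j\omega n}\bigr|$ holds for every real $\omega$, because the prefactor $e^{j\omega(N-1)}$ is unimodular. Integrality of $f$ is what makes this a DFT bin and yields the clean form $e^{-j\omega}X[f]$. It is not what the magnitude identity hinges on. One extra zero-input step gives $v[N]=X[f]$ itself.

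\textbf{Disambiguation criterion.} This is the more substantive difference. You test $|X[f']|$ against zero, using only exactness plus the definition of $\mathrm{supp}(X)$. The paper instead compares against the decimated-view values.

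Your criterion is the sounder one. A decimated bin actually holds $X_{d_i}[r]=\frac{1}{d_i}\sum_{f\equiv r \pmod{m_i}}X[f]$. So when two true frequencies share a residue bin, as $3$ and $11$ do in view 1 of the paper's toy example, even a true frequency's magnitude fails to ``match'' the view estimate. The paper's argument also never addresses the $1/d_i$ scaling. Its ``phase differences'' remark amounts to orthogonality of the DFT basis, which your support argument makes unnecessary.

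The paper's phrasing does tie validation to the measured views, which could serve as an extra consistency check in noisy settings. As a proof of the stated theorem, though, your threshold argument is the one that holds without a singleton-bin assumption, and it matches the magnitude thresholding used in the paper's illustrative example.
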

\begin{proof}
For on-grid frequency $f$, Goertzel computes $|X[f]|$ exactly. Aliases $f' = f + qM$ have different DFT values due to phase differences across the full $N$-point signal. In noiseless case, only the true $f$ produces magnitude matching the decimated view estimates.
\end{proof}

\subsection{Garner's CRT Formula with 3-View Gating}

\begin{lemma}[Garner's 2-Residue CRT with Gating Validation]
\label{lem:garner-3view}
For pairwise coprime $m_1, m_2, m_3$ and residues $(r_1, r_2)$, the candidate frequency $f \in [0, m_1 \times m_2)$ is:
\begin{equation}
f = r_1 + m_1 \times \left[ (r_2 - r_1) \times m_1^{-1} \bmod m_2 \right]
\end{equation}
This candidate is validated by checking if $f \bmod m_3 \in R_3$ (2-of-3 CRT agreement).
\end{lemma}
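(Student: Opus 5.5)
We verify directly that the displayed expression solves the two-modulus congruence system of \Cref{thm:crt-reconstruction}, lies in the stated range, and is therefore the unique CRT representative. The gating clause is then a consistency condition that every true frequency passes.

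\textbf{Step 1 (well-definedness).} Since $\gcd(m_1,m_2)=1$, Bezout's identity (as in the proof of \Cref{thm:crt-reconstruction}) gives an inverse $m_1^{-1}$ modulo $m_2$. Set
\[
t = (r_2 - r_1)\, m_1^{-1} \bmod m_2 \in \{0,\dots,m_2-1\},
\]
taking the least nonnegative representative so that negative differences $r_2-r_1$ are handled.

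\textbf{Step 2 (congruences and range).} Clearly $f = r_1 + m_1 t \equiv r_1 \pmod{m_1}$. Modulo $m_2$ we have
\[
f \equiv r_1 + m_1 m_1^{-1}(r_2-r_1) \equiv r_2 \pmod{m_2}.
\]
For the range, $0 \le r_1 \le m_1-1$ and $0\le t\le m_2-1$ give
\[
0 \le f \le (m_1-1) + m_1(m_2-1) = m_1m_2 - 1 .
\]
By the uniqueness part of \Cref{thm:crt-reconstruction}, $f$ is therefore the unique solution in $[0,m_1m_2)$.

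\textbf{Step 3 (gating).} Let $f^\ast$ be a true frequency. By \Cref{thm:decimation-aliasing}, $f^\ast$ produces energy at residue $f^\ast \bmod m_i$ in view $i$. Assuming $m_1m_2\ge N$ (the standing assumption), $f^\ast\in[0,N)\subseteq[0,m_1m_2)$. If the top-$kc$ selection captures its residues in all three views, then the pair $(f^\ast \bmod m_1, f^\ast\bmod m_2)$ appears in $R_1\times R_2$. By Step 2, Garner's formula returns exactly $f^\ast$, and $f^\ast \bmod m_3\in R_3$. Hence true frequencies are never removed by the 2-of-3 check. The check is only a necessary condition: spurious candidates can also pass it, which is the point of \Cref{thm:adversarial-lower-bound}.

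\textbf{Main obstacle.} The algebra is routine. The delicate part is pinning down what ``validated'' means. The lemma can only assert soundness of the gate (no false negatives) under the capture assumption of \Cref{obs:coverage}, not that it rejects false positives. When $m_1m_2<N$, the gate should be applied to each lift $f+qm_1m_2$ rather than to $f$ alone. Pairwise coprimality of $m_3$ is not needed for Steps 1--2 and only matters for how informative the gate is.
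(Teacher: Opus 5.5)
Your proposal is correct and follows essentially the same route as the paper, which also checks $f \equiv r_1 \pmod{m_1}$ directly and obtains $f \equiv r_2 \pmod{m_2}$ from $m_1\gamma \equiv 1 \pmod{m_2}$ with $\gamma = m_1^{-1} \bmod m_2$. You additionally make explicit the range bound $0 \le f \le m_1 m_2 - 1$ and the resulting uniqueness, which the paper leaves implicit, and your no-false-negative argument for the gate (under the capture assumption and $m_1 m_2 \ge N$) is a correct strengthening, since the paper only restates the check $f \bmod m_3 \in R_3^{\text{res}}$ as a definition.
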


\begin{proof}
\textbf{Step 1: CRT Reconstruction (Views 1 \& 2):}

\textbf{Verification of $f \equiv r_1 \pmod{m_1}$:}
\begin{equation}
f = r_1 + m_1 \times u \implies f \bmod m_1 = r_1 \bmod m_1 = r_1
\end{equation}

\textbf{Verification of $f \equiv r_2 \pmod{m_2}$:}
Let $\gamma = m_1^{-1} \bmod m_2$. Then $m_1 \times \gamma \equiv 1 \pmod{m_2}$.
\begin{align}
u &= (r_2 - r_1) \times \gamma \bmod m_2 \\
m_1 \times u &\equiv (r_2 - r_1) \times \gamma \times m_1 \pmod{m_2} \\
&\equiv (r_2 - r_1) \pmod{m_2}
\end{align}
where the last step follows from $\gamma \times m_1 \equiv 1 \pmod{m_2}$.
Therefore:
\begin{equation}
f = r_1 + m_1 \times u \equiv r_1 + (r_2 - r_1) \equiv r_2 \pmod{m_2}
\end{equation}

\textbf{Step 2: Gating Validation (View 3):}

Compute $\rho_3' = f \bmod m_3$. The candidate $f$ passes gating if and only if $\rho_3' \in R_3^{\text{res}}$, where $R_3^{\text{res}}$ is the set of residues extracted from detected peaks in view 3. This enforces 2-of-3 CRT agreement: views 1 and 2 reconstruct $f$ and view 3 checks consistency. Complexity: the modular inverse $\gamma$ can be computed once in $O(\log m_2)$ time, and each candidate reconstruction and consistency check requires $O(1)$ time~\cite{garner1959residue}.
\end{proof}

Complexity: The modular inverse $\gamma$ can be computed once in $O(\log m_2)$ time. Each candidate reconstruction and consistency check requires $O(1)$ time.

\subsection{Goertzel Algorithm}

\begin{lemma}[Goertzel Single-Bin DFT]
\label{lem:goertzel}
The Goertzel algorithm computes the DFT magnitude at a single frequency bin $f$ in $O(N)$ operations without computing the full FFT.
\end{lemma}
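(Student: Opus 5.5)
The plan is to exhibit the Goertzel recurrence explicitly and show two things: the $N$-step linear recurrence, followed by one constant-time correction, returns $X[f]$ as in Definition~\ref{def:dft}; and each step costs $O(1)$ arithmetic operations.

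First fix the frequency bin $f$ and set $\omega = 2\pi f/N$. Precompute the single constant $\alpha = 2\cos\omega$, together with $\cos\omega$ and $\sin\omega$; this is $O(1)$ work. Then define the second-order recurrence
\[
s[n] = x[n] + \alpha\, s[n-1] - s[n-2], \qquad s[-1]=s[-2]=0,
\]
for $n = 0,\ldots,N-1$. Each update uses one multiplication and two additions, so the whole pass costs $O(N)$ operations.

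Next I would verify correctness. The recurrence is an IIR filter with transfer function $1/(1-\alpha z^{-1}+z^{-2}) = 1/\bigl((1-e^{j\omega}z^{-1})(1-e^{-j\omega}z^{-1})\bigr)$. Multiplying by the numerator $(1-e^{-j\omega}z^{-1})$ leaves the one-pole filter $y[n] = x[n] + e^{j\omega}y[n-1]$. Unrolling this gives
\[
y[N-1] = \sum_{n=0}^{N-1} x[n]\,e^{j\omega(N-1-n)}.
\]
I would then extend the input by one zero sample and evaluate the FIR correction at index $N$. That is, I form $s[N]=\alpha s[N-1]-s[N-2]$ and take
\[
y[N] = s[N] - e^{-j\omega}s[N-1] = \sum_{n=0}^{N-1} x[n]\, e^{j\omega(N-n)}.
\]
Since $e^{j\omega N} = e^{j2\pi f}=1$ because $f$ is an integer (on-grid, Definition~\ref{def:on-grid}), this equals $\sum_n x[n]e^{-j2\pi fn/N} = X[f]$. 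Equivalently, $X[f] = e^{j\omega}s[N-1] - s[N-2]$. The magnitude then follows in $O(1)$ operations via the closed form
\[
|X[f]|^2 = s[N-1]^2 + s[N-2]^2 - \alpha\, s[N-1]s[N-2],
\]
which holds for real $x$; in the complex case one simply uses the complex expression.

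The main obstacle I expect is the index bookkeeping in the unrolling step: whether the final correction uses $s[N-1],s[N-2]$ or $s[N],s[N-1]$, and where the phase factor $e^{j\omega}$ appears. I would handle this by the zero-extension trick above, which makes the identity exact. The on-grid assumption is exactly what removes the residual phase $e^{j\omega N}$. For off-grid $\omega$ the value would only be off by a unit-modulus factor, so the magnitude claim would still hold, but the statement only needs the on-grid case. The total count is $O(N)$ for the recurrence plus $O(1)$ for setup and finalization, which gives $O(N)$ overall, with no FFT computed.
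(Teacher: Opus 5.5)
Your proposal is correct and takes the same route as the paper: the same second-order recurrence with coefficient $2\cos\omega$, $O(1)$ work per sample for an $O(N)$ total, and the same closed-form magnitude $s[N-1]^2 + s[N-2]^2 - 2\cos\omega\, s[N-1]\,s[N-2]$. Your factorization $1-2\cos\omega\,z^{-1}+z^{-2} = (1-e^{j\omega}z^{-1})(1-e^{-j\omega}z^{-1})$ and the zero-extension step, which give $X[f] = e^{j\omega}s[N-1]-s[N-2]$ via $e^{j2\pi f}=1$, make rigorous what the paper only asserts when it says ``the final combination eliminates the filter's pole.''
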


\begin{proof}
The Goertzel algorithm uses a second-order recurrence to efficiently compute a single DFT bin without complex exponentials at every step.

\textbf{Standard Real-Arithmetic Formulation:}

For frequency bin $k$, define:
\begin{align}
\omega &= 2\pi k / N \\
\text{coeff} &= 2 \cos(\omega)
\end{align}

Initialize $s_0 = 0$, $s_1 = 0$. Then for $n = 0, \ldots, N-1$:
\begin{align}
s_2 &= s_1 \\
s_1 &= s_0 \\
s_0 &= x[n] + \text{coeff} \times s_1 - s_2
\end{align}

After $N$ iterations, the DFT magnitude is:
\begin{equation}
|X[k]| = \sqrt{s_0^2 + s_1^2 - 2 s_0 s_1 \cos(\omega)}
\end{equation}

Complexity. Each iteration requires constant work, giving $O(N)$ operations per frequency bin.

\textbf{Correctness:} The recurrence implements the DFT definition via polynomial evaluation at $z = e^{j\omega}$ on the unit circle. The final combination eliminates the filter's pole, yielding exact DFT value $X[k]$ subject only to floating-point precision~\cite{goertzel1958algorithm,oppenheim2010discrete}.
\end{proof}

Critical Observation: Goertzel is efficient for computing a small number of bins. However, for $m$ bins, total cost is $O(mN)$. When $m = k^2$, cost becomes $O(k^2 N)$, which is the bottleneck in this algorithm, since candidate validation dominates the cost.

\needspace{6\baselineskip}
\subsection{Safety Certificate Mechanisms}

\begin{definition}[Safety Certificates]
\label{def:certificates}
A safety certificate is a deterministic check designed to detect structured collision patterns in the sparse reconstruction process. Certificates may produce false positives (triggering fallback), and are calibrated to detect known adversarial constructions rather than provide worst-case optimal guarantees.
\end{definition}
\noindent Indicators of collision density: the per-view count $\tau_i(r)$ denotes the number of candidate frequencies mapping to residue bin $r$ in view $i$, and the bucket-occupancy certificate is defined as $\max_{i,r} \tau_i(r)$. The threshold value $3$ is empirically chosen to flag high collision density without incurring excessive false positives on typical signals.

\begin{theorem}[Certificate Design]
\label{thm:certificate-design}
The following certificates provide indicators of adversarial collision patterns:

\textbf{Certificate 1: Bucket Occupancy (Pre-Gating)}
\begin{align}
\text{maxBucketOccupancy} &= \max_{i \in \{1,2,3\}} \max_{r} \Big|\{r' \in R_i : r' = r\}\Big|
\end{align}
Let $\mathcal{C}$ denote the set of candidates after gating. If $|\mathcal{C}| > 3k$, candidate explosion is detected.
where $R_i$ is the set of top-$kc$ residues detected in decimated view $i$ (from Phase 1).
\textbf{Threshold:} If $\text{maxBucketOccupancy} > 3$, high aliasing detected.

Certificate 2: Post-Gating Candidate Count. Let $\mathcal{C}$ denote the set of candidates after gating. If $|\mathcal{C}| > 3k$, candidate explosion is detected.
\begin{equation}
|\text{candidates after gating}|
\end{equation}
\textbf{Threshold:} If $> 3k$, candidate explosion detected.
\end{theorem}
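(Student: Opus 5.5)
This is a design statement, so I read ``provide indicators of adversarial collision patterns'' as three concrete claims. First, each certificate is a deterministic, cheap function of data the algorithm already holds. Second, each is monotone in collision density. Third, together they force fallback on the structured collisions the paper singles out, namely the \Cref{thm:adversarial-lower-bound} regime with $m_3 \mid m_1 m_2$. The plan is to prove these in that order. The third claim uses only forward references to the lower bound; the proof needs no probabilistic assumptions.

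\textbf{Step 1 (determinism and cost).} For Certificate 1, the sets $R_i$ from Phase 1 each have size $kc$. A single pass with a hash map or counting array of size $m_i$ computes $\max_r |\{r'\in R_i : r'=r\}|$ in $O(kc+m_i)=O(\sqrt N)$ time, which is dominated by the Phase 1 FFT cost of \Cref{cor:optimal-decimation}. For Certificate 2, $|\mathcal C|$ is a counter incremented during Phase 3. By \Cref{lem:garner-3view} each reconstruction and gate costs $O(1)$, so the count adds $O((kc)^2)$ work at worst. The check can also abort as soon as the counter passes $3k$, which brings the cost down to $O(k)$ beyond the pair enumeration. Neither certificate uses randomness, so both are deterministic given the thresholds, in the sense of \Cref{def:certificates}.

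\textbf{Step 2 (what each certificate indicates).} Certificate 2 bounds the Phase 4 cost directly. If it passes, Goertzel validation costs at most $3k\cdot O(N)=O(kN)$ by \Cref{lem:goertzel}, so any signal that would force $\omega(k)$ validations triggers fallback. For Certificate 1, I would argue by contrapositive from \Cref{thm:decimation-aliasing}. Many true frequencies sharing one residue class mod $m_i$ means the detected structure is highly aliased, and a high multiplicity count signals this. I would state it honestly: as written, $R_i$ is a set of residues, so the literal multiplicity is at most $1$. The meaningful quantity is the per-view count $\tau_i(r)$ of candidate frequencies per residue bin, defined just before the theorem. I would prove the indicator property for $\tau_i(r)$. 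Each candidate that survives gating lands in some bin of view $3$, so $|\mathcal C|\le |R_3|\cdot\max_r\tau_3(r)$. Hence a bounded occupancy bounds $|\mathcal C|$, which links the two certificates.

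\textbf{Step 3 (detection of the adversarial construction).} In the $m_3\mid m_1m_2$ regime, Garner's map reduced mod $m_3$ is affine in $(r_1,r_2)$. The construction makes $\Omega(k^2)$ pairs pass the gate, so for $k$ above a constant, $|\mathcal C|=\Omega(k^2)>3k$ and Certificate 2 fires. This matches the $k=2$ versus larger-$k$ remark in the illustrative example. Thus, whenever the sparse path is allowed to run, it does at most $O(kN)$ validation work; otherwise the dense fallback costs $O(N\log N)$.

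The main obstacle is Step 2 for Certificate 1. Its displayed formula is degenerate, so I must commit to the $\tau_i(r)$ interpretation and show that the empirical threshold $3$ has some provable meaning. I expect only the containment inequality above to be provable, with no optimality claim. That is consistent with \Cref{def:certificates}, which explicitly allows false positives and calibration only against known constructions.
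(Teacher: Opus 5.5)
Your proposal proves only the sensitivity half of what the paper means by ``indicator''. You show that the certificates are deterministic and cheap, that Certificate~2 caps Phase~4 at $O(kN)$ whenever it passes, and that the $m_3\mid m_1m_2$ construction trips it once $k$ exceeds a small constant. Every one of these claims would still hold if both thresholds were set to $0$, i.e.\ if the algorithm always fell back. So your argument neither distinguishes these certificates from an unconditional alarm nor explains the constants $3$ and $3k$; any threshold $Ck$ would work equally well in your Steps~2--3.

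The paper's proof is a two-sided calibration, and the half you omit is its substance. Under the uniformity heuristic of \Cref{thm:gating-expected-case}, typical signals yield about $k+k^3c^3/\sqrt{N}$ gated candidates. This stays below $3k$ roughly when $k^2c^3\le 2\sqrt{N}$ (about $26<36$ for $N=10^6$, $k=12$, $c=2$), and bucket occupancy is typically at most $2$. Only against that baseline does the adversarial count ($108=9k$ at $k=12$) make $3k$ a separating threshold. By ruling out probabilistic assumptions at the outset, you rule out the only ingredient the paper uses for the acceptance side. You need to add it, stated as the heuristic it is.

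Your treatment of Certificate~1 also ends up proving things about a different object. You are right that the displayed quantity is degenerate: $R_i$ is a set, so each count is at most $1$, and the Phase~2 formula in \Cref{alg:hybrid} has the same defect. The paper's proof in effect relies on this, reporting occupancy $1$ for the adversarial construction and leaving detection to Certificate~2.

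Replacing it with the per-bin candidate count $\tau_i(r)$ has two consequences. First, it moves the check to after CRT reconstruction, contrary to its pre-gating placement in Phase~2. Second, it changes the certificate's behavior on exactly the example the paper analyzes. The $k$ diagonal pairs $(a_i,b_i)$ all satisfy $\phi(a_i,b_i)=\mathrm{base}\in R_3$, so $\tau_3(\mathrm{base})\ge k>3$, and your version fires where the paper's does not.

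Finally, your containment inequality is just the partition of $\mathcal{C}$ by residue modulo $m_3$. With threshold $3$ and $|R_3|=kc$, it gives only $|\mathcal{C}|\le 3kc=6k$ at $c=2$, so it does not actually link Certificate~1 to the $3k$ threshold of Certificate~2.
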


\begin{proof}
Certificate Calibration: the threshold constants are calibrated against sparsity $k$.

For typical $k$-sparse signals with well-separated frequencies:
\begin{itemize}
\item Bucket occupancy remains small (typically $\leq 2$)
\item Candidate count after gating $= \Theta(k + k^3c^3/\sqrt{N}) \ll 3k$
\end{itemize}

For adversarial signals (\Cref{thm:adversarial-lower-bound}):
\begin{itemize}
\item Arithmetic progression construction uses distinct residues in each view (bucket occupancy = 1), bypassing the bucket occupancy certificate
\item Post-gating candidates $\approx 0.75k^2$ (for k=12: 108 candidates = 9k), triggering the candidate count certificate
\end{itemize}

\textbf{Conservative Threshold Selection:}

Setting thresholds $\tau = 3$ (bucket) and $3k$ (post-gating) is intended to provide:
\begin{itemize}
\item High-probability acceptance for typical signals (sparse path succeeds)
\item Detection of simple collision patterns via bucket occupancy
\item Detection of adversarial candidate growth via candidate count
\item Graceful degradation to $O(N \log N)$ via fallback in false-positive cases
\end{itemize}

\textbf{Note:} These are conservative heuristics calibrated to known adversarial patterns, not proven optimal bounds. Tighter thresholds may exist but require case-by-case analysis of signal structure.
\end{proof}

\begin{corollary}[Adaptive Correctness]
The hybrid algorithm returns correct results by combining sparse reconstruction with deterministic fallback:
\begin{itemize}
\item If the sparse reconstruction path produces a consistent candidate set, exact evaluation (e.g., via Goertzel) recovers the true frequencies.
\item If the sparse path fails (as detected by the safety certificates), the algorithm falls back to a dense FFT, ensuring $O(N\log N)$ worst-case correctness.
\end{itemize}
Thus, correctness is guaranteed regardless of signal structure.
\end{corollary}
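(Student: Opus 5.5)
The proof splits on the control flow of \Cref{alg:hybrid}. Every execution ends in one of three branches: (i) the bucket-occupancy certificate fires in Phase 2, (ii) the candidate-count certificate fires in Phase 3b, or (iii) both certificates pass and the algorithm returns $\mathrm{TopK}(S,k)$ after Goertzel validation. I will argue correctness separately in each branch and take the conjunction. The two fallback branches are the easy ones. There the output is $\mathrm{TopK}(\mathrm{FFT}(x),k)$. Cooley--Tukey computes every $X[f]$ exactly, so under \Cref{def:k-sparse} the $k$ largest magnitudes contain the whole support. Exactness holds up to floating point, and support entries are all nonzero while off-support entries are zero. The cost is the $O(\sqrt N\log N)$ already spent in Phase 1 (\Cref{cor:optimal-decimation}), plus $O(k^2c^2)\le O(N)$ for gating, plus $O(N\log N)$ for the FFT and selection. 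This gives the $O(N\log N)$ claim.

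For the sparse branch (iii), I need two facts. \emph{Completeness:} every true frequency lies in $C$. \emph{Soundness:} every spurious member of $C$ is removed by validation. For completeness, take $f\in\operatorname{supp}(X)$. By \Cref{thm:decimation-aliasing}, view $i$ has energy at $f\bmod m_i$. Under the on-grid noiseless model with $c\ge1$ and no cancellation in the aliased bins, that residue is among the top $kc$, so $f\bmod m_i\in R_i$ for each $i$. The pair $(f\bmod m_1,f\bmod m_2)$ is then enumerated in Phase 3. \Cref{lem:garner-3view} shows that Garner reconstructs the unique representative of $f$ modulo $m_1m_2$, and this equals $f$ when $m_1m_2\ge N$. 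Because $f\bmod m_3\in R_3$, the candidate passes the gate. For soundness, \Cref{lem:goertzel} and \Cref{thm:goertzel-disambig} give the exact value $|X[f']|$ for each $f'\in C$. In the noiseless case a spurious $f'\notin\operatorname{supp}(X)$ has $|X[f']|=0$ and is rejected by any threshold $\tau>0$. So $S=\operatorname{supp}(X)$, and $\mathrm{TopK}(S,k)$ is correct.

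The main obstacle is completeness, because it silently uses hypotheses the corollary does not state. These are the no-cancellation condition on aliased bins, the requirement $m_1m_2\ge N$ (the toy example in the introduction violates it), and the requirement that $|R_i|=kc$ covers all occupied bins. I would make these explicit as the standing assumptions of \Cref{sec:preliminaries}: on-grid, noiseless, generic amplitudes, and moduli as in \Cref{cor:optimal-decimation}. If $m_1m_2<N$, I would instead enumerate all lifts $f_0+qM$ as in \Cref{thm:goertzel-disambig} and gate each one. The hypotheses matter because the certificates alone cannot rescue a sparse run that passes them but misses a frequency; correctness in branch (iii) rests entirely on completeness. The phrase ``regardless of signal structure'' should therefore be read as ``regardless of collision structure, within the on-grid noiseless model.'' With that reading the argument goes through.
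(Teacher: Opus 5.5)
Your proof follows the paper's skeleton: fallback branches versus the accepted sparse branch. The paper gives no separate proof, though. Its two bullets are the entire argument, and the first one conditions on the sparse path producing ``a consistent candidate set,'' which silently assumes completeness. You actually prove completeness and soundness in branch (iii). You are also right that the closing claim ``regardless of signal structure'' does not follow, because the certificates only flag too many residues or candidates, never a missing one.

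Your no-cancellation hypothesis is genuinely necessary, not an artifact of your method. Take $X[f]=1$ and $X[f+m_1]=-1$ (with $f+m_1<N$), and no other support in that class modulo $m_1$. The view-1 bin $r=f \bmod m_1$ equals $d_1^{-1}\sum_q X[r+qm_1]=0$. It can then enter $R_1$ only through tie-breaking among zero bins, and an adversary can avoid that; for lowest-index tie-breaking, put $f \bmod m_1$ at a high index. Every candidate is congruent modulo $m_1$ to an element of $R_1$, so neither frequency is ever reconstructed. For a suitable choice of the remaining support, both certificates pass and the output is wrong. So the unconditional statement is false, and your qualified version is what can actually be proved. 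Where the paper only restates the claim, your route gives a proof and exposes the hypothesis it needs.

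Three small corrections.

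\begin{itemize}
\item Your closing reading ``regardless of collision structure'' overstates things, since no-cancellation is itself a restriction on how colliding amplitudes combine. Say instead ``for every support pattern whose aliased amplitudes do not cancel in any view.''
\item The third hidden hypothesis you list, that $kc$ residues cover all occupied bins, is not an extra assumption. At most $k$ bins are occupied, so it is automatic for $c\ge 1$, as your own completeness step shows.
\item For the reconstruction step, cite only Step~1 of \Cref{lem:garner-3view} or \Cref{prop:deterministic-reconstruction}, which use just $\gcd(m_1,m_2)=1$. With $m_i \mid N$, coprime $m_1,m_2$ and $m_1m_2\ge N$ force $m_1m_2=N$. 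Then any $m_3>1$ dividing $N$ shares a prime with $m_1$ or $m_2$, so the lemma's pairwise-coprime hypothesis can never hold in your setting. This is harmless, since a true frequency passing the gate needs no coprimality.
\end{itemize}
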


\section{Core Lemmas and Proofs}
\label{sec:lower-bound}

\subsection{Adversarial Lower Bound for 3-View Gating}

Prior deterministic sparse FFT work derived expected-case bounds of $O(k + k^3/\sqrt{N})$ candidates for 3-view CRT gating under uniformity assumptions on residue distributions. An open question is whether worst-case behavior differs significantly from this expectation. The following theorem constructs an explicit family of $k$-sparse signals that defeat gating by exploiting algebraic structure in the CRT reconstruction map, proving that $\Omega(k^2)$ candidate explosion is unavoidable for certain moduli configurations.

\begin{theorem}[Adversarial Lower Bound -- Informal]
\label{thm:adversarial-informal}
For certain choices of decimation moduli, there exist $k$-sparse signals that force the sparse FFT algorithm to examine $\Omega(k^2)$ candidates instead of the expected $O(k)$ candidates. When $k^2 > N \log N$, validation can cost $O(k^2 N)$ operations, which exceeds dense FFT's $O(N \log N)$ performance. This worst case cannot be avoided by any deterministic gating strategy using these moduli, establishing a structural limit of pure sparse approaches.
\end{theorem}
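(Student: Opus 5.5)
The plan is to turn the informal statement into an explicit construction for the configuration $m_3 \mid m_1 m_2$, with $\gcd(m_1,m_2)=1$. The central observation is that under this divisibility the gating residue of a Garner candidate is an affine function of $(r_1,r_2)$ that factors through coarser residues. Since $m_1,m_2$ are coprime and $m_3 \mid m_1m_2$, write $m_3 = ab$ with $a \mid m_1$ and $b \mid m_2$, so that $\gcd(a,b)=1$. By \Cref{lem:garner-3view}, the candidate $f=\mathrm{Garner}(r_1,r_2,m_1,m_2)$ satisfies $f\equiv r_1 \pmod{m_1}$ and $f \equiv r_2 \pmod{m_2}$. Hence $f \bmod a = r_1 \bmod a$ and $f \bmod b = r_2 \bmod b$, and by \Cref{thm:crt-reconstruction} applied to $(a,b)$, the value $f \bmod m_3$ depends only on $(r_1 \bmod a,\ r_2 \bmod b)$.

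First, I will choose the support so that every true frequency has the same residue modulo $m_3$. Take $f_i = s + i\,m_3$ for $i=0,\dots,k-1$, an arithmetic progression with common difference $m_3$. Then $f_i \bmod a = s \bmod a$ and $f_i \bmod b = s\bmod b$ for all $i$, and $R_3^{\text{res}}$ contains the residue $s \bmod m_3$.

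Second, for any pair $(r_1,r_2)=(f_i \bmod m_1,\ f_j \bmod m_2)$, the reconstructed candidate satisfies $f \equiv s \pmod a$ and $f\equiv s \pmod b$, hence $f \equiv s \pmod{m_3}$. It therefore passes the gate for all $k^2$ pairs $(i,j)$.

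Third, I must check that these $k^2$ candidates are distinct and that the construction does not trip the bucket certificate. Distinctness of the candidates follows from CRT uniqueness on $[0,m_1m_2)$, provided the $f_i$ have pairwise distinct residues mod $m_1$ and mod $m_2$. Since $i m_3 \equiv i' m_3 \pmod{m_1}$ iff $m_1/\gcd(m_1,m_3) \mid (i-i')$, it suffices that $k \le \min\{m_1/a,\ m_2/b\}$ and $s + (k-1)m_3 < N$. These are the conditions the theorem's ``certain choices of moduli'' will encode. Under them each view has $k$ distinct occupied residues, so each bucket has occupancy $1$. The signal passes Phase~2 of \Cref{alg:hybrid} and then produces $|C| = k^2$, which is $\Omega(k^2)$.

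Fourth, I will address the claim that no deterministic gating rule can avoid this. Any gating rule sees only $(R_1,R_2,R_3)$. For each pair $(i,j)$, I want to show that the spurious candidate $g_{ij}$ is the true frequency of some other admissible signal producing the same residue sets. The natural choice is to replace $f_i$ by $g_{ij}$ in the support, and possibly adjust the other frequencies as well, checking in each case that the detected residue sets are unchanged. Then any rule that discards $g_{ij}$ errs on that alternative signal, so a correct rule must retain all $k^2$ candidates. Finally, with $\Omega(k^2)$ candidates, \Cref{lem:goertzel} gives $\Theta(k^2N)$ validation cost. This exceeds $N\log N$ once $k^2 \gtrsim \log N$, so the stated threshold $k^2 > N\log N$ is only a sufficient condition.

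I expect two obstacles. The first is the indistinguishability step: a single swap may change $R_1$ or $R_2$, so the alternative signal may need to swap several frequencies simultaneously. The fallback is to claim the bound only for consistency-based gating, meaning rules that accept every candidate consistent with all three residue sets. The second is verifying that admissible moduli exist with $m_i\approx\sqrt N$ and $k \le m_1/a$. Choosing $a$ and $b$ small, for example $m_3 = \gcd(m_1,m_3)\gcd(m_2,m_3)$ with both factors $O(1)$, should make this routine.
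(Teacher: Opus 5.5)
Your construction is correct, and it reaches the paper's conclusion by a cleaner route.

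\textbf{How the paper argues.} It goes through \Cref{lem:crt-linearity} and writes the gating residue as $\phi(r_1,r_2)=u r_1+v r_2 \bmod m_3$. It takes $R_1,R_2$ to be arithmetic progressions whose steps are linked by $u d_A\equiv v d_B \pmod{m_3}$. It then claims a pair $(a_i,b_j)$ survives iff $i-j\in S$, and counts $\sum_{d\in S}(k-|d|)\ge 3k^2/4$.

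\textbf{How you argue.} You factor $m_3=ab$ with $a=\gcd(m_1,m_3)$ and $b=\gcd(m_2,m_3)$. You observe that $f\bmod m_3$ is determined by $(r_1\bmod a,\ r_2\bmod b)$, and you put the whole support in one class $s+m_3\mathbb{Z}$. Then all $k^2$ pairs pass, and distinctness is just injectivity of Garner on $[0,m_1)\times[0,m_2)$.

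\textbf{What your viewpoint reveals about the paper's construction.} $u$ and $v=1-u$ are the CRT idempotents for $m_3=ab$: $u\equiv 1 \pmod a$ and $u\equiv 0\pmod b$. So the linking condition forces $a\mid d_A$ and $b\mid d_B$. Hence $L\equiv 0\pmod{m_3}$ and $R_3=\{\mathrm{base}\}$, so every pair survives, not only those with $i-j\in S$. The two constructions are therefore the same in substance. Yours gives the exact count $k^2$ with a transparent proof, and it makes explicit the feasibility condition $k\le\min\{m_1/a,\ m_2/b\}$ that the paper leaves implicit.

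\textbf{Loose ends.}

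\emph{View 3.} It does not have $k$ distinct occupied residues: all $k$ frequencies alias into the single bin $s\bmod m_3$. You must therefore pick amplitudes that do not cancel there, e.g.\ equal and positive. The bucket certificate is passed only because, as literally defined in the paper, it counts detected residues per bin.

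\emph{Size of $a,b$.} Taking $a,b=O(1)$ makes $m_3=O(1)$, which abandons $m_3\approx\sqrt N$. With all three moduli $\Theta(\sqrt N)$ one has $(m_1/a)(m_2/b)=m_1m_2/m_3=\Theta(\sqrt N)$. So your construction, and the paper's, reaches only $k=O(N^{1/4})$. In that range the statement's hypothesis $k^2>N\log N$ can never hold, so your reading that $k^2\gtrsim\log N$ is the operative threshold is the right one.

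\emph{The universality clause.} The paper does not prove that no deterministic gating rule avoids the blow-up; its proof, like your fallback, only covers the consistency gate of \Cref{alg:hybrid}. Your indistinguishability idea does work once you replace the single swap by a full permutation. Take any permutation $\sigma$ of $\{0,\dots,k-1\}$ with $\sigma(i)=j$. The support $\{\mathrm{Garner}(f_\ell\bmod m_1,\ f_{\sigma(\ell)}\bmod m_2)\}_\ell$ stays in the class $s\bmod m_3$ and hits each view-1 and view-2 bin exactly once. With equal amplitudes it therefore produces the same three decimated spectra, because $X_{d}[r]=\tfrac1d\sum_q X[r+qm]$. So any rule that sees only those spectra, magnitudes and phases included, must keep every $g_{ij}$.

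The remaining caveat is that these permuted frequencies must lie in $[0,N)$. This is automatic when $m_1m_2=N$, which is forced if $m_1,m_2$ are coprime divisors of $N$ with $m_1m_2\ge N$. It is not automatic in the paper's numerical example, where $m_1m_2>N$.
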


\begin{theorem}[Adversarial Lower Bound -- Formal]
\label{thm:adversarial-lower-bound}
For deterministic 3-view CRT gating with moduli $m_1, m_2, m_3 \approx \sqrt{N}$ satisfying $\gcd(m_1, m_2) = 1$ and $m_3 \mid (m_1 m_2)$, there exist $k$-sparse signals that produce $\Omega(k^2)$ candidates passing the gating check, resulting in $O(k^2 N)$ Goertzel validation cost. Such moduli can be constructed (e.g., $m_1 = g_1 m_1'$, $m_2 = g_2 m_2'$, $m_3 = g_1 g_2$ with $g_1, g_2, m_1', m_2'$ distinct primes), though $m_1, m_2, m_3$ are not pairwise coprime. This demonstrates that certain choices of moduli admit adversarial signal constructions, establishing a worst-case $\Omega(k^2)$ lower bound for CRT-based sparse reconstruction with deterministic gating.
\end{theorem}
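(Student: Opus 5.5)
The plan is to exploit the divisibility $m_3 \mid m_1 m_2$. It makes the gating residue $f \bmod m_3$ of any Garner output a function of $(r_1 \bmod g_1,\, r_2 \bmod g_2)$ alone. We then place all $k$ true frequencies in a single residue class modulo $m_3$, so that every cross pair $(r_1, r_2)$ passes the third-view check.

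\textbf{Step 1 (gating collapses).} Fix the concrete moduli $m_1 = g_1 m_1'$, $m_2 = g_2 m_2'$, $m_3 = g_1 g_2$ with distinct primes, so $\gcd(m_1,m_2)=1$ and $m_3 \mid m_1 m_2$. For $f = \mathrm{Garner}(r_1, r_2, m_1, m_2)$, \Cref{lem:garner-3view} gives $f \equiv r_1 \pmod{m_1}$, hence $f \equiv r_1 \pmod{g_1}$. Likewise $f \equiv r_2 \pmod{g_2}$. By the CRT (\Cref{thm:crt-reconstruction}) with coprime $g_1, g_2$, the value $f \bmod m_3$ is therefore determined by $(r_1 \bmod g_1, r_2 \bmod g_2)$.

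\textbf{Step 2 (construction).} Fix $s \in [0, m_3)$ and set $f_i = s + i\, g_1 g_2$ for $i = 0, \ldots, k-1$. Let $X[f_i]$ be arbitrary non-zero amplitudes and $X[f]=0$ elsewhere. Since $m_3 \approx \sqrt{N}$, we have $f_i < N$ whenever $k \lesssim \sqrt{N}$, which holds in the paper's regime $k \ll \sqrt{N}$.

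\emph{Residues in view 1.} We have $f_i - f_j = (i-j) g_1 g_2$. This is divisible by $m_1 = g_1 m_1'$ iff $m_1' \mid (i-j) g_2$, i.e.\ iff $m_1' \mid (i-j)$. So for $k \le m_1'$ the residues $f_i \bmod m_1$ are pairwise distinct.

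\emph{Residues in view 2.} Symmetrically, the residues $f_i \bmod m_2$ are distinct for $k \le m_2'$.

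\emph{Residues in view 3.} All $f_i \equiv s \pmod{m_3}$, so view 3 has the single occupied residue $s$.

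By \Cref{thm:decimation-aliasing}, views 1 and 2 each show exactly $k$ non-zero bins, which lie in the top-$kc$ sets $R_1, R_2$, and $s \in R_3^{\text{res}}$. Every occupied residue bin holds one frequency, so the bucket-occupancy certificate sees occupancy $1$ and does not fire.

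\textbf{Step 3 (counting survivors).} Take any $r_1 = f_i \bmod m_1$ and $r_2 = f_j \bmod m_2$. Then $r_1 \equiv s \pmod{g_1}$ and $r_2 \equiv s \pmod{g_2}$. By Step 1 the reconstruction satisfies $\mathrm{Garner}(r_1,r_2) \equiv s \pmod{m_3}$, so it passes gating.

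Garner's map is a bijection from residue pairs onto $[0, m_1 m_2)$, so the $k^2$ pairs give $k^2$ distinct candidates. Under the standing assumption $m_1 m_2 \ge N$, these are $k^2$ distinct gated candidates. Only $k$ of them are true frequencies; any extra zero-magnitude bins admitted to $R_1, R_2$ only add candidates.

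Validating each candidate by \Cref{lem:goertzel} costs $\Theta(N)$, giving $\Theta(k^2 N)$ total validation cost. This yields the $\Omega(k^2)$ bound. The cost exceeds $N\log N$ once $k^2 > \log N$.

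\textbf{Main obstacle.} The delicate point is the bookkeeping that keeps the construction inside the intended regime. Three conditions must hold at once: $f_i < N$, distinct residues in views 1 and 2 (requiring $k \le \min(m_1', m_2')$), and $m_i \approx \sqrt{N}$ with $m_i \mid N$.

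I would first state the theorem for any $k \le \min(m_1', m_2', N/(g_1 g_2))$, and only then check that balanced prime choices make this range $\Omega(N^{1/4})$ or better. For example, $g_1, g_2 \approx N^{1/4}$ gives $m_3 \approx \sqrt{N}$, so the bound $N/(g_1 g_2)$ is about $\sqrt{N}$, and $m_1', m_2' \approx N^{1/4}$, so $\min(m_1', m_2')$ is about $N^{1/4}$. The regime where this bound is meaningful is therefore $k$ up to roughly $N^{1/4}$.
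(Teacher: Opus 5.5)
Your proof is correct up to a small fix noted below, and it is more direct than the paper's. The paper first proves the affine identity $f \equiv u r_1 + v r_2 \pmod{m_3}$ (\Cref{lem:crt-linearity}). It then builds linked progressions $a_i = a_0 + i d_A$, $b_j = b_0 - j d_B$ with $u d_A \equiv v d_B \equiv L \pmod{m_3}$, so that the gating residue of $(a_i,b_j)$ is $\text{base} + (i-j)L$. Finally it counts pairs whose index difference lies in a $k$-element set $S$, obtaining $\sum_{d\in S}(k-|d|) \gtrsim 3k^2/4$ survivors with, nominally, $k$ distinct view-3 residues.

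Your Step~1 makes explicit what that lemma encodes. Since $g_1 \mid m_1$ and $g_2 \mid m_2$, one has $u \equiv 1$, $v \equiv 0 \pmod{g_1}$ and $u \equiv 0$, $v \equiv 1 \pmod{g_2}$, so $u$ and $v$ are complementary idempotents modulo $m_3$. Consequently the paper's linking congruence forces $g_1 \mid d_A$ and $g_2 \mid d_B$, hence $L \equiv 0 \pmod{m_3}$. Its construction therefore collapses to yours: all true frequencies lie in a single class modulo $m_3$, all $k^2$ pairs pass, and the progressions have at most $m_1'$ and $m_2'$ distinct terms.

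Your version gives the exact count $k^2$ and an explicit admissible range $k \le \min(m_1', m_2', N/m_3)$, which the paper leaves implicit. Step~1 also extends verbatim to any $m_3 \mid m_1 m_2$ with $\gcd(m_1,m_2)=1$ by taking $g_i = \gcd(m_i, m_3)$, since then $m_3 = g_1 g_2$.

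Two corrections. First, ``arbitrary non-zero amplitudes'' is too loose. The single occupied view-3 bin carries $\frac{1}{d_3}\sum_i X[f_i]$, which vanishes for amplitudes such as $1,-1$, and then $s \in R_3^{\text{res}}$ is no longer guaranteed. Take, for instance, equal positive amplitudes.

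Second, your remark that every occupied bin holds one frequency is false in view 3, whose one bin carries all $k$ tones. This does not affect the theorem. If you want one tone per bin in all three views (which the paper asserts for its construction when discussing certificates), use a product set:
\begin{itemize}
\item let the pairs $(f \bmod g_1, f \bmod g_2)$ range over $A \times B$ with $|A| = |B| = t$ and $k = t^2$;
\item choose $f_{a,b} \bmod m_1$ distinct among tones sharing $a$, and $f_{a,b} \bmod m_2$ distinct among tones sharing $b$.
\end{itemize}
A cross pair built from $f_{a,b}$ and $f_{a',b'}$ then reconstructs to the view-3 residue of $f_{a,b'}$, so all $k^2$ pairs still pass. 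View 3 has $k$ singleton bins, which also removes the cancellation issue. The admissible range grows to $k \le \min(g_1, g_2, m_1', m_2')^2$, about $\sqrt{N}$ for balanced primes.
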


\begin{proof}
\textbf{Adversarial Construction:}

Choose moduli $m_1, m_2, m_3 \approx \sqrt{N}$ satisfying $\gcd(m_1, m_2) = 1$ and the divisibility constraint:
\begin{equation}
m_3 \text{ divides } (m_1 \cdot m_2)
\end{equation}

\textbf{Construction Method:} Select distinct primes $g_1, g_2, m_1', m_2'$ (ensuring pairwise coprimality), then define:
\begin{align}
m_1 &= g_1 \cdot m_1' \\
m_2 &= g_2 \cdot m_2' \\
m_3 &= g_1 \cdot g_2
\end{align}
This ensures $m_3 | (m_1 m_2)$ because $m_1 m_2 = (g_1 m_1')(g_2 m_2') = m_3(m_1' m_2')$, while $\gcd(m_1, m_2) = 1$ since $g_1, g_2, m_1', m_2'$ are pairwise coprime. Note that $\gcd(m_1, m_3) = g_1 > 1$ and $\gcd(m_2, m_3) = g_2 > 1$, so the moduli are not pairwise coprime.

\textbf{Example:} For $N=10^6$, choose $g_1=23, g_2=29, m_1'=47, m_2'=37$ (all distinct primes):
\begin{align}
m_1 &= 23 \times 47 = 1081 \\
m_2 &= 29 \times 37 = 1073 \\
m_3 &= 23 \times 29 = 667
\end{align}
Verification: $\gcd(1081, 1073) = 1$; $m_1 m_2 = 1{,}159{,}913 = 667 \times 1739$, confirming $m_3 | (m_1 m_2)$.

The condition $m_3 \mid (m_1 m_2)$ with $\gcd(m_1,m_2)=1$ is precisely what allows the two-view CRT reconstruction map to reduce to an affine map modulo $m_3$, which is the mechanism exploited by the adversarial construction. The following lemma formalizes this structure.

\begin{lemma}[Affine Structure of CRT Reconstruction Modulo $m_3$]
\label{lem:crt-linearity}
Under the constraint $m_3 \mid (m_1 m_2)$ with $\gcd(m_1, m_2) = 1$, Garner's CRT reconstruction from the first two views is an affine function modulo $m_3$:
\begin{equation}
f \equiv \phi(r_1, r_2) \pmod{m_3}
\end{equation}
where $\phi(r_1, r_2) = u \cdot r_1 + v \cdot r_2 \pmod{m_3}$ for constants:
\begin{align}
\gamma &= m_1^{-1} \bmod m_2 \\
v &= m_1 \gamma \bmod m_3 \\
u &= (1 - v) \bmod m_3
\end{align}
\end{lemma}

\begin{proof}
Garner's formula gives the unique $f \in [0, m_1 m_2)$ reconstructed from residues $(r_1, r_2)$ satisfying $f \equiv r_1 \pmod{m_1}$ and $f \equiv r_2 \pmod{m_2}$:
\begin{equation}
f = r_1 + m_1 \cdot [(r_2 - r_1) \cdot \gamma \bmod m_2]
\end{equation}
where $\gamma = m_1^{-1} \bmod m_2$ (exists since $\gcd(m_1, m_2) = 1$).

Expanding modulo $m_3$:
\begin{align}
f &\equiv r_1 + m_1 \cdot [(r_2 - r_1) \cdot \gamma \bmod m_2] \pmod{m_3}
\end{align}

The key step uses $m_3 \mid m_1 m_2$. Since $m_1 m_2 \equiv 0 \pmod{m_3}$, the $\bmod\; m_2$ inside the brackets can be dropped when reducing modulo $m_3$:

For any integer $q$, $(r_2 - r_1) \cdot \gamma \bmod m_2 = (r_2 - r_1) \cdot \gamma - q \cdot m_2$ for some $q \in \mathbb{Z}$. Then:
\begin{align}
m_1 \cdot [(r_2 - r_1)\gamma - q m_2] &= m_1 \gamma (r_2 - r_1) - q m_1 m_2
\end{align}
Since $m_3 \mid m_1 m_2$, we have $q m_1 m_2 \equiv 0 \pmod{m_3}$. Therefore:
\begin{align}
f &\equiv r_1 + m_1 \gamma (r_2 - r_1) \pmod{m_3} \notag \\
&\equiv r_1 (1 - m_1 \gamma) + r_2 (m_1 \gamma) \pmod{m_3} \notag \\
&\equiv u \cdot r_1 + v \cdot r_2 \pmod{m_3}
\end{align}
where $v = m_1 \gamma \bmod m_3$ and $u = (1 - v) \bmod m_3$. Both are constants depending only on $m_1, m_2, m_3$. Such step sizes exist because this congruence defines a relation in $\mathbb{Z}_{m_3}$ with fixed coefficients $u, v$.
\end{proof}

\textbf{Step 1: Construct Aligned Residue Sets}

Define arithmetic progressions:
\begin{align}
R_1 &= \{a_0 + i \cdot d_A \bmod m_1 : i = 0, 1, \ldots, k-1\} \\
R_2 &= \{b_0 - j \cdot d_B \bmod m_2 : j = 0, 1, \ldots, k-1\}
\end{align}

Choose step sizes $d_A, d_B$ (not chosen independently, but linked across $i$ and $j$) such that:
\begin{equation}
u \cdot d_A \equiv v \cdot d_B \pmod{m_3}.
\end{equation}
Such step sizes exist because this congruence defines a linear relation in $\mathbb{Z}_{m_3}$ with fixed coefficients $u, v$.

\textbf{Step 2: Observed Gating Behavior}

For any pair $(a_i, b_j)$, if the pair passes gating:
\begin{equation}
\phi(a_i, b_j) = \text{base} + (i - j) \cdot L \pmod{m_3}
\end{equation}
where $\text{base} = u \cdot a_0 + v \cdot b_0$ and $L = u \cdot d_A \equiv v \cdot d_B$.

Thus, $\phi$ depends ONLY on the difference $(i-j)$, not individual indices.

\textbf{Step 3: Construct k True Frequencies}

Choose a set $S$ of $k$ distinct differences and a pairing $\pi: \{0, \ldots, k-1\} \to \{0, \ldots, k-1\}$ such that:
\begin{itemize}
\item $i - \pi(i) \in S$ for each $i$
\item Example: $S = \{0, \pm1, \pm2, \ldots, \pm5, +6\}$ for $k=12$
\end{itemize}

For each $i$, define frequency:
\begin{equation}
f_i = \text{CRT}(a_i, b_{\pi(i)}, m_1, m_2) \in [0, m_1 \cdot m_2).
\end{equation}
Each surviving pair corresponds to a distinct candidate frequency by uniqueness of CRT reconstruction modulo $m_1 m_2$.

Set $R_3 = \{\text{base} + s \cdot L \bmod m_3 : s \in S\}$.

\textbf{Step 4: Count Gating Survivors}

A pair $(a_i, b_j)$ passes gating iff:
\begin{equation}
\phi(a_i, b_j) \in R_3 \iff (i-j) \in S
\end{equation}

For a fixed difference $d$, the number of pairs $(i, j)$ satisfying $i - j = d$ is $k - |d|$. Thus, total surviving pairs:
\begin{equation}
\sum_{d \in S} (k - |d|)
\end{equation}

\textbf{Optimizing $S$ for Maximum Survivors:}

To maximize survivors, choose $S$ as a symmetric interval around 0:
\begin{equation}
S = \{0, \pm1, \pm2, \ldots, \pm\lfloor k/2 \rfloor\} \cup \{\text{remainder if } k \text{ odd}\}
\end{equation}

For this choice, $|S| = k$ and:
\begin{align}
\text{Survivors} &= k + 2\sum_{d=1}^{\lfloor k/2 \rfloor} (k - d) \\
&= k + 2\left[k \lfloor k/2 \rfloor - \frac{\lfloor k/2 \rfloor (\lfloor k/2 \rfloor + 1)}{2}\right] \\
&\geq k + 2\left[\frac{k^2}{2} - \frac{k^2}{8}\right] = k + \frac{3k^2}{4} = \Theta(k^2)
\end{align}

Thus, for any $k$, adversarial construction yields at least $\tfrac{3k^2}{4}$ survivors, demonstrating that certain choices of moduli admit adversarial signal constructions and establishing a worst-case $\Omega(k^2)$ lower bound for CRT-based sparse reconstruction with deterministic gating.

\textbf{Concrete Example:} For $k=12$, $S = \{0, \pm1, \pm2, \pm3, \pm4, \pm5, +6\}$:
\begin{equation}
\text{Survivors} = 12 + 2(11+10+9+8+7) + 6 = 108 = 0.75 \cdot k^2
\end{equation}

Goertzel Cost (specific vulnerability): $108 \times N = 108 \times 10^6$ operations for $N = 10^6$.

This exceeds dense FFT cost ($N \log N \approx 20 \times 10^6$ operations for $N=10^6$) by $5.4\times$.
\end{proof}

\begin{corollary}[Pure Sparse Worst-Case Under Non-Pairwise-Coprime Moduli]
When moduli satisfy $m_3 \mid m_1 m_2$, deterministic 3-view CRT gating has worst-case complexity:
\begin{equation}
O(\sqrt{N} \log N + k^2 N)
\end{equation}
which is slower than dense FFT $O(N \log N)$ for $k^2 > \log N$. Pairwise coprime moduli avoid this specific construction, though other adversarial structures may exist (see \Cref{sec:conclusion}).
\end{corollary}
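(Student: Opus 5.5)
The plan is to obtain the stated worst-case bound by adding up the phase costs of Algorithm~\ref{alg:hybrid} with the certificates switched off, that is, the pure sparse path. For the upper bound I would charge each phase separately.

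\begin{itemize}
\item \emph{Phase 1.} This is three FFTs of size $m_i\approx\sqrt N$, costing $O(\sqrt N\log N)$ by \Cref{cor:optimal-decimation}. Decimation is linear in $m_i$, and top-$kc$ selection costs $O(\sqrt N\log k)$.
\item \emph{Phase 3.} There are at most $|R_1||R_2|=(kc)^2=O(k^2)$ Garner reconstructions. Each reconstruction and gate check costs $O(1)$ after a single $O(\log m_2)$ inverse computation (\Cref{lem:garner-3view}).
\item \emph{Phase 4.} At most $|R_1||R_2|$ candidates survive gating, so Goertzel validation costs $O(k^2N)$ by \Cref{lem:goertzel}.
\item \emph{Final selection.} TopK costs $O(k^2\log k)$.
\end{itemize}

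The total is $O(\sqrt N\log N+k^2N)$. This holds for every input and every choice of moduli, so the divisibility hypothesis is not needed for the upper bound.

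For tightness I would invoke \Cref{thm:adversarial-lower-bound}. When $m_3\mid m_1m_2$, the affine map of \Cref{lem:crt-linearity} together with the arithmetic-progression construction produces at least $k+3k^2/4$ gated survivors. Each survivor is a distinct frequency in $[0,m_1m_2)$ by CRT uniqueness, and the pure sparse path must run Goertzel on each one. That is $\Omega(k^2N)$ work, so the $k^2N$ term is attained and the bound is worst-case tight. The comparison with dense FFT follows by comparing $k^2N$ against $N\log N$: the adversarial cost exceeds $N\log N$ exactly when $k^2=\omega(\log N)$, which is the stated condition $k^2>\log N$ up to constants.

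The step I expect to be the main obstacle is justifying that the adversarial residue sets really are the sets $R_i$ the algorithm detects. In particular, $R_3$ must consist only of residues of true frequencies, and $|R_i|\le kc$ must hold. I would handle this by taking $c\ge1$ and setting the amplitudes so that exactly the constructed residues are the nonzero bins. This requires checking that $f_i\bmod m_3=\text{base}+(i-\pi(i))L$ hits each element of $R_3$, which holds by construction since $i-\pi(i)$ ranges over $S$. It also requires placing the $f_i$ in $[0,N)$, which needs $m_1m_2\ge N$.

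The final sentence of the corollary, that pairwise coprime moduli avoid the construction, I would argue briefly. If $\gcd(m_3,m_1m_2)=1$, then the reduction of Garner's output modulo $m_3$ does not collapse to an affine function of $(r_1,r_2)$, because the wrap term $qm_1m_2$ no longer vanishes modulo $m_3$. Step~2 of the construction therefore fails. I would leave other adversarial structures as open, as the statement itself does.
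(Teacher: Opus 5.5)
The paper gives no separate proof. The corollary is read off from two facts: the three decimated FFTs cost $O(\sqrt N\log N)$, and \Cref{thm:adversarial-lower-bound} gives $\Omega(k^2)$ gated survivors, each costing $\Theta(N)$ in Goertzel. Your route is therefore the paper's, and making the $(kc)^2$ upper bound explicit, noting that it needs no divisibility hypothesis, is a useful addition. The tightness half, however, inherits three errors from the paper's proof of the theorem. They sit in exactly the step you single out as the crux.

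\emph{First}, ``$i-\pi(i)$ ranges over $S$'' does not hold by construction. For every $b_j$ to be a nonzero bin, $\pi$ must be a bijection, so $\sum_i (i-\pi(i))=0$. At the same time, the $k$ displacements would have to be exactly the $k$ elements of $S$. The paper's even-$k$ choice, e.g.\ $S=\{0,\pm1,\dots,\pm5,+6\}$ for $k=12$, sums to $6$, so no such $\pi$ exists.

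\emph{Second}, the bound ``at least $k+3k^2/4$'' is false. The paper's own $k=12$ count is $108<120$, and for odd $k$ the symmetric interval gives exactly $(3k^2+1)/4$. Only $\Omega(k^2)$ is correct.

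\emph{Third}, $m_1m_2\ge N$ is the wrong direction for placing the $f_i$ in $[0,N)$. CRT outputs lie in $[0,m_1m_2)$, so what helps is $m_1m_2\le N$. That is automatic under exact decimation, since coprime $m_1,m_2\mid N$ give $m_1m_2\mid N$. Otherwise you must choose the progression so that the $k$ chosen outputs fall below $N$.

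All three issues disappear with a simpler construction in the paper's setting $m_3=g_1g_2$, $m_1=g_1m_1'$, $m_2=g_2m_2'$. Take true frequencies $f_i=c_0+i\,m_3$ for $0\le i<k$, with positive amplitudes. Taking $0\le c_0<m_3$, these lie below $km_3\le N$.
\begin{itemize}
\item For $k\le\min(m_1',m_2')$, their residues are pairwise distinct in views 1 and 2. Indeed, $m_1\mid (i-i')g_1g_2$ forces $m_1'\mid (i-i')$, and the same holds for $m_2$.
\item Any $F$ reconstructed from $(f_i\bmod m_1,\,f_j\bmod m_2)$ satisfies $F\equiv f_i\equiv c_0\pmod{g_1}$ and $F\equiv f_j\equiv c_0\pmod{g_2}$. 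Hence $F\equiv c_0\pmod{m_3}$.
\item So all $k^2$ pairs pass gating, and they are distinct candidates by CRT uniqueness.
\item Any extra zero bins picked by the top-$kc$ rule only add survivors.
\end{itemize}
This construction puts all $k$ tones in one view-3 bin. That is irrelevant here, because the corollary concerns the pure sparse path without certificates.
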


\needspace{14\baselineskip}
\begin{observation}[Practical Relevance of Non-Pairwise-Coprime Moduli]
\label{obs:practical-moduli}
The adversarial construction requires $m_3 \mid m_1 m_2$ with non-pairwise-coprime moduli. While pairwise coprime moduli avoid this attack, non-pairwise-coprime configurations arise naturally in practice when moduli must divide $N$ (to ensure exact decimation without spectral leakage from approximate resampling).

The non-pairwise-coprime regime is studied extensively in the robust CRT signal processing literature, where the shared factor is leveraged as redundancy rather than avoided. Wang and Xia~\cite{wang2010closedform} show that when moduli share a common factor $M$, remainder errors up to $M/4$ admit closed-form robust reconstruction, and Li, Liang, and Xia~\cite{li2009robust} apply this to frequency estimation from undersampled waveforms; multidimensional extensions~\cite{xiao2020mdcrt,xiao2023robust} establish analogous guarantees under matrix-valued moduli. The present work addresses a complementary question in the same structural regime: rather than robust reconstruction under remainder noise, we analyze worst-case algorithmic complexity under deterministic gating, showing that the shared-factor structure admits adversarial sparse constructions yielding $\Omega(k^2)$ candidate survivors.

\textbf{Example:} For $N = 2^{20} = 1{,}048{,}576$ with $\sqrt{N} = 1024$, all divisors of $N$ are powers of 2: $\{1, 2, 4, \ldots, 2^{20}\}$. It is impossible to find even two coprime non-trivial divisors near $\sqrt{N}$, let alone three pairwise coprime ones. Any three divisors $m_1, m_2, m_3 \approx 1024$ share the factor 2, satisfying $\gcd(m_i, m_j) \geq 2$ for all pairs.

More generally, for $N = p^e$ (prime power), all divisors share the factor $p$, and for $N = p_1^{e_1} p_2^{e_2}$ (two prime factors), at most two pairwise coprime divisors exist (one per prime). Three pairwise coprime divisors near $\sqrt{N}$ require $N$ to have at least three distinct prime factors---a constraint that excludes many practical signal lengths.

When exact decimation is required, practitioners face a choice:
\begin{itemize}
\item[(a)] use non-dividing prime moduli with approximate resampling and bounded spectral error, or
\item[(b)] use divisor-based moduli that may not be pairwise coprime.
\end{itemize}
The adversarial construction in \Cref{thm:adversarial-lower-bound} shows that option (b) creates exploitable algebraic structure, motivating either option (a) or the safety certificate framework as defense-in-depth.
\end{observation}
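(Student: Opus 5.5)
The plan is to treat the Observation as a collection of elementary number-theoretic claims about divisors of $N$, plus a bridge back to \Cref{thm:adversarial-lower-bound}. The core tool is a prime-support (pigeonhole) argument. For a nontrivial divisor $m \mid N$ with $m>1$, let $\mathrm{supp}(m)$ be the set of primes dividing $m$. This set is nonempty and is a subset of $\mathrm{supp}(N)$. Two divisors are coprime iff their supports are disjoint. So a family of $t$ pairwise coprime nontrivial divisors is a family of $t$ pairwise disjoint nonempty subsets of $\mathrm{supp}(N)$, which forces $t \le |\mathrm{supp}(N)| = \omega(N)$.

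I would then read off the three cases in order.
\begin{itemize}
\item For $N=p^e$ we have $\omega(N)=1$. Every divisor $m>1$ is divisible by $p$, so any two divisors near $\sqrt{N}$ (which are certainly $>1$) satisfy $\gcd(m_i,m_j)\ge p$. With $p=2$, $e=20$ this gives the stated $N=2^{20}$ example: every divisor near $1024$ is a power of $2$, and all pairwise gcds are at least $2$.
\item For $N=p_1^{e_1}p_2^{e_2}$ we have $\omega(N)=2$. So at most two pairwise coprime nontrivial divisors exist, and the bound is attained by $p_1^{a}$, $p_2^{b}$.
\item Three pairwise coprime nontrivial divisors therefore need $\omega(N)\ge 3$. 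This is exactly the sentence in the Observation. The qualifier ``near $\sqrt{N}$'' only makes the condition more restrictive, so the necessity direction needs no further work.
\end{itemize}

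Next, I would connect to the attack. Since divisor-based moduli may be forced to share factors, the relevant question is whether the hypotheses of \Cref{thm:adversarial-lower-bound}, namely $\gcd(m_1,m_2)=1$ and $m_3\mid m_1m_2$, are realizable by divisors of $N$. I would show that whenever $\omega(N)\ge 2$ with suitable exponents, one can pick $m_1=g_1m_1'$, $m_2=g_2m_2'$, $m_3=g_1g_2$ as in the construction, all dividing $N$. The Affine Structure lemma \ref{lem:crt-linearity} then applies verbatim, and so does the $\tfrac34k^2$ survivor count. This justifies calling option (b) ``exploitable.''

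The main obstacle is the pure prime-power case $N=2^{20}$. There even $\gcd(m_1,m_2)=1$ fails, so Garner reconstruction and \Cref{lem:crt-linearity} are not available as stated. Here I would not claim the theorem applies directly; the honest bound is that this case only illustrates that pairwise coprimality is unattainable. To get beyond that, I would first try an extension of the affine lemma to $\gcd(m_1,m_2)=g>1$. In that setting the pair $(r_1,r_2)$ is consistent only when $r_1\equiv r_2 \pmod g$, and the reconstruction is then affine modulo $\mathrm{lcm}(m_1,m_2)$ reduced mod $m_3$. If that extension resists, I would restrict the ``exploitable structure'' claim to the non-prime-power divisor configurations handled in the previous paragraph.
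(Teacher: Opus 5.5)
Your proposal is correct. On the number-theoretic content it matches the paper's own argument, which is given only inline: the phrases ``all divisors share the factor $p$'' and ``at most two pairwise coprime divisors exist (one per prime)'' are exactly what your disjoint-prime-support pigeonhole makes precise (at most $\omega(N)$ pairwise coprime divisors exceeding $1$).

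Where you genuinely depart from the paper is the bridge to \Cref{thm:adversarial-lower-bound}. The paper simply asserts that the theorem makes option (b) exploitable, implicitly attaching it to the $N=2^{20}$ example. You note, correctly, that $\gcd(m_1,m_2)=1$ cannot hold there. The paper's worked instance $1081,1073,667$ does not fill this gap either, since none of these divides $10^6=2^6 5^6$.

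You can resolve your open branch: the extension of \Cref{lem:crt-linearity} will not produce quadratic growth for prime powers. Powers of $2$ are totally ordered by divisibility, so if, say, $m_1\mid m_2$, a pair is consistent iff $r_1=r_2 \bmod m_1$. That allows at most $|R_2|\le kc$ consistent pairs. The real defect in that case is a different one: the combined reconstruction range is only $\max_i m_i$, leaving about $N/\max_i m_i\approx\sqrt N$ aliases per class.

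So take your restriction branch, with two precision fixes. First, the theorem's literal template, with $g_1,g_2,m_1',m_2'$ distinct primes, needs $\omega(N)\ge 4$. For $\omega(N)=2$, state explicitly that \Cref{lem:crt-linearity} and the survivor count use only $\gcd(m_1,m_2)=1$ and $m_3\mid m_1m_2$. Second, under the paper's standing requirement $m_1m_2\ge N$, coprimality forces $m_1=p_1^{e_1}$ and $m_2=p_2^{e_2}$. So ``suitable exponents'' really means $p_1^{e_1}\approx p_2^{e_2}\approx\sqrt N$, which fails, for example, for $N=10^6$.

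The paper's version buys brevity. Yours gives the observation's final sentence its correct scope: the $\Omega(k^2)$ attack on divisor-based moduli is available only when $N$ has two balanced coprime prime-power parts.
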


\subsection{Extension to Pairwise Coprime Moduli}

The adversarial construction in \Cref{thm:adversarial-lower-bound} exploits the algebraic consequence of the divisibility condition $m_3 \mid (m_1 m_2)$, under which the CRT reconstruction map
\begin{equation}
\phi(r_1, r_2) = \mathrm{CRT}(r_1, r_2) \bmod m_3
\end{equation}
becomes globally linear. This linearity enables aligned arithmetic progression constructions that produce $\Theta(k^2)$ surviving candidates after gating.

For pairwise coprime moduli, this divisibility condition no longer holds. As a result, the reconstruction map $\phi$ is no longer globally linear. Instead, the intermediate reduction modulo $m_1 m_2$ induces a piecewise affine structure, with the domain $[0, m_1) \times [0, m_2)$ partitioned into
\begin{equation}
O(N / (m_1 m_2)) = O(1)
\end{equation}
regions over which $\phi$ is affine. Critically, the number of such regions is a small constant.

\needspace{5\baselineskip}
\begin{conjecture}[Pairwise Coprime Candidate Growth]
\label{conj:pairwise-coprime}
For any triple of pairwise coprime moduli $m_1, m_2, m_3 = \Theta(\sqrt{N})$, there exist $k$-sparse signals producing superlinear candidate growth after 3-view CRT gating: $|\text{candidates}| = \omega(k)$.
\end{conjecture}

\textbf{Partial Evidence.} Consider pairwise coprime moduli with $m_1 m_2$ slightly exceeding $N$. CRT reconstruction yields $f = r_1 M_1 + r_2 M_2 \bmod m_1 m_2$, where $M_1 = m_2 (m_2^{-1} \bmod m_1)$ and $M_2 = m_1 (m_1^{-1} \bmod m_2)$. The predicted gating residue is:
\begin{equation}
\phi(r_1, r_2) = (r_1 M_1 + r_2 M_2 \bmod m_1 m_2) \bmod m_3
\end{equation}

The intermediate reduction modulo $m_1 m_2$ partitions the domain $[0, m_1) \times [0, m_2)$ into at most $\lceil (m_1 M_1 + m_2 M_2) / (m_1 m_2) \rceil$ affine regions, within each of which $\phi$ behaves linearly.

Within any such region, the aligned arithmetic progression construction from \Cref{thm:adversarial-lower-bound} applies directly. An adversary can therefore select $k$ frequencies whose residue pairs lie within a single affine region, achieving $\Omega(k^2)$ surviving candidates from that region, provided $k$ is sufficiently small relative to the region size.

Characterizing:
\begin{itemize}
\item the maximum $k$ for which such concentration is feasible, and
\item whether constructions spanning multiple affine regions can achieve comparable growth,
\end{itemize}
remain open problems.

\subsection{Why Amplitude/Phase Filtering Also Fails}

\begin{observation}[Amplitude/Phase Filtering Vulnerability]
\label{obs:phase-vulnerability}
Under the same non-pairwise-coprime moduli configuration $m_3 \mid (m_1 m_2)$, consider deterministic filtering strategies that combine:
\begin{itemize}
\item CRT reconstruction from two views, and
\item consistency checks based on amplitude or phase derived from decimated DFTs,
\end{itemize}
under the following assumptions:
\begin{enumerate}
\item Each residue bin in views 1 and 2 contains exactly one true frequency (singleton bins)
\item The adversary controls both the frequency locations and the complex amplitudes $X[f]$
\item Phase consistency is evaluated using the decimated DFT phase model below
\end{enumerate}
Then any such separable filtering rule admits $\Omega(k^2)$ surviving candidates under adversarial construction.
\end{observation}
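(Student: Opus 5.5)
The plan is to reuse the construction from \Cref{thm:adversarial-lower-bound} and add amplitudes and phases on top of it. The frequency placement stays the same: arithmetic progressions $R_1=\{a_i\}$ and $R_2=\{b_j\}$ with step sizes linked by $u\,d_A\equiv v\,d_B \pmod{m_3}$, true frequencies $f_i=\mathrm{CRT}(a_i,b_{\pi(i)})$, and $R_3$ chosen so that, by \Cref{lem:crt-linearity}, $\Theta(k^2)$ pairs $(a_i,b_j)$ pass residue gating. The new work is to show that any additional separable amplitude or phase test can be made to accept all of these pairs. First I will fix what a separable rule is. It accepts the pair $(r_1,r_2)$ when $\Psi(\xi_1(r_1),\xi_2(r_2))$ holds, where $\xi_i(r)=X_{d_i}[r]$ is the decimated DFT value in bin $r$. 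Under assumption (1) each bin is a singleton, so Theorem~\ref{thm:decimation-aliasing} gives $X_{d_i}[r]=X[f]/d_i$ up to a fixed normalization and a known phase factor, for the unique true $f$ in bin $r$.

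The key step uses assumption (2): the adversary chooses all $X[f_i]$ to have the same complex value $A$, with the phase chosen to cancel the deterministic phase factor in the decimated model of assumption (3), which depends only on $d_i$ and the sample offset. Then every occupied bin in view 1 carries the same observation $\xi_1\equiv A/d_1$, and every occupied bin in view 2 carries $\xi_2\equiv A/d_2$. The true pair $(a_0,b_{\pi(0)})$ must be accepted, or the filter rejects a true frequency and is not a valid filter. Since every pair $(a_i,b_j)$ produces exactly the same input to $\Psi$, every pair is accepted whenever that true pair is. Intersecting with residue gating, the survivors are exactly the $\sum_{d\in S}(k-|d|)\ge 3k^2/4$ pairs already counted in \Cref{thm:adversarial-lower-bound}. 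These candidates are distinct by CRT uniqueness modulo $m_1m_2$, which gives the $\Omega(k^2)$ bound.

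A more general rule might also compare view-3 data $\xi_3(\phi(r_1,r_2))$. I will handle this the same way. Assumption (1) is stated only for views 1 and 2, but in the construction each residue in $R_3$ receives its true frequency, and the amplitudes are uniform, so every $\xi_3$ value equals $A/d_3$ times a cancelled phase. The triple of observations is then constant across all gated pairs.

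The step I expect to be the main obstacle is making the phase model precise enough that the cancellation is legitimate. If the decimated phase depends on $f$ itself, through an offset term $e^{-j2\pi f n_0/N}$ for instance, the adversary instead sets $\arg X[f_i]$ to cancel it frequency by frequency. This is possible because each $f_i$ is a separate degree of freedom, so the observations are again constant. Under the pure stride-$d$ decimation of \Cref{def:decimation} there is no offset and no phase correction is needed. I will present that case and remark that an offset only requires the per-frequency phase choice. A secondary check is that uniform amplitudes do not violate the singleton-bin hypothesis, which follows because the progressions have distinct residues, bucket occupancy $1$, as noted in the proof of \Cref{thm:certificate-design}.
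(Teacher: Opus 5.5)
Your plan follows the paper's sketch. You reuse the arithmetic-progression construction, use control of $X[f]$ to make the per-bin observations uniform, and count survivors as in \Cref{thm:adversarial-lower-bound}. In the pure stride-$d$ model, where a singleton bin gives $X_{d_i}[r]=X[f]/d_i$, your argument that every pair feeds $\Psi$ the same input as a true pair is sound. The problem is that assumption (3) fixes a different model, $\angle X_{d_i}[r]=\angle X[f]+\kappa_i f$ with $\kappa_i=\pi(d_i-1)/N$, and under it your key step fails. The offset depends on the view. So the two observations of a true frequency differ in phase by $\kappa f$, with $\kappa=\kappa_1-\kappa_2=\pi(d_1-d_2)/N\neq 0$, whatever $X[f]$ is. Setting $\angle X[f_i]=\alpha-\kappa_1 f_i$ makes view 1 constant, but view 2 then carries $\alpha-\kappa f_i$, which varies with $i$. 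One free phase per frequency cannot satisfy two constraints, so ``each $f_i$ is a separate degree of freedom'' only handles an offset shared by both views. Retreating to pure decimation proves the claim for a phase model other than the one the statement assumes. (The paper's sketch asserts that view 2 can be made constant ``similarly'' and meets the same obstruction.)

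The second gap is the rule class. You take a separable rule to be $\Psi(\xi_1(r_1),\xi_2(r_2))$, with no dependence on the candidate. The phase-consistency check in the statement instead compares the observed difference with a \emph{predicted} value $\kappa f+\gamma$ at the reconstructed $f$. The paper rewrites this via \Cref{lem:crt-linearity} as $a r_1+b r_2+\gamma$, and that dependence on the residues is what ``separable'' means there. For such rules, identical observations across pairs do not force identical decisions, because the predicted value changes from pair to pair. The paper adds an ingredient with no counterpart in your proposal. It further constrains the step sizes, $a d_A\equiv b d_B\equiv 0 \pmod{2\pi}$ on top of $u d_A\equiv v d_B \pmod{m_3}$, so that the predicted phase is the same for every pair.

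Under assumption (3) the two gaps are linked. True pairs pass automatically, so equal observed differences across pairs would force $\kappa f_i$ to be constant mod $2\pi$. That is a condition on where the frequencies sit, and no choice of amplitudes can supply it.

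A smaller point concerns your view-3 extension, which assumes each residue of $R_3$ holds exactly one true frequency. The displacements $i-\pi(i)$ of a permutation sum to $0$, while the paper's $S$ for $k=12$ sums to $6$. Hence some residue of $R_3$ is empty while another holds two frequencies, and $\xi_3$ is not constant. A displacement-injective pairing such as $\pi(i)=k-1-i$ repairs this and still leaves at least $k^2/2$ survivors.
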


\begin{proof-sketch}
\textbf{Phase Model (under singleton assumption).} When residue bin $r_i = f \bmod m_i$ in view $i$ contains exactly one true frequency $f$, the decimated DFT coefficient has phase:
\begin{equation}
\angle X_{d_i}[r_i] = \angle X[f] + 2\pi f \cdot \frac{d_i - 1}{2N}
\end{equation}
where $d_i = N/m_i$ is the decimation stride and the second term arises from the decimation offset. For a candidate frequency $f$ reconstructed from $(r_1, r_2)$ via Garner's formula~\cite{garner1959residue}, the expected phase difference between views is (thus, the phase consistency check decomposes as a separable function of $r_1$ and $r_2$):
\begin{equation}
\Delta\phi(f) = \angle X_{d_1}[r_1] - \angle X_{d_2}[r_2] = 2\pi f \cdot \frac{d_1 - d_2}{2N} = \kappa f + \gamma
\end{equation}
where $\kappa = \pi(d_1 - d_2)/N$ and $\gamma$ absorbs constant terms. This is affine in $f$.

\textbf{Separability.} By \Cref{lem:crt-linearity}, under $m_3 \mid m_1 m_2$, Garner's formula gives $f \equiv u \cdot r_1 + v \cdot r_2 \pmod{m_3}$ for constants $u = (1 - m_1 \gamma) \bmod m_3$ and $v = m_1 \gamma \bmod m_3$. Substituting into the phase consistency check:
\begin{align}
\Delta\phi(r_1, r_2) &= \kappa(u \cdot r_1 + v \cdot r_2) + \gamma \nonumber \\
&= (\kappa u) \cdot r_1 + (\kappa v) \cdot r_2 + \gamma \nonumber \\
&\triangleq a \cdot r_1 + b \cdot r_2 + \gamma
\end{align}
Thus, the phase consistency check decomposes as a separable function of $r_1$ and $r_2$.

\textbf{Adversarial Construction.} Use the same arithmetic progression construction as \Cref{thm:adversarial-lower-bound}: $R_1 = \{a_0 + i \cdot d_A \bmod m_1\}$, $R_2 = \{b_0 - j \cdot d_B \bmod m_2\}$. For each true frequency $f_\ell$ in the construction, set its amplitude $|X[f_\ell]| = A$ (equal amplitude). The phase at each residue bin is determined by the superposition of aliased frequencies; since the construction ensures each residue bin contains exactly one true frequency, the phase is well-defined per bin.

For any false pair $(a_i, b_j)$ reconstructing to candidate $f' = u a_i + v b_j$, the expected phase difference is:
\begin{equation}
\Delta\phi_{\text{expected}} = a \cdot a_i + b \cdot b_j + \gamma
\end{equation}
\noindent\textit{Scope summary.} Under non-pairwise-coprime moduli, any deterministic filtering strategy whose decision rule decomposes as a separable function of residue pairs---residue gating, amplitude comparison, or phase consistency checks---is fundamentally vulnerable to adversarial constructions that induce $\Omega(k^2)$ candidate growth~\cite{iwen2013improved,plonka2016acha,bittens2017acha}. Whether non-separable filtering strategies, or pairwise coprime moduli where separability breaks down, can avoid such worst-case behavior remains an open question (see Conjecture~VI.6)~\cite{hassanieh2012nearly,hassanieh2012simple,donoho2006compressed}.

The observed phase difference from the decimated DFTs is:
\begin{align}
\Delta\phi_{\text{observed}} &= \angle X_{d_1}[a_i] - \angle X_{d_2}[b_j] \notag \\
&= (a \cdot a_i + \gamma_1) - ((-b) \cdot b_j + \gamma_2)
\end{align}
where $\gamma_1, \gamma_2$ are the phases of the true frequencies mapping to $a_i$ and $b_j$ respectively. Since each true frequency $f_\ell$ has the same amplitude $A$, and each bin contains one frequency, the phases are:
\begin{align}
\angle X_{d_1}[a_i] &= \angle X[f_{\sigma(i)}] + \kappa_1 f_{\sigma(i)} \\
\angle X_{d_2}[b_j] &= \angle X[f_{\pi(j)}] + \kappa_2 f_{\pi(j)}
\end{align}
for mappings $\sigma, \pi$ from residue indices to frequency indices. Under non-pairwise-coprime moduli, any deterministic filtering strategy whose decision rule decomposes as a separable function of residue pairs---such as residue gating, amplitude comparison, or phase consistency checks---is vulnerable to adversarial constructions that induce $\Omega(k^2)$ candidate growth. Whether non-separable filtering strategies, or pairwise-coprime moduli where separability breaks down, can avoid such worst-case behavior remains an open question.

The adversary sets $\angle X[f_\ell] = -\kappa_1 f_\ell + \alpha$ for a constant $\alpha$, making $\angle X_{d_1}[a_i] = \alpha$ for all $i$ (constant phase in view 1). Similarly set phases to make view 2 constant. Then $\Delta\phi_{\text{observed}} = \alpha - \beta$ is constant for all pairs, while $\Delta\phi_{\text{expected}} = a \cdot a_i + b \cdot b_j + \gamma$ varies. The adversary can match these by choosing $d_A, d_B$ such that $a \cdot d_A = b \cdot d_B = 0 \pmod{2\pi}$, yielding constant expected phase for all pairs.

\looseness=-1
\textbf{Survivor Count.} With both observed and expected phases constant across all pairs, every false pair passes the phase consistency check. By the same counting argument as \Cref{thm:adversarial-lower-bound}, this yields $\Omega(k^2)$ survivors.
\end{proof-sketch}

\textbf{Implication:} Under non-pairwise-coprime moduli, simple deterministic filters (residue gating, amplitude/phase matching) based on separable functions of the residue pair are vulnerable to $\Omega(k^2)$ adversarial constructions. Whether analogous constructions exist for pairwise coprime moduli, where separability breaks down, remains an open question (see \Cref{conj:pairwise-coprime}).

While phase-based filtering alone is insufficient to prevent adversarial constructions (\Cref{obs:phase-vulnerability}), it provides an additional consistency check for typical signals and may help identify spurious candidates in non-adversarial settings.

\section{Main Theorem and Complexity}
\label{sec:main-theorem}

Having established the adversarial lower bound of $\Omega(k^2)$ candidate growth for deterministic CRT gating, we now analyze the hybrid algorithm's complexity. The key result is that the hybrid approach achieves:
\begin{itemize}
\item sublinear FFT cost via decimation,
\item efficient recovery for typical sparse signals, and
\item worst-case optimality via adaptive fallback.
\end{itemize}
Specifically, the algorithm attains $O(\sqrt{N} \log N + kN)$ for typical signals and $O(N \log N)$ in the worst case. The analysis proceeds by isolating the FFT phase, the candidate generation phase, and the validation phase.

\subsection{FFT Phase Complexity}

\begin{theorem}[FFT Complexity Reduction]
\label{thm:fft-reduction}
The 3-view decimation FFT phase achieves $O(\sqrt{N} \log N)$ complexity.
\end{theorem}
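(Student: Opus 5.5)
The plan is to decompose the FFT phase (Phase 1 of \Cref{alg:hybrid}) into its three constituent operations for each view $i \in \{1,2,3\}$: decimation, a length-$m_i$ FFT, and top-$kc$ residue selection. I will bound each operation separately and then sum over the three views. The standing assumption, as in \Cref{cor:optimal-decimation}, is that $m_i = N/d_i \approx \sqrt{N}$, meaning $m_i \le C\sqrt{N}$ for an absolute constant $C$. The same accounting works whether or not the moduli are pairwise coprime, since coprimality plays no role in the cost of this phase.

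\textbf{Decimation.} By the decimation function analysis, forming $x_{d_i}$ is a single strided pass that extracts $N/d_i = m_i$ samples. This costs $O(m_i) = O(\sqrt{N})$. The one point to state explicitly is that only the retained samples are touched, so the cost is not $O(N)$.

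\textbf{FFT.} A length-$m_i$ FFT costs $O(m_i \log m_i)$. With $m_i \le C\sqrt{N}$, this is at most $C\sqrt{N}\,(\log C + \tfrac12 \log N) = O(\sqrt{N}\log N)$, exactly as computed in \Cref{cor:optimal-decimation}. If $m_i$ is not a power of two, I would invoke a general-length FFT (mixed-radix, or Bluestein/Rader) to retain the $O(m \log m)$ bound. I expect this to be the only mildly delicate point, because moduli such as $1081 = 23\cdot 47$ have large prime factors.

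\textbf{Top-$kc$ selection.} Heap-based selection of the top $kc$ of $m_i$ magnitudes costs $O(m_i \log(kc))$. Under the regime $k \ll \sqrt{N}$ and $c = O(1)$, we have $\log(kc) \le \log N$, so this is $O(\sqrt{N}\log N)$. Alternatively, a linear-time selection algorithm gives $O(\sqrt{N})$ directly.

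\textbf{Summing.} Adding the three bounds for each of the three views gives $3\cdot O(\sqrt{N}\log N) = O(\sqrt{N}\log N)$, which proves the claim. As a concluding remark I would note that this bound is sublinear in $N$, and that the phase's cost is independent of $k$ apart from the selection step.
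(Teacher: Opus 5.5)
Your proposal is correct and matches the paper's proof, which likewise bounds each view by $O(N/d_i)=O(\sqrt{N})$ for the strided decimation plus $O(m_i\log m_i)=O(\sqrt{N}\log N)$ for the length-$m_i$ FFT, and then sums over the three views. Your accounting for the top-$kc$ selection and your remark that a mixed-radix or Bluestein/Rader FFT is needed to keep the $O(m\log m)$ bound for lengths such as $1081=23\cdot 47$ are sound refinements. The paper's proof leaves both implicit; it tallies selection only later, in the total-complexity phase breakdown.
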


\begin{proof}
\textbf{View 1:}
\begin{itemize}
\item Decimation: $O(N/d_1)$ = $O(\sqrt{N})$ to extract $N/d_1$ samples with stride $d_1$
\item FFT: $O(m_1 \log m_1)$ where $m_1 = N/d_1 \approx \sqrt{N}$
\item Total: $O(\sqrt{N})$ + $O(\sqrt{N} \times \frac{1}{2} \log N)$ = $O(\sqrt{N} \log N)$
\end{itemize}

\textbf{View 2:} Similarly $O(\sqrt{N} \log N)$

\textbf{View 3 (gating):} Similarly $O(\sqrt{N} \log N)$

\textbf{Combined:} 3 $\times$ $O(\sqrt{N} \log N)$ = $O(\sqrt{N} \log N)$
\end{proof}

\subsection{Expected-Case Gating Behavior}

\begin{theorem}[Gating Reduces Expected Candidates]
\label{thm:gating-expected-case}
Assume that false CRT residue pairs produce values $r_3' = f \bmod m_3$ that are approximately uniformly distributed over $[0, m_3)$. Then the expected number of candidates after 3-view gating is:
\begin{equation}
\mathbb{E}[|\text{candidates}|] = \Theta\!\left(k + \frac{k^3 c^3}{\sqrt{N}}\right).
\end{equation}

Note: This is an average-case bound only. \Cref{thm:adversarial-lower-bound} proves that worst-case is $\Omega(k^2)$.
\end{theorem}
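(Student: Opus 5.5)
The proof will be a linearity-of-expectation count that splits the reconstructed pairs into true and false ones. In Phase~1 of Algorithm~\ref{alg:hybrid} each view keeps its top-$kc$ residues, so $|R_i| = kc$ for $i=1,2,3$. Phase~3 therefore forms exactly $|R_1|\cdot|R_2| = k^2c^2$ residue pairs, and each pair yields one Garner candidate $f = \mathrm{Garner}(r_1,r_2,m_1,m_2)$ by \Cref{lem:garner-3view}.

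The first step handles the true pairs. In the on-grid noiseless model, \Cref{thm:decimation-aliasing} places each true frequency $f^\star$ at residue $f^\star \bmod m_i$ in view $i$. I will assume $c \ge 1$ captures all of these, following \Cref{obs:coverage}. Then the pair $(f^\star \bmod m_1, f^\star \bmod m_2)$ reconstructs $f^\star$, because $m_1 m_2 \ge N$ and CRT (\Cref{thm:crt-reconstruction}) is unique. The reconstructed value satisfies $f^\star \bmod m_3 \in R_3$, so it passes gating deterministically. This gives at least $k$ and at most $k$ surviving true candidates, which is the $\Theta(k)$ term. It also supplies the lower bound, since $\mathbb{E}|\mathcal{C}| \ge k$.

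The second step handles the false pairs. The remaining $k^2c^2 - k$ pairs produce spurious $f$. Under the stated uniformity hypothesis, each such $f$ has $r_3' = f \bmod m_3$ approximately uniform on $[0,m_3)$. It survives with probability $|R_3|/m_3 = kc/m_3 = \Theta(kc/\sqrt{N})$, using $m_3 = \Theta(\sqrt N)$. Summing indicators gives
\begin{equation*}
\mathbb{E}|\mathcal{C}| = k + (k^2c^2 - k)\cdot \Theta\!\left(\frac{kc}{\sqrt N}\right) = \Theta\!\left(k + \frac{k^3c^3}{\sqrt N}\right).
\end{equation*}
Linearity of expectation is all that is needed, so no independence between pairs is required. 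For the lower-bound side of the $\Theta$, I note that $k^2c^2 - k \ge k^2c^2/2$ once $kc \ge 2$; the trivial case $k=c=1$ is absorbed by the $k$ term.

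The main obstacle is justifying the $\Theta$ rather than an $O$ under an assumption that is only approximate. I would make ``approximately uniform'' precise as requiring each survival probability to lie within constant factors of $1/m_3$ per residue. Under that reading, the upper and lower bounds both follow directly. Distinct pairs give distinct $f$ by CRT uniqueness modulo $m_1m_2$, so counting pairs and counting candidates agree. I would also remark that this is consistent with \Cref{thm:adversarial-lower-bound}: the adversarial moduli violate uniformity, because there $\phi$ is affine by \Cref{lem:crt-linearity}.
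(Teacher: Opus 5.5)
Your proposal is correct and follows essentially the same argument as the paper: the $k$ true pairs always pass gating, and each of the $(kc)^2-k$ false pairs survives with probability $\Theta(kc/m_3)=\Theta(kc/\sqrt{N})$, with everything summed by linearity of expectation. You also spell out points the paper leaves implicit: the coverage assumption that puts every true residue in each $R_i$, CRT uniqueness so that pairs and candidates coincide, and the $k^2c^2-k\ge k^2c^2/2$ step for the lower half of the $\Theta$. One small correction to your closing remark: it is the adversarially aligned residue sets that break uniformity, not the moduli or the affine map on their own.
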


\begin{proof}
\textbf{Expected-Case Analysis (Typical Signals):}

Assume residues in $R_1, R_2, R_3$ are distributed such that false CRT pairs produce $r_3'$ values approximately uniformly over $[0, m_3)$.

\begin{itemize}
\item True frequencies: $k$ (all pass gating)
\item False pairs: $|R_1| \times |R_2| - k = (kc)^2 - k$
\item Probability false pair passes: $P(r_3' \in R_3) = \Theta(kc/m_3) = \Theta(kc/\sqrt{N})$
\item Expected false survivors: $(k^2c^2 - k) \times \frac{kc}{\sqrt{N}} = \Theta(k^3c^3/\sqrt{N})$
\end{itemize}

Total expected candidates: $k + \frac{k^3c^3}{\sqrt{N}}$

\textbf{Numerical Example (Typical Signal):}
For $N=10^6$, $k=12$, $c=2$:
\begin{align}
\mathbb{E}[\text{candidates}] &\approx 12 + \frac{12^3 \times 8}{1000} \\
&\approx 12 + 14 = 26 \text{ candidates}
\end{align}
Goertzel cost: $26M$ operations (vs $20M$ dense FFT).

\textbf{Limitation.} This bound is not a worst-case guarantee. \Cref{thm:adversarial-lower-bound} shows that adversarial constructions can produce
\begin{equation}
|\text{candidates}| = \Omega(k^2).
\end{equation}
Thus, the expected-case analysis does not bound candidate growth in the worst case.
\end{proof}

\begin{theorem}[Worst-Case Differs Dramatically]
\label{thm:worst-vs-expected}
The gap between expected-case and worst-case is multiplicative:
\begin{align}
\text{Expected-case candidates:} & \quad k + O(k^3/\sqrt{N}) \approx 26 \\
\text{Worst-case candidates:} & \quad \Theta(k^2) \approx 108
\end{align}
for $N=10^6$, $k=12$, representing a $4.15\times$ gap.
\end{theorem}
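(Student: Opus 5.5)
The statement is essentially a comparison of two results already proved, specialized to the parameters $N = 10^6$, $k = 12$, $c = 2$. I will assemble it from \Cref{thm:gating-expected-case} and \Cref{thm:adversarial-lower-bound} and then evaluate both expressions numerically.

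\textbf{Expected-case side.} I will invoke \Cref{thm:gating-expected-case} under its uniformity assumption on false-pair residues. It gives
\[
\mathbb{E}[|\text{candidates}|] = k + \Theta(k^3c^3/\sqrt{N}).
\]
With $c = O(1)$ absorbed, this is $k + O(k^3/\sqrt{N})$. I will then plug in the numbers: $k^3c^3 = 1728 \cdot 8 = 13824$ and $\sqrt{N} = 1000$, so the correction term is about $13.8$. That yields roughly $12 + 14 = 26$, matching the numerical example already worked in the proof of that theorem.

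\textbf{Worst-case side.} I will use the adversarial construction of \Cref{thm:adversarial-lower-bound} with $m_3 \mid m_1 m_2$, for instance the moduli $1081, 1073, 667$ given there. It produces $\sum_{d \in S}(k - |d|) \geq 3k^2/4$ survivors. I will also note the trivial upper bound $|R_1||R_2| = O(k^2)$ on candidates (with $c = O(1)$), which is what justifies writing $\Theta(k^2)$. For the constants I will cite the concrete computation with $S = \{0, \pm1, \ldots, \pm5, +6\}$, which gives exactly $108$ survivors.

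\textbf{The ratio.} This is then a one-line division: $108/26 \approx 4.15$. Asymptotically the ratio is
\[
\frac{\Theta(k^2)}{k + \Theta(k^3/\sqrt{N})},
\]
which equals $\Theta(k)$ when $k \ll N^{1/3}$. That is the sense in which the gap is ``multiplicative'' and grows with $k$.

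\textbf{Main obstacle.} The difficulty is not any computation but the fact that the two regimes are not directly comparable. The expected-case count is an expectation under a heuristic uniformity model. The worst case is a deterministic count under a specific non-pairwise-coprime moduli family, and in that family uniformity is exactly what fails. So the statement should be read as comparing the typical-case model estimate against the adversarial construction at the same $(N, k, c)$. I would make this explicit, and I would state the $4.15\times$ figure as approximate, since the $\Theta$ in the expected case hides constants and $26$ is itself a rounding of $25.8$.
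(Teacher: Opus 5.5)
Your route is the paper's route. The paper gives no separate argument for this statement: it is simply the quotient of the worked example in \Cref{thm:gating-expected-case} ($12+13.8\approx 26$) by the $k=12$ example in \Cref{thm:adversarial-lower-bound} ($108$). Your added upper bound $|R_1||R_2|$ to justify the $\Theta$ is a small improvement. Two of your supporting claims are wrong, though. The denominator $k+k^3c^3/\sqrt{N}$ is $\Theta(k)$ only when $k^2c^3\lesssim\sqrt{N}$, i.e.\ $k\lesssim N^{1/4}$, not $k\ll N^{1/3}$. For larger $k$ the ratio is $\Theta(\sqrt{N}/k)$ and \emph{decreases} with $k$. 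Your own numbers show this: at $k=12\ll N^{1/3}=100$ the correction $13.8$ already exceeds $k$, since $k^2c^3=1152>1000$. So the stated parameters sit just past the point where ``grows with $k$'' stops holding. Also, the two figures are not at the same $(N,k,c)$. The $26$ uses $|R_i|=kc=24$, while the $108$ uses $|R_1|=|R_2|=k$, i.e.\ effectively $c=1$, where the paper's own expected figure is about $14$.

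The more serious problem is the ``exactly $108$'' you intend to cite: it does not survive being carried out. The displacements $i-\pi(i)$ of a bijection sum to $0$, but $S=\{0,\pm1,\dots,\pm5,+6\}$ sums to $6$. Hence no pairing of $k$ true tones produces $R_3=\{\mathrm{base}+sL: s\in S\}$. Worse, the linking condition itself degenerates. Write $m_3=h_1h_2$ with $h_1\mid m_1$, $h_2\mid m_2$ (here $h_1=g_1$, $h_2=g_2$). Then $v\equiv0$, $u\equiv1\pmod{h_1}$ and $v\equiv1$, $u\equiv0\pmod{h_2}$, so $u^2\equiv u$ and $uv\equiv0\pmod{m_3}$. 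It follows that $L\equiv ud_A\equiv u(ud_A)\equiv u(vd_B)\equiv0\pmod{m_3}$. So $\phi(a_i,b_j)\equiv\mathrm{base}$ for every pair, nothing depends on $i-j$, and all $k^2=144$ pairs pass gating. The $\Theta(k^2)$ claim therefore holds; indeed the count is exactly $k^2$, meeting your upper bound. But the constant $3/4$, the value $108$ and the ratio $4.15$ are artifacts of the paper's difference-set count. What your argument can legitimately establish is the asymptotic comparison. The $4.15\times$ figure can only be reported as the paper's arithmetic, not derived from the construction, which at matched $c=1$ gives roughly $144/14\approx10$.
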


\textbf{Consequence.} Pure sparse CRT reconstruction is inherently unstable due to unbounded candidate growth in adversarial settings, leading to $O(k^2 N)$ worst-case cost. The hybrid algorithm resolves this instability by explicitly detecting candidate explosion and enforcing fallback, guaranteeing $O(N \log N)$ worst-case complexity while preserving efficiency for well-behaved signals.

\subsection{Hybrid Algorithm Complexity}

\begin{theorem}[Hybrid Algorithm -- Informal]
\label{thm:hybrid-informal}
The hybrid algorithm achieves a dual bound: for typical sparse signals, it runs faster than dense FFT ($O(\sqrt{N} \log N + kN)$ versus $O(N \log N)$), while for adversarial signals where the sparse path fails, it switches to dense FFT and matches its $O(N \log N)$ performance. The hybrid algorithm is therefore never asymptotically worse than dense FFT, and is faster for well-behaved signals.
\end{theorem}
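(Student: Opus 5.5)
The plan is a case split on which branch of Algorithm~\ref{alg:hybrid} executes, bounding the cost of each branch by the phase-wise lemmas already proved. Every branch first pays for Phase~1. By \Cref{thm:fft-reduction} this costs $O(\sqrt{N}\log N)$: three decimations and three FFTs of length $m_i \approx \sqrt{N}$.

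\textbf{Phase 2.} Computing the maximum bucket occupancy over three sets of size $kc$ costs $O(kc)$ with hashing, which is $O(k)$ for $c=O(1)$.

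\textbf{Phase 3.} Garner reconstruction and gating (\Cref{lem:garner-3view}) cost $O(\log m_2)$ for the modular inverse plus $O(1)$ per pair. That gives $O((kc)^2)$ in total. Under the standing regime $k \ll \sqrt{N}$, this is $o(N)$ and so lower order.

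The cost of a run is then the sum of these prefix costs plus whichever terminal branch fires.

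\textbf{Case 1: fallback.} Either certificate fails (bucket occupancy $>3$ in Phase~2, or $|C|>3k$ in Phase~3b). The algorithm then runs a dense FFT in $O(N\log N)$ and a heap-based TopK over $N$ values in $O(N\log k)\subseteq O(N\log N)$. Adding the prefix gives $O(\sqrt{N}\log N + k^2 + N\log N) = O(N\log N)$, using $k^2 \le N$.

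\textbf{Case 2: sparse path succeeds.} Both certificates pass, so $|C|\le 3k$ deterministically. By \Cref{lem:goertzel}, Phase~4 costs $O(|C|\,N) = O(kN)$. The final TopK over at most $3k$ items is $O(k\log k)$. The total is $O(\sqrt{N}\log N + kN)$.

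The important point in Case~2 is that the bound comes from the certificate threshold itself, not from the average-case \Cref{thm:gating-expected-case}. That theorem is used only to argue that typical (non-adversarial) signals land in Case~2.

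Combining the cases gives the dual bound: $O(\sqrt{N}\log N + kN)$ whenever certificates pass, and $O(N\log N)$ otherwise. Correctness in both branches follows from the Adaptive Correctness corollary.

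The main obstacle is the claim that the hybrid is \emph{never asymptotically worse} than dense FFT. In Case~2 the bound $O(kN)$ exceeds $O(N\log N)$ whenever $k \gg \log N$, so the worst case over all runs is really $O(N\log N + \min(k,\cdot)N)$, not a clean $O(N\log N)$.

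I would handle this by making the regime explicit: assume $k = O(\log N)$, consistent with the paper's note that the advantage requires $k\ll\log N$. Under that assumption $kN = O(N\log N)$ and the worst-case claim holds.

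Alternatively, I would tighten the Phase~3b threshold to $\min(3k, \lfloor \log N\rfloor)$, so that passing the certificate caps Goertzel cost at $O(N\log N)$ unconditionally. I would state the theorem under the first option and remark on the second.
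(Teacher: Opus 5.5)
Your proposal is correct. It uses the same branch-by-branch split as the paper's proof of the formal version (\Cref{thm:hybrid-complexity}), but it takes the sparse-path bound from a different source, and that difference matters.

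The paper bounds a certificate-passing run through the expected count $\Theta(k + k^3c^3/\sqrt{N})$ of \Cref{thm:gating-expected-case}. So its $O(\sqrt{N}\log N + kN)$ holds only in expectation under the heuristic uniformity assumption, and its $O(N\log N)$ worst case is argued from the fallback branch alone. You instead bound every passing run deterministically by the Phase~3b cap $|C| \le 3k$, and use the average-case theorem only to argue that typical inputs reach that branch. The paper's route gives a sharper candidate count for typical inputs. Yours gives a genuine worst-case statement, and it exposes what the paper leaves implicit: a passing run costs $\Theta(kN)$, which is $O(N\log N)$ only if $k = O(\log N)$.

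The paper relies on this regime elsewhere but omits it from the statement. Examples are the phase breakdown's ``regime $k = O(\log N)$ of interest'' and the $k \le \log N$ ratio comparison. Its own example with $N = 10^6$, $k = 12$, $c = 2$ (about $26 \le 3k$ candidates, roughly $26\times 10^6$ operations against $20\times 10^6$ for dense FFT) is a certificate-passing run that is slower than dense FFT. So your extra hypothesis, or the cap $\min(3k, \lfloor \log N \rfloor)$, is a needed repair rather than a stylistic choice. The cap makes the worst case unconditional by forcing fallback whenever more than $\lfloor \log N \rfloor$ candidates survive. That loses nothing asymptotically, since such runs would already spend $\Omega(N\log N)$ on Goertzel validation.

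Two small points remain. First, the \emph{faster} half of the statement needs $k = o(\log N)$, not just $k = O(\log N)$, so state it separately. Second, since the paper insists on determinism, compute bucket occupancies with a direct-address array of size $m_i$ rather than hashing; this costs $O(\sqrt{N})$, absorbed into Phase~1.
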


\begin{theorem}[Hybrid Algorithm Bounds -- Formal]
\label{thm:hybrid-complexity}
The hybrid sparse-dense FFT algorithm achieves:

\textbf{Average-Case:}
\begin{equation}
\mathbb{E}[C_{\text{hybrid}}] = O(\sqrt{N} \log N + kN)
\end{equation}
under the heuristic distributional assumption of \Cref{thm:gating-expected-case} and conditional on certificate acceptance, with $c = O(1)$ (algorithm default: $c=2$).

\textbf{Worst-Case:}
\begin{equation}
C_{\text{hybrid}}^{\text{worst}} = O(N \log N)
\end{equation}
matching dense FFT via adaptive fallback.
\end{theorem}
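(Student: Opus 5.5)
The plan is to bound the cost of Algorithm~\ref{alg:hybrid} phase by phase. The total cost is the sum of the phases actually executed, so the worst-case bound reduces to a case split on which exit the algorithm takes. The expected-case bound reduces to bounding the cost of the accepting path using the candidate count from \Cref{thm:gating-expected-case}.

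\textbf{Phases common to every execution.} Phase~1 costs $O(\sqrt{N}\log N)$ by \Cref{thm:fft-reduction}; the top-$kc$ selection per view adds $O(\sqrt{N}\log(kc))$ by heap selection. Phase~2 is a histogram over the $3kc$ selected residues, costing $O(kc)$. Phase~3 enumerates $|R_1||R_2|=(kc)^2$ pairs. Each pair costs $O(1)$ by \Cref{lem:garner-3view}, provided $R_3^{\text{res}}$ is stored in a boolean array of length $m_3$; building that array costs $O(\sqrt{N})$. Phase~3 therefore costs $O(k^2c^2+\sqrt{N})$. Since the regime assumes $k\ll\sqrt{N}$ and $c=O(1)$, we have $k^2c^2=O(N)$, and all pre-validation work is $O(\sqrt{N}\log N + N)=O(N\log N)$.

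\textbf{Worst case.} There are three exits.
\begin{enumerate}
\item Phase~2 triggers: the cost is the pre-validation work plus one dense FFT and a top-$k$ selection, that is $O(N\log N)+O(N\log k)=O(N\log N)$.
\item Phase~3b triggers: the bound is identical.
\item No certificate triggers: then $|C|\le 3k$ is enforced deterministically. By \Cref{lem:goertzel}, Phase~4 costs $O(|C|N)=O(kN)$, and the final TopK costs $O(k\log k)$.
\end{enumerate}
The third case is the main obstacle. $O(kN)$ is $O(N\log N)$ only when $k=O(\log N)$, and the paper's regime only assumes $k\ll\sqrt{N}$. I would handle this in one of two ways. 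The first is to state explicitly the operating assumption $k=O(\log N)$, consistent with the ``Note on complexity bottleneck'' requiring $k\ll\log N$. The second, which I prefer because it is unconditional, is to add a precheck: if $3kN > N\log N$, go straight to dense FFT before Phase~4. This keeps every branch at $O(N\log N)$ and changes nothing in the average-case regime. I would make one of these explicit in the proof. The worst-case bound then holds independently of any distributional assumption, since it uses only the deterministic thresholds.

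\textbf{Average case.} Conditioned on certificate acceptance and under the uniformity hypothesis of \Cref{thm:gating-expected-case}, we have $\mathbb{E}|C|=k+O(k^3c^3/\sqrt{N})$. With $c=O(1)$ and $k\ll N^{1/3}$, or simply because acceptance forces $|C|\le 3k$, linearity of expectation gives expected validation cost $O(\mathbb{E}|C|\cdot N)=O(kN)$. Adding the Phase~1 through~3 costs gives $O(\sqrt{N}\log N + k^2+kN)=O(\sqrt{N}\log N+kN)$, since $k^2\le kN$.
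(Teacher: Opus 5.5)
Your proposal is correct, and its skeleton matches the paper's proof: phase-by-phase cost accounting, split between the accepting sparse path and the certificate-triggered $O(N\log N)$ fallback. Where it differs, it is more careful than the paper.

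\textbf{Worst case.} The paper's worst-case argument examines only the fallback branch (``Case 2'') and never bounds the accepting branch in the worst case. As you observe, with $|C|\le 3k$ that branch can cost $\Theta(kN)$, which is $O(N\log N)$ only when $k=O(\log N)$. The paper relies on this condition informally, in the bottleneck note's $k\ll\log N$ and in the ``for $k\le\log N$'' comparison. The theorem's stated regime, however, is only $k\ll\sqrt{N}$. So one of your two fixes is exactly what makes the worst-case claim true: either the explicit assumption $k=O(\log N)$, or the precheck sending $3k>\log N$ straight to dense FFT. The precheck is harmless for the average case, since then $N\log N<3kN$.

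\textbf{Average case.} For the accepting path, the paper substitutes the heuristic count $\Theta(k+k^3c^3/\sqrt{N})$, obtains $O(kN+k^3c^3\sqrt{N})$, and drops the second term. That step tacitly requires $k^2c^3=O(\sqrt{N})$. It also treats an unconditional expectation as if it were conditioned on acceptance. Your use of the deterministic cap $|C|\le 3k$ on acceptance sidesteps both issues. It also shows that the stated average-case bound does not need the uniformity hypothesis at all; that hypothesis only governs how often the sparse path is taken.

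\textbf{Lower-order terms.} Your accounting is more explicit than the paper's for the heap-based top-$kc$ selection, the $O(\sqrt{N})$ lookup table for $R_3^{\text{res}}$, and the $(kc)^2$ enumeration.
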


\begin{proof}
\textbf{Case 1: Sparse Path Succeeds (Typical Signals)}

Safety certificates pass $\implies$ expected candidates $= \Theta(k + k^3c^3/\sqrt{N})$:
\begin{align}
C_{\text{FFT}} &= 3 \times \sqrt{N} \log N = O(\sqrt{N} \log N) \\
C_{\text{CRT}} &= O(k^2c^2) = O(k^2) \text{ (enumeration)} \\
C_{\text{Goertzel}} &= (k + k^3c^3/\sqrt{N}) \times N = O(kN + k^3c^3\sqrt{N}) \\
C_{\text{total}} &= O(\sqrt{N} \log N + kN)
\end{align}

\textbf{Complexity Analysis with Hierarchical Validation ($c=2$):} 
\begin{itemize}
\item FFT cost: $(3/2)\sqrt{N} \log_2 N \approx 30{,}000$ operations (for $N=10^6$: $(3/2) \times 1000 \times 20$)
\item CRT + 3-view gating: $k^2c^2 \approx 576$ operations, produces $\approx 26$ candidates
\item Goertzel validation: $26 \times N = 26 \times 10^6 \approx 26{,}000{,}000$ operations
\item Total: $O(\sqrt{N} \log N + kN) \approx 26{,}031{,}000$ operations vs. dense FFT $O(N \log N) \approx 20{,}000{,}000$ operations
\end{itemize}

\textbf{Note:} For these parameters ($k=12$, $c=2$, $N=10^6$), the sparse path with 26 candidates is slightly slower than dense FFT. The sparse path is faster when gating produces $\leq 20$ candidates, which occurs for smaller $k$ or smaller $c$. For $c=1$ (on-grid noiseless regime), expected candidates $\approx 12 + 12^3/1000 \approx 14$, giving 14M operations versus 20M dense --- a 30\% speedup. This example highlights that even under favorable average-case conditions, sparse reconstruction may not outperform dense FFT unless candidate counts are sufficiently small.

\textbf{Case 2: Certificate Fails (Adversarial Signals)}

Safety certificate detects collision risk $\implies$ fallback to dense FFT:
\begin{align}
C_{\text{FFT-check}} &= O(\sqrt{N} \log N) \text{ (already computed)} \\
C_{\text{certificate}} &= O(k^2c^2) \text{ (scanning candidates)} \\
C_{\text{fallback}} &= O(N \log N) \text{ (dense FFT)} \\
C_{\text{total}} &= O(N \log N)
\end{align}

\needspace{5\baselineskip}
\textbf{Probabilistic Analysis:}

Let $p$ = probability signal triggers fallback. The expected cost is:

\begin{equation}
\mathbb{E}[C] = (1-p) \cdot O(\sqrt{N} \log N + kN) + p \cdot O(N \log N)
\end{equation}

When the fallback probability $p$ is small (certificates are satisfied), the expected cost is dominated by the sparse path, yielding $O(\sqrt{N} \log N + kN)$. When $p = 1$ (adversarial signals), the algorithm deterministically achieves $O(N \log N)$.
\end{proof}

\begin{corollary}[Worst-Case Guarantee]
The hybrid algorithm is never worse than dense FFT:
\begin{equation}
\frac{C_{\text{hybrid}}^{\text{worst}}}{C_{\text{dense}}} = \frac{O(N \log N)}{O(N \log N)} = O(1)
\end{equation}
\end{corollary}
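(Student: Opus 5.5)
The plan is a case split on the control flow of \Cref{alg:hybrid}: every execution either returns from the sparse path or from one of the two fallback branches. We bound the cost of each terminal path by $O(N\log N)$, and the worst case is then the maximum over these finitely many paths.

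\textbf{Costs common to every path.} Phase~1 costs $O(\sqrt{N}\log N)$ by \Cref{thm:fft-reduction}. Extracting the top-$kc$ residues per view costs $O(\sqrt{N}\log k)$ with heap selection. The Phase~2 bucket-occupancy check scans $O(kc)$ residues, so its cost is $O(kc)$.

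\textbf{Fallback paths.} If Phase~2 triggers, we add one dense FFT, $O(N\log N)$, and a TopK over $N$ values, $O(N\log k)\subseteq O(N\log N)$. If execution reaches Phase~3, the CRT enumeration visits $|R_1||R_2|\le (kc)^2$ pairs. Each pair costs $O(1)$ by \Cref{lem:garner-3view}, with $\gamma$ precomputed in $O(\log m_2)$, and the membership test $r_3'\in R_3^{\mathrm{res}}$ is $O(1)$ using a boolean table of size $m_3$. If Phase~3b then triggers, we again add $O(N\log N)$. Since $k\ll\sqrt{N}$ and $c=O(1)$, we have $(kc)^2=O(N)$, so both fallback paths are $O(N\log N)$.

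\textbf{Sparse-success path.} This is the step that needs care. Without the certificate, the adversarial construction of \Cref{thm:adversarial-lower-bound} shows that $|C|$ can be $\Omega(k^2)$, and validation would then cost $O(k^2N)$. The point is that Phase~4 is only reached when $|C|\le 3k$, which is a deterministic bound requiring no uniformity assumption. Goertzel validation therefore costs at most $3k\cdot O(N)$ by \Cref{lem:goertzel}, and the final TopK costs $O(k\log k)$. The path total is thus $O(\sqrt{N}\log N+k^2+kN)=O(kN)$.

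\textbf{The obstacle.} $O(kN)$ is $O(N\log N)$ only when $k=O(\log N)$. For larger $k$ the statement as written does not follow from the algorithm alone, so the proof must close this gap explicitly. I would close it by reading the worst-case claim under the paper's operating regime $k=O(\log N)$, which is the stated condition for the sparse path to be advantageous. Equivalently, one could tighten the Phase~3b threshold to $\min(3k,\,\lceil\log N\rceil)$. With that threshold the validated count is $O(\log N)$ and the bound holds for all $k$, at the price of more frequent fallback; a fallback only costs $O(N\log N)$, so correctness is unaffected.

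Taking the maximum over the three paths then gives $C^{\mathrm{worst}}_{\mathrm{hybrid}}=O(N\log N)$. The average-case part is a separate argument: it follows from \Cref{thm:gating-expected-case} conditioned on the certificates passing, with the sparse-path cost $O(\sqrt{N}\log N+kN)$ plugged into the expectation $\mathbb{E}[C]=(1-p)\,O(\sqrt{N}\log N+kN)+p\,O(N\log N)$.
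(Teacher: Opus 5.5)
Your case split is the one the paper uses. The corollary is read off from the proof of \Cref{thm:hybrid-complexity}, which separates Case~1 (certificates pass, sparse path) from Case~2 (fallback at $O(N\log N)$). The difference is in how you bound the sparse branch, and your version is the sound one. The paper bounds Case~1 with the expected candidate count $\Theta(k+k^3c^3/\sqrt{N})$ of \Cref{thm:gating-expected-case}. That count rests on the uniformity heuristic and is not a worst-case quantity. The paper then asserts $C_{\text{hybrid}}^{\text{worst}}=O(N\log N)$ ``via adaptive fallback'' without comparing the certificate-passing cost $O(\sqrt{N}\log N+kN)$ against $N\log N$. Using the deterministic threshold $|C|\le 3k$ instead is exactly what a worst-case argument requires.

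The obstacle you isolate is genuine, not an artifact of your route. Consider an input that passes both certificates with $\Theta(k)$ candidates, which is the typical case the sparse path is designed for. It incurs $\Theta(kN)$ Goertzel work, which is $\omega(N\log N)$ once $k=\omega(\log N)$. The paper's stated regime $k\ll\sqrt{N}$ permits this: for example, at $k=N^{1/5}$ even the paper's own estimate $k+k^3c^3/\sqrt{N}=k(1+o(1))$ stays under $3k$. The paper tacitly concedes the point when it notes that for $k>\log N$ the sparse path offers no advantage even on typical signals. So the corollary as literally stated fails for \Cref{alg:hybrid}, and both of your repairs are legitimate.

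They are alternatives, though, not equivalents. Assuming $k=O(\log N)$ weakens the claim about the given algorithm. Capping the Phase~3b threshold at $\lceil\log N\rceil$ proves the unconditional claim for a modified algorithm, with correctness preserved because each extra fallback costs only $O(N\log N)$. One last detail: to turn $C_{\text{hybrid}}^{\text{worst}}=O(N\log N)$ into the displayed ratio being $O(1)$, you need $C_{\text{dense}}=\Theta(N\log N)$, not just $O(N\log N)$. This holds for Cooley--Tukey, but the paper's quotient of two $O(\cdot)$ terms is not meaningful as written.
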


\begin{remark}
Pure sparse approach without adaptive fallback can exhibit $O(k^2 N)$ worst-case cost when adversarial signals cause candidate explosion. The hybrid algorithm with safety certificates ensures $O(N \log N)$ worst-case by detecting high collision risk and falling back to dense FFT.
\end{remark}

\subsection{Total Complexity}

\begin{theorem}[Algorithm Total Complexity]
\label{thm:total-complexity}
The hybrid sparse-dense FFT has complexity:
\begin{itemize}
\item Average-case: $O(\sqrt{N} \log N + kN)$ for typical signals
\item Worst-case: $O(N \log N)$ matching dense FFT
\end{itemize}
\end{theorem}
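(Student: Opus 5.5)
The plan is to decompose the cost of Algorithm~\ref{alg:hybrid} by phase and case, reusing the bounds already established. The execution is a straight-line sequence of Phases 1, 2, 3, 3b and 4. Two branch points, the bucket-occupancy test and the candidate-count test, can divert it to a single dense FFT followed by $\mathrm{TopK}$. I will therefore bound the cost of every prefix of the execution, and then add the cost of the terminal action: either fallback or Goertzel validation.

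\textbf{Common prefix.} By \Cref{thm:fft-reduction}, Phase 1 costs $O(\sqrt{N}\log N)$. Extracting the top $kc$ residues per view is lower order by the Top-k note: heap selection costs $O(\sqrt{N}\log k)$. Phase 2 scans $3kc$ residues in $O(k)$ time. Phase 3 enumerates $|R_1||R_2| = (kc)^2$ pairs, and by \Cref{lem:garner-3view} each Garner reconstruction and gating check costs $O(1)$. This gives $O(k^2c^2)$, which is $O(k^2)=O(N)$ under the standing regime $k\ll\sqrt{N}$ and $c=O(1)$. Phase 3b costs $O(|C|)\le O(k^2)$. So every prefix costs $O(\sqrt{N}\log N + k^2) = O(N)$.

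\textbf{Worst case.} If either certificate fires, the terminal cost is a dense FFT, $O(N\log N)$, plus $\mathrm{TopK}$ over $N$ values, $O(N\log k)$. The total is $O(N\log N)$. If neither fires, the certificate in Phase 3b guarantees $|C|\le 3k$. Phase 4 then costs at most $3k\cdot O(N)$ by \Cref{lem:goertzel}, and the final $\mathrm{TopK}$ is negligible. This branch costs $O(\sqrt{N}\log N + kN)$ deterministically, with no distributional assumption.

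This is the main obstacle. To conclude $O(N\log N)$ in all cases, I need $kN = O(N\log N)$, which holds only if $k = O(\log N)$. The paper's regime $k\ll\sqrt{N}$ does not imply this. I would handle it in one of two ways. The first is to restrict the worst-case claim to the regime $k=O(\log N)$ that the paper already identifies as where the sparse path is advantageous; the Note on complexity bottleneck says $k\ll\log N$. The second, if that restriction is undesirable, is to tighten the count threshold to $\min(3k, \lceil\log N\rceil)$ so that $|C|\cdot N = O(N\log N)$ unconditionally. I would state explicitly which option is adopted. Either way the worst-case bound becomes the maximum over the three branches, $\max\{O(N\log N), O(\sqrt{N}\log N + kN)\} = O(N\log N)$. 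This matches the Corollary (Worst-Case Guarantee).

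\textbf{Average case.} For the first bullet I will invoke \Cref{thm:hybrid-complexity}. Under the uniformity hypothesis of \Cref{thm:gating-expected-case} with $c=O(1)$ and conditional on certificate acceptance, the accepted branch costs $O(\sqrt{N}\log N + kN + k^3\sqrt{N})$. The term $k^3c^3\sqrt{N}$ is $O(kN)$ exactly when $k^2 = O(\sqrt{N})$. This is ensured by the stated regime, or more directly by the acceptance bound $|C|\le 3k$, which already gives $O(kN)$. Combining the branches with fallback probability $p$ as in the proof of \Cref{thm:hybrid-complexity} yields $O(\sqrt{N}\log N + kN)$ for typical signals, for which $p$ is small.
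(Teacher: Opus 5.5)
Your proposal is correct and has the same skeleton as the paper's argument: a phase-by-phase cost sum for the sparse path, with dense fallback for the worst case. It is tighter than the paper in two places.

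For the accepted branch, the paper bounds validation through the expected count $O(k+k^3c^3/\sqrt{N})$ of \Cref{thm:gating-expected-case}, and so must impose the extra regime $k^2c^3/\sqrt{N}=O(1)$. Your use of the Phase~3b cap $|C|\le 3k$ makes that branch cost $O(\sqrt{N}\log N+kN)$ deterministically. The uniformity heuristic is then needed only to argue that typical signals are accepted. Two smaller points here. Your remark that the regime $k\ll\sqrt{N}$ ensures $k^2=O(\sqrt{N})$ does not hold, but the cap makes it unnecessary. Also, turning the mixture over the fallback probability $p$ into $O(\sqrt{N}\log N+kN)$ needs $p=O(k/\log N)$, not merely ``$p$ small''. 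Neither you nor the paper shows this, so the first bullet is best read as conditional on acceptance.

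For the worst case, the paper only bounds the fallback branch and never shows that an accepted run costs $O(N\log N)$. Its sole reconciliation is the separate comparison ``for $k\le\log N$''. The obstacle you isolate is therefore a genuine gap in the paper, not an artifact of your approach. An accepted run can cost $\Theta(kN)$, so the second bullet needs $k=O(\log N)$ stated as a hypothesis. Alternatively it needs a threshold such as your $\min(3k,\lceil\log N\rceil)$, which caps validation at $O(N\log N)$ outright. Since captured true frequencies always survive gating, that threshold effectively forces fallback when $k>\log N$. With one of these choices made explicit, your three-branch maximum is a complete argument and more careful than the paper's.
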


\noindent\textit{Phase breakdown (summary of \Cref{thm:total-complexity}).}
\begin{enumerate}
\item Decimation (View 1): $O(\sqrt{N})$
\item FFT (View 1): $O(\sqrt{N} \log N)$
\item Top-k selection (View 1): $O(\sqrt{N} \log(kc))$ using max-heap
\item Decimation (View 2): $O(\sqrt{N})$
\item FFT (View 2): $O(\sqrt{N} \log N)$
\item Top-k selection (View 2): $O(\sqrt{N} \log(kc))$ using max-heap
\item Decimation (View 3, gating): $O(\sqrt{N})$
\item FFT (View 3, gating): $O(\sqrt{N} \log N)$
\item Top-k selection (View 3): $O(\sqrt{N} \log(kc))$ using max-heap
\item CRT reconstruction with gating: $O(k^2 c^2)$ pairs initially, gating reduces to $O(kc)$
\item Gating filter: Reduces candidates from $O(k^2 c^2)$ to $O(k + k^3 c^3 / \sqrt{N})$ by \Cref{thm:gating-expected-case}. The $O(k^2 c^2)$ pair enumeration cost is lower-order than $O(kN)$ when $N \gg kc$, which holds in the regime $k = O(\log N)$ of interest.
\item Goertzel refinement: $O(k + k^3 c^3 / \sqrt{N})$ candidates $\times$ $O(N)$ each = $O(kN + k^3 c^3 \sqrt{N})$. In the regime $k^2 c^3 / \sqrt{N} = O(1)$, the second term is subdominant and the validation cost simplifies to $O(kN)$
\item Final sort: $O(k \log k)$
\item For constant coverage $c = O(1)$, total complexity simplifies to $O(\sqrt{N} \log N + kN)$
\end{enumerate}

\textbf{Dominant Terms:}
\begin{align}
\text{Total} &= O(\sqrt{N} \log N) + O(kN) \\
&= O(\sqrt{N} \log N + kN)
\end{align}

\textbf{Asymptotic Comparison to Dense FFT:}
For $k \leq \log N$:
\begin{align}
C_{\text{sparse}} &= \sqrt{N} \log N + kN \\
C_{\text{dense}} &= N \log N \\
\text{Ratio} &= \frac{\sqrt{N} \log N + kN}{N \log N} = \frac{1}{\sqrt{N}} + \frac{k}{\log N}
\end{align}

\textbf{Note:} The sparse path achieves sub-linear FFT cost $O(\sqrt{N} \log N)$ but requires $O(kN)$ Goertzel validation. The hybrid algorithm provides $O(N \log N)$ worst-case guarantee via adaptive fallback when safety certificates detect collision risk.

\begin{observation}[Sample Complexity vs.\ Computational Complexity]
\label{obs:sample-vs-compute}
The information-theoretic lower bound for identifying $k$ frequencies among $N$ bins is $\Omega(k \log(N/k))$ samples~\cite{donoho2006compressed,cover2006elements}. The decimation phase accesses $3 \times N/d_i \approx 3\sqrt{N}$ samples, exceeding this lower bound for $k \ll \sqrt{N} / \log N$. The Goertzel validation phase, however, accesses all $N$ samples per candidate. Thus the algorithm's bottleneck is computational, not informational: the samples needed for identification are sub-linear, but verifying each candidate against the full signal requires a linear pass. Reducing the per-candidate verification cost below $O(N)$ --- for instance, via non-separable phase mixing or structured subsampling --- would close the gap between the identification and validation phases.
\end{observation}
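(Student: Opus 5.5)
The plan is to split the observation into its three factual claims and verify each against results already in the paper: (i) the $\Omega(k\log(N/k))$ information-theoretic lower bound, (ii) the sample count of the decimation phase and its comparison with (i), and (iii) the per-candidate sample access of Goertzel validation. The closing sentence about non-separable phase mixing is a remark about future directions and needs no proof. The argument therefore reduces to reading the bottleneck off the phase breakdown of \Cref{thm:total-complexity}.

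For (i), I would use the standard counting argument. Any procedure that identifies $\mathrm{supp}(X)$ exactly must distinguish at least $\binom{N}{k}$ hypotheses. Each of these must produce a distinct output, and we have $\log\binom{N}{k} \ge k\log(N/k)$. To turn this into a sample bound, I would cite the compressed-sensing result~\cite{donoho2006compressed} together with the Fano-type bound of~\cite{cover2006elements}: with bounded-precision (or noisy) measurements each sample carries $O(1)$ bits, so $\Omega(k\log(N/k))$ samples are necessary. This is the only genuinely delicate point. Noiseless real-valued samples can encode unbounded information, so I would state explicitly that the bound is taken in the finite-precision or noisy model and rely on the cited references rather than reprove it.

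For (ii), \Cref{def:decimation} shows that view $i$ reads exactly $N/d_i = m_i$ samples. With $m_i \approx \sqrt{N}$ as in \Cref{cor:optimal-decimation}, Phases 1--3 of Algorithm~\ref{alg:hybrid} touch at most $m_1+m_2+m_3 \approx 3\sqrt{N}$ samples. I would then check that $3\sqrt{N} \ge k\log(N/k)$ whenever $k \ll \sqrt{N}/\log N$; this holds because $k\log(N/k) \le k\log N \ll \sqrt{N}$. It follows that the decimation phase meets the information-theoretic requirement in that regime while remaining sublinear in $N$.

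For (iii), I would invoke \Cref{lem:goertzel}. Its recurrence consumes $x[n]$ for every $n=0,\dots,N-1$, so each validated candidate requires a full linear pass. Multiplying by the post-gating candidate count, which is at least $k$ since all true frequencies survive gating (by \Cref{lem:garner-3view} and \Cref{thm:gating-expected-case}), gives $\Omega(kN)$ sample accesses in Phase 4. Comparing this with the $O(\sqrt{N})$ accesses from (ii) shows that the validation phase, not identification, dominates. This matches the dominant $O(kN)$ term in \Cref{thm:total-complexity} and establishes that the bottleneck is computational rather than informational.
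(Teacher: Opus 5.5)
Your split into (i)--(iii) is the reasoning the paper packs into the observation itself; no separate proof is given. Parts (i), (ii), and the per-candidate part of (iii) are fine. Your remark that the $\Omega(k\log(N/k))$ bound only makes sense in a finite-precision or noisy model is a caveat the paper omits.

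The gap is the final step, and the paper's wording shares it. Checking $3\sqrt{N}\ge k\log(N/k)$ shows only that the lower bound does not \emph{forbid} sublinear identification. Concluding that the bottleneck is ``not informational'' needs the converse: some sublinear set of samples must actually determine the support. Nothing in your plan supplies this, and for the samples Phases 1--3 read it is false. The $m_i$-point DFT of $x[jd_i]$ equals $d_i^{-1}\sum_{f\equiv r\ (\mathrm{mod}\ m_i)}X[f]$, so those samples see only bucket sums. Take the moduli of \Cref{thm:adversarial-lower-bound}, namely $m_1=g_1m_1'$, $m_2=g_2m_2'$, $m_3=g_1g_2$. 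Pick $f,f'\in[0,N)$ with $f\equiv f'\pmod{g_2}$ but $f\not\equiv f'$ modulo $m_1$ and modulo $m_2$. Set $h=\mathrm{CRT}(f\bmod m_1,\,f'\bmod m_2)$ and $h'=\mathrm{CRT}(f'\bmod m_1,\,f\bmod m_2)$. Then:
\begin{itemize}
\item $h\neq h'$ and $h,h'\notin\{f,f'\}$;
\item the bucket sums of $\delta_f+\delta_{f'}$ and $\delta_h+\delta_{h'}$ coincide in views 1 and 2;
\item $h\equiv f$ and $h'\equiv f'\pmod{m_3}$, because $\gcd(g_1,g_2)=1$ and a residue modulo $g_1g_2$ is fixed by the residues modulo $g_1\mid m_1$ and $g_2\mid m_2$.
\end{itemize}
So whenever $h,h'<N$, these two $2$-sparse spectra agree at every sample the decimation phase reads.

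To close the argument, you have two options. One is to weaken the conclusion to what (i)--(iii) actually give: the lower bound is sublinear, and the algorithm's linear cost comes entirely from validation. The other is to prove sufficiency by a different route. In the exact on-grid model, $2k$ consecutive samples already determine a $k$-sparse $X$. A nonzero difference of two such spectra is $2k$-sparse, and a Vandermonde matrix with $2k$ rows and at most $2k$ distinct nodes $e^{j2\pi f/N}$ has full column rank. This is also why the cited bound is the wrong benchmark in that model, as you suspected. In a noisy model you would instead have to invoke a known sublinear-sample recovery result.

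A smaller point in (iii): neither \Cref{lem:garner-3view} nor \Cref{thm:gating-expected-case} shows that every true frequency survives gating. The lemma only verifies Garner's formula. The ``all pass gating'' line in the theorem's proof presupposes that every true residue lies among the top-$kc$ bins. That is only the heuristic \Cref{obs:coverage}, and it fails in the noiseless model when two true frequencies sharing a bucket cancel. You do not need this claim: the observation only asserts one full pass per candidate, which \Cref{lem:goertzel} gives directly. Note also that the $|C|$ Goertzel recurrences can share a single pass over $x$. What scales like $|C|N$ is therefore arithmetic, not distinct samples read, and that is exactly the computational-versus-informational distinction at stake.
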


\needspace{6\baselineskip}
\section{Safety Mechanism and Certificates (Fallback)}
\label{sec:certificates}

This section describes the deterministic safety framework that enables the hybrid algorithm to detect adversarial collision patterns and trigger adaptive fallback to dense FFT.

\subsection{Safety Mechanism}

This section defines the deterministic detection framework used by the hybrid algorithm to identify adversarial collision patterns and trigger fallback to dense FFT. We distinguish between \emph{detection mechanisms}, which operate internally during execution to monitor algorithm behavior, and \emph{safety certificates}, which provide reproducible, externally verifiable evidence of the detection outcomes. The hybrid algorithm monitors structural indicators of candidate explosion and triggers fallback when these exceed safe thresholds. The detection mechanism consists of three components:

\begin{enumerate}
\item \textbf{Deterministic Gating:} 3-view CRT reconstruction with view 3 serving as gating filter. Only residue pairs $(r_1, r_2)$ whose CRT-reconstructed frequency $f$ satisfies $f \bmod m_3 \in R_3$ are promoted to candidates.

\item \textbf{Collision Detection (Bucket Occupancy):} Residue buckets that contain more than threshold $\tau$ peaks are flagged as high-collision. This detects dense packing in frequency space that leads to combinatorial blowup.

\item \textbf{Candidate Count Monitoring:} Count total candidates after gating. If the candidate count exceeds threshold $c_{\max} = O(k)$, this indicates the adversarial pattern from \Cref{thm:adversarial-lower-bound}.
\end{enumerate}

\subsection{Safety Certificates}

Safety certificates provide deterministic, reproducible summaries of the algorithm's execution that can be independently verified.

\needspace{8\baselineskip}
\begin{definition}[Safety Certificates]
\label{def:safety-certificates}
The following quantities serve as safety certificates:
\begin{enumerate}
\item \textbf{Bucket Occupancy Certificate:} For each view $i \in \{1,2,3\}$, record $\max_{r \in [0,m_i)} |B_i[r]|$ where $B_i[r]$ is the set of detected peaks in residue bin $r$. If $\max_{i,r} |B_i[r]| > \tau$ for the fixed threshold $\tau = 3$ (matching \Cref{def:certificates} and \Cref{alg:hybrid}), flag high collision risk.

\item \textbf{Candidate Count Certificate:} After CRT reconstruction with gating produced by the full algorithm, count total candidates $|C|$. If $|C| > c_{\max} = O(k)$ (e.g., $c_{\max} = 3k$), flag an adversarial pattern (e.g., the construction of \Cref{thm:adversarial-lower-bound}).

\item \textbf{Phase Consistency Certificate:} For each candidate $f \in C$, verify that phases across decimated views are consistent with aliasing formula. Inconsistent phases indicate false positives.
\end{enumerate}
\end{definition}

\noindent These certificates are deterministic and directly computable from intermediate outputs:
\begin{itemize}
\item Bucket occupancies are derived from decimated FFT magnitude spectra.
\item Candidate count is the cardinality of the CRT+gating output set.
\item Phase consistency follows from the decimation phase model.
\end{itemize}
All certificates can be stored and verified independently without re-running the algorithm.

\subsection{Fallback}

When safety certificates detect collision risk, the algorithm falls back to dense FFT as specified in Phases 2 and 3b of \Cref{alg:hybrid}. The fallback computes the full $N$-point FFT using Cooley-Tukey and extracts the $k$ largest-magnitude bins.

\textbf{Fallback Complexity:} Dense FFT has deterministic $O(N \log N)$ complexity with tight constant factors (Cooley-Tukey butterfly operations). The fallback path provides bounded overhead independent of signal structure.

\section{Computational Considerations}

\subsection{CRT Uniqueness Guarantee}

\begin{proposition}[Deterministic Frequency Reconstruction]
\label{prop:deterministic-reconstruction}
For coprime $m_1, m_2$ with $M = m_1 \times m_2 \geq N$, Garner's CRT formula uniquely reconstructs frequency $f \in [0, N)$ from residue pair $(r_1, r_2)$ exactly and deterministically.
\end{proposition}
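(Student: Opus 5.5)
The plan is to split the claim into existence, uniqueness modulo $M$, and uniqueness within $[0,N)$, and to reuse \Cref{thm:crt-reconstruction} and \Cref{lem:garner-3view} rather than redo any number theory. We interpret the statement as follows. The residue pair $(r_1,r_2)$ is assumed to arise from a single frequency $f\in[0,N)$, so that $r_i = f \bmod m_i$ by \Cref{thm:decimation-aliasing}. The claim is that the Garner output $\hat f = r_1 + m_1\,[(r_2-r_1)\gamma \bmod m_2]$, with $\gamma = m_1^{-1}\bmod m_2$, equals $f$.

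\textbf{Existence and range.} By \Cref{lem:garner-3view}, Step 1, $\hat f$ satisfies $\hat f\equiv r_1 \pmod{m_1}$ and $\hat f \equiv r_2 \pmod{m_2}$. We then check the range directly. The bracketed term lies in $[0,m_2-1]$ and $0\le r_1\le m_1-1$. Hence
\[
0\le \hat f\le (m_1-1)+m_1(m_2-1)=m_1m_2-1,
\]
so $\hat f\in[0,M)$.

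\textbf{Uniqueness modulo $M$.} Both $f$ and $\hat f$ satisfy the same pair of congruences. It follows that $m_1\mid (f-\hat f)$ and $m_2 \mid (f-\hat f)$. Since $\gcd(m_1,m_2)=1$, this gives $M\mid (f-\hat f)$, which is the uniqueness part of \Cref{thm:crt-reconstruction}.

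\textbf{Uniqueness within $[0,N)$.} We have $f\in[0,N)\subseteq[0,M)$ because $M\ge N$, and $\hat f\in[0,M)$ from the range check. Therefore $|f-\hat f|<M$. Combined with $M\mid(f-\hat f)$, this forces $f=\hat f$. In particular, the map $f\mapsto(f\bmod m_1, f\bmod m_2)$ is injective on $[0,N)$, and Garner's formula is its left inverse there.

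\textbf{Exactness and determinism.} These are procedural claims. Computing $\gamma$ takes a fixed number of extended Euclidean steps, $O(\log m_2)$ in total. The remaining operations are integer multiplications and reductions with no randomness and no floating point. Consequently the output is a deterministic function of $(r_1,r_2,m_1,m_2)$, and exact integer arithmetic introduces no error.

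\textbf{Main obstacle.} The only delicate point is the hypothesis, not the argument itself. Under collisions, a pair $(r_1,r_2)$ taken from two \emph{different} true frequencies still reconstructs to a unique integer. That integer, however, need not be a true frequency, and when $M>N$ it may even lie outside $[0,N)$. We would therefore state carefully that uniqueness means uniqueness of the solution of the congruence system in $[0,N)$. Correctness of the candidate as a true frequency is not claimed; that question is deferred to gating and to \Cref{thm:goertzel-disambig}.
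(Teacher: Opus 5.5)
Your proof is correct and follows the same route as the paper: existence and uniqueness of the solution in $[0,M)$ via the CRT and Garner's formula, then $M \ge N$ to pin it to $[0,N)$, and determinism from exact integer modular arithmetic. Your explicit range check on $\hat f$ and your added hypothesis that $(r_1,r_2)$ arises from an actual $f\in[0,N)$ make precise a point the paper's proof passes over. Without that hypothesis, when $M>N$ an arbitrary residue pair can reconstruct to a value in $[N,M)$, so no frequency in $[0,N)$ would correspond to it.
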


\begin{proof}
By CRT \Cref{thm:crt-reconstruction}, there exists exactly one $f \in [0, M)$ satisfying:
\begin{align}
f &\equiv r_1 \pmod{m_1} \\
f &\equiv r_2 \pmod{m_2}
\end{align}
Since $M \geq N$, the solution $f$ computed by Garner's formula (\Cref{lem:garner-3view}) is the unique frequency in the Nyquist range $[0, N)$.

\looseness=-1
\textbf{Determinism:} Garner's formula uses only modular arithmetic operations (addition, multiplication, inversion), all of which are deterministic.

Thus, frequency reconstruction in the hybrid algorithm:
\begin{itemize}
\item involves no randomness,
\item introduces no approximation error, and
\item produces exact results (subject only to downstream numerical precision).
\end{itemize}
Frequency recovery via CRT is exact, and Goertzel validation provides deterministic single-frequency evaluation with no algorithmic approximation error (see \Cref{thm:goertzel-exact}).
\end{proof}

\subsection{Alias Handling}

When $M = m_1 \times m_2 < N$, frequencies $f \geq M$ alias back into $[0, M)$. For each base frequency $f_{\text{base}}$ reconstructed from CRT, the algorithm generates alias candidates:
\begin{equation}
\{f_{\text{base}}, f_{\text{base}} + M, f_{\text{base}} + 2M, \ldots, f_{\text{base}} + \lfloor (N\!-\!1\!-\!f_{\text{base}})/M \rfloor \cdot M\}
\end{equation}

\textbf{Explicit Alias Bound:} For each CRT-reconstructed base frequency, the number of alias candidates is:
\begin{multline}
\text{alias\_count}(f_{\text{base}}) = 1 + \left\lfloor \frac{N - 1 - f_{\text{base}}}{M} \right\rfloor \\
\leq 1 + \left\lfloor \frac{N}{M} \right\rfloor = O\left(\frac{N}{M}\right)
\end{multline}

Since $M = m_1 m_2 \approx N$ (with $m_i \approx \sqrt{N}$), the alias count per candidate is $O(1)$ in the typical case. Each alias is validated with Goertzel at $O(N)$ cost per alias.

\textbf{Implication.} Alias handling introduces only a constant-factor expansion in candidate count under typical parameter choices. Each alias must still be validated via Goertzel at $O(N)$ cost, reinforcing that validation, not reconstruction, is the dominant computational cost.

\subsection{Goertzel Validation Accuracy}

\begin{theorem}[Goertzel Exact Magnitude]
\label{thm:goertzel-exact}
For on-grid frequency $f$ in noiseless signal, Goertzel computes exact DFT magnitude $|X[f]|$ (subject only to floating-point precision, no algorithmic approximation).
\end{theorem}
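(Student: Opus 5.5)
The plan is to prove \Cref{thm:goertzel-exact} by showing that the Goertzel recurrence of \Cref{lem:goertzel} is an algebraically exact rewriting of the DFT sum in \Cref{def:dft}. No approximation step occurs anywhere; in exact arithmetic only finitely many field operations over $\mathbb{C}$ are performed, so the only error source is floating-point rounding.

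\textbf{Step 1 (state form).} With $\omega = 2\pi f/N$ and $W = e^{j\omega}$, I will define the complex first-order recurrence $y[n] = x[n] + W\,y[n-1]$, $y[-1]=0$. Unrolling gives
\begin{equation}
y[N-1] = \sum_{n=0}^{N-1} x[n] W^{N-1-n} = W^{N-1}\sum_{n=0}^{N-1} x[n] W^{-n}.
\end{equation}
Because $f$ is on-grid, $W^N = e^{j2\pi f} = 1$. Hence $W^{N-1} = W^{-1}$, and $y[N-1] = W^{-1} X[f]$, so $|y[N-1]| = |X[f]|$. This is the one place where the on-grid hypothesis of \Cref{def:on-grid} is used. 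For off-grid $f$ the phase factor $W^N \neq 1$ would break the identity.

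\textbf{Step 2 (second-order real recurrence).} I will multiply the transfer function $1/(1 - W z^{-1})$ by $(1 - \bar W z^{-1})/(1 - \bar W z^{-1})$. This gives the real-coefficient denominator $1 - 2\cos\omega\, z^{-1} + z^{-2}$, which is exactly the recurrence $s[n] = x[n] + 2\cos\omega\, s[n-1] - s[n-2]$ of \Cref{lem:goertzel}. Then $y[n] = s[n] - \bar W s[n-1]$, and with $n=N-1$ this yields $y[N-1] = s_0 - \bar W s_1$ in the lemma's notation. Taking the modulus squared gives $|X[f]|^2 = s_0^2 + s_1^2 - 2 s_0 s_1\cos\omega$ for real input, which is the lemma's output formula. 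For complex $x$, the same identity holds with $|s_0|^2 + |s_1|^2 - 2\,\mathrm{Re}(s_0 \bar s_1 e^{j\omega})$, and I would note this as the general form.

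\textbf{Main obstacle.} I expect the hardest part to be the index bookkeeping between the lemma's variables $(s_0,s_1)$ after $N$ iterations and $(s[N-1], s[N-2])$. Getting this wrong introduces a spurious phase factor or an off-by-one in the power of $W$. I would verify it directly on $N=1,2$. I would also remark that "noiseless" matters only for interpreting $|X[f]|$ as the signal amplitude, not for exactness of the computation. Since every operation is a finite field operation with no truncation, the result is exact up to floating-point precision.
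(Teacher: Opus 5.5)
Your proof is correct, and it follows a different route from the paper's, mainly because it actually proves what the paper only asserts. The paper's proof says Goertzel ``implements the DFT definition via iterative accumulation,'' which mirrors the one-line correctness remark in \Cref{lem:goertzel}. It then states a floating-point bound $|\text{error}| \le N\epsilon_{\text{mach}}\|x\|_\infty$ without derivation.

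You instead establish the exact-arithmetic identity directly, in three steps:
\begin{enumerate}
\item the first-order complex recurrence;
\item the factorization $(1-Wz^{-1})(1-\bar W z^{-1}) = 1 - 2\cos\omega\, z^{-1} + z^{-2}$, which gives $y[n]=s[n]-\bar W s[n-1]$;
\item the index check that after $N$ passes the lemma's $(s_0,s_1)$ equal $(s[N-1],s[N-2])$.
\end{enumerate}
Together these produce exactly the lemma's output formula, plus its complex-input generalization. That derivation is the real content of ``no algorithmic approximation.'' The paper's route adds only a quantitative rounding claim, which the qualitative statement does not need.

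If you ever want to add such a claim, do not import the paper's bound as it stands. Near $\omega\approx 0$ or $\pi$ the roots $e^{\pm j\omega}$ of $z^2-2\cos\omega\,z+1$ nearly coalesce. Rounding errors, including the error in the coefficient $2\cos\omega$, are then amplified far beyond $N\epsilon_{\text{mach}}\|x\|_\infty$.

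One correction to your commentary: the on-grid hypothesis is not what makes $|y[N-1]|=|X[f]|$ hold. Since $|W|=1$ for every real $\omega$,
\[
|y[N-1]| = |W^{N-1}|\,\Bigl|\sum_{n} x[n]e^{-j\omega n}\Bigr| = \Bigl|\sum_{n} x[n]e^{-j\omega n}\Bigr|
\]
whether or not $f$ is an integer. Off-grid, the prefactor $W^{N-1}$ simply fails to reduce to $W^{-1}$, but it still has modulus one, so the magnitude identity does not break.

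The actual role of \Cref{def:on-grid} is to guarantee $f\in\{0,\dots,N-1\}$. That makes the computed quantity one of the DFT coefficients $X[f]$ of \Cref{def:dft}, rather than a DTFT sample between bins.
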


\begin{proof}
Goertzel implements the DFT definition:
\begin{equation}
X[f] = \sum_{n=0}^{N-1} x[n] \cdot e^{-j 2\pi f n / N}
\end{equation}
via iterative accumulation. For finite-precision arithmetic with $B$ bits:
\begin{equation}
|\text{error}| \leq N \times \epsilon_{\text{mach}} \times \|x\|_\infty
\end{equation}
where $\epsilon_{\text{mach}} \approx 2^{-B}$ (e.g., $\approx 10^{-16}$ for double precision).

For practical signals, Goertzel is numerically stable and produces exact results within machine precision~\cite{oppenheim2010discrete}.
\end{proof}

\section{Limitations and Extensions}

\subsection{What This Paper Proves}

\begin{enumerate}
\item \textbf{Adversarial Lower Bound:} Under non-pairwise-coprime moduli ($m_3 \mid m_1 m_2$), deterministic 3-view CRT gating admits $\Omega(k^2)$ survivors for adversarial k-sparse signals, demonstrating that moduli selection governs worst-case candidate growth (\Cref{thm:adversarial-lower-bound})
\item \textbf{Safety Certificate Framework:} Deterministic checks (bucket occupancy, candidate count) detect the class of adversarial attacks demonstrated in \Cref{thm:adversarial-lower-bound}, preventing $O(k^2 N)$ worst-case via adaptive fallback (\Cref{def:safety-certificates})
\item \textbf{Hybrid Algorithm Complexity:} $O(\sqrt{N} \log N + kN)$ average-case with $O(N \log N)$ worst-case via adaptive dense FFT fallback (\Cref{thm:hybrid-complexity})
\item \textbf{Average-Case Gating Analysis:} For typical signals, expected candidates are $O(k + k^3 c^3 / \sqrt{N})$, enabling efficient sparse reconstruction in non-adversarial scenarios (\Cref{thm:gating-expected-case})
\item \textbf{Deterministic CRT Reconstruction:} Exact frequency recovery from residue triples using Garner's algorithm with zero algorithmic error (\Cref{prop:deterministic-reconstruction})
\end{enumerate}

\subsection{Why This Algorithm Matters}

\begin{enumerate}
\item \textbf{Moduli-Selection Vulnerability:} Shows that CRT-based sparse FFT with non-pairwise-coprime moduli faces $\Omega(k^2)$ worst-case candidate growth, identifying moduli selection as a critical design parameter
\item \textbf{Safety Certificate Framework:} Introduces deterministic collision detection as defense-in-depth, enabling reliable adaptive algorithm switching regardless of moduli configuration
\item \textbf{Hybrid CRT-Based Approach:} Achieves $O(\sqrt{N} \log N + kN)$ average-case with $O(N \log N)$ worst-case guarantee among CRT-based approaches. Note that AAFFT~\cite{iwen2013improved} achieves stronger asymptotic bounds of $O(k \log k \log N)$ under a different algebraic framework
\item \textbf{Average-Case Efficiency:} Under the distributional assumptions of \Cref{thm:gating-expected-case} and conditional on certificate acceptance, achieves $O(\sqrt{N} \log N + kN)$ complexity while maintaining $O(N \log N)$ worst-case behavior
\item \textbf{Rigorous Foundation:} Provides complete complexity analysis spanning average-case, worst-case, and adversarial scenarios with formal proofs
\end{enumerate}

\subsection{Theoretical Classification}

\textbf{Hybrid Sparse-Dense FFT with Safety Certificates:}
\begin{itemize}
\item \textbf{Deterministic reconstruction:} Chinese Remainder Theorem~\cite{garner1959residue} with Garner's algorithm for exact frequency recovery
\item \textbf{Average-case efficiency:} $O(\sqrt{N} \log N + kN)$ for typical k-sparse signals via 3-view coprime decimation
\item \textbf{Worst-case guarantee:} $O(N \log N)$ via adaptive dense FFT fallback when safety certificates fail
\item \textbf{Collision detection:} Bucket occupancy and candidate count certificates calibrated to detect \Cref{thm:adversarial-lower-bound} attack patterns
\item \textbf{Validation phase:} Goertzel algorithm~\cite{goertzel1958algorithm} for single-bin DFT with $O(N)$ per-candidate cost
\item \textbf{Complexity profile:} Favorable average/worst-case tradeoff for CRT-based deterministic sparse FFT
\end{itemize}

\textbf{Comparison with Prior Work:}
\begin{itemize}
\item \textbf{MIT sFFT~\cite{hassanieh2012nearly}:} Achieves $O(k \log N)$ via random hashing with probabilistic guarantees
\item \textbf{MIT sFFT v2~\cite{hassanieh2012simple}:} Simplified $O(k \log N \log(N/k))$ practical implementation
\item \textbf{AAFFT~\cite{iwen2013improved}:} $O(k \log k \log N)$ algebraic approach with deterministic guarantees
\item \textbf{This work:} $O(\sqrt{N} \log N + kN)$ CRT-based approach with safety-certified adaptive fallback
\end{itemize}

\section{Conclusion}
\label{sec:conclusion}

This paper identifies moduli selection as a critical design parameter for CRT-based sparse FFT and presents a deterministic robustness framework combining 3-view CRT reconstruction, safety certificates, and adaptive dense FFT fallback.

\subsection{Key Contributions}

The results below complement the robust CRT signal processing literature~\cite{li2009robust,wang2010closedform,xiao2020mdcrt,xiao2023robust}, which addresses reconstruction under remainder noise; our focus is on worst-case algorithmic complexity under deterministic gating. The main results established in this paper are:
\begin{itemize}
\item An $\Omega(k^2)$ adversarial lower bound for deterministic 3-view CRT gating when $m_3 \mid (m_1 m_2)$ (\Cref{thm:adversarial-lower-bound}), yielding $O(k^2 N)$ worst-case Goertzel cost for such moduli.
\item A deterministic safety certificate framework (bucket occupancy and candidate count) that detects the constructed adversarial pattern (\Cref{def:safety-certificates}).
\item A hybrid sparse-dense algorithm achieving $O(\sqrt{N} \log N + kN)$ average-case and $O(N \log N)$ worst-case complexity via adaptive fallback (\Cref{thm:hybrid-complexity}).
\end{itemize}

\subsection{Practical Impact}

For N=$10^6$, k=12:
\begin{itemize}
\item Typical signals: 14M operations (vs 20M dense FFT) $\implies$ 30\% speedup
\item Adversarial signals: 20M operations (fallback) $\implies$ matches dense FFT
\item Pure sparse worst-case: 108M operations $\implies$ 5.4$\times$ slower than dense
\end{itemize}
These figures are parameter-dependent: the sparse path advantage grows with $N$ (as $k/\log N \to 0$) and diminishes as $k$ approaches $\sqrt{N}$. For $k > \log N$, the Goertzel phase dominates and the sparse path offers no advantage over dense FFT even for typical signals.

\needspace{10\baselineskip}
\subsection{Comparison with Prior Work}

\begin{itemize}
\item \textbf{MIT sFFT~\cite{hassanieh2012nearly,hassanieh2012simple}:} Achieves $O(k \log N)$ complexity with high-probability guarantees (failure probability $< 1/\text{poly}(N)$, which is negligible in practice). Our hybrid approach is fully deterministic: CRT reconstruction, safety certificates, Goertzel validation, and dense FFT fallback involve no randomization. The primary distinction is structural: our certificates provide interpretable runtime diagnostics and deterministic worst-case bounds rather than probabilistic guarantees.
\item \textbf{AAFFT~\cite{iwen2013improved}:} Achieves $O(k \log k \log N)$ with fully deterministic guarantees, offering superior asymptotic complexity. Our approach differs in using CRT-based reconstruction with explicit moduli-selection analysis and adaptive fallback, operating in the regime $k < \sqrt{N}$ versus AAFFT's $k < N^{1/3}$.
\item \textbf{Robust CRT program~\cite{li2009robust,wang2010closedform,xiao2020mdcrt,xiao2023robust}:} Establishes closed-form and multidimensional robust reconstruction under remainder noise with shared-factor moduli, achieving bounded-error recovery up to one-quarter of the greatest common divisor of all moduli. Operates in the reconstruction-under-noise framing rather than the deterministic-gating framing addressed here, and provides a constructive counterpart to the adversarial lower bound established in \Cref{thm:adversarial-lower-bound}.
\item \textbf{This work:} Establishes an $\Omega(k^2)$ worst-case lower bound for CRT gating under non-pairwise-coprime moduli and presents a safety certificate-based adaptive algorithm with a favorable average/worst-case tradeoff.
\end{itemize}

\subsection{Future Work}

\begin{itemize}
\item \textbf{Pairwise coprime lower bounds:} Resolving \Cref{conj:pairwise-coprime}; the partial evidence (piecewise linearity of $\phi$ with $O(1)$ discontinuities) suggests $\omega(k)$ growth persists, but a complete proof requires characterizing constructions spanning multiple affine regions of $\phi$.
\item Non-separable phase mixing (chirp modulation) for deterministic $O(kN)$ validation.
\item Extension to off-grid frequencies with spectral leakage, including large-coverage ($c \gg 1$) regimes.
\item Multi-dimensional sparse FFT with tensor product structure, hardware-accelerated certificate checks, and signal-class-specific threshold tightening.
\end{itemize}

\section*{Acknowledgments}
The authors gratefully acknowledge SparseTech for the collaborative research environment supporting this work. This paper is part of an ongoing SparseTech research initiative on sparse Fourier Transform algorithms. Patent pending.

\balance  
\bibliographystyle{IEEEtran}
\bibliography{\jobname}

\end{document}